\newcommand{\eg}{{\emph{e.g. \/}}}
\newcommand{\ie}{{\emph{i.e. \/}}}
\DeclareMathOperator{\tr}{tr}
\newcommand{\N}{\ensuremath{\mathbb{N}}}
\newcommand{\C}{\ensuremath{\mathbb{C}}}
\newcommand{\kraus}[1]{\ensuremath{\mathcal{K}\left( #1 \right)}}
\newcommand{\ket}[1]{\ensuremath{|#1\rangle}}
\newcommand{\bra}[1]{\ensuremath{\langle#1|}}
\newcommand{\ketbra}[2]{\ensuremath{\ket{#1} \! \bra{#2}}}
\newcommand{\proj}[1]{\ensuremath{\ketbra{#1}{#1}}}
\newcommand{\braket}[2]{\ensuremath{\langle{#1}|{#2}\rangle}}
\newcommand{\floor}[1]{\ensuremath{\left\lfloor #1 \right\rfloor}}
\newcommand{\1}{{\rm 1\hspace{-0.9mm}l}}
\newcommand{\EE}{\mathcal{E}}
\newcommand{\XX}{\mathcal{X}}
\newcommand{\MM}{\mathcal{M}}
\newcommand{\YY}{\mathcal{Y}}
\newcommand{\II}{\mathcal{I}}
\newcommand{\DD}{\mathcal{D}}
\newcommand{\FF}{\mathcal{F}}
\newcommand{\CC}{\mathcal{C}}
\newcommand{\PP}{\mathcal{P}}
\newcommand{\RR}{\mathcal{R}}
\renewcommand{\SS}{\mathcal{S}}
\newcommand{\UU}{\mathcal{U}}
\newcommand{\LL}{\mathcal{L}}
\newcommand{\HH}{\mathcal{H}}
\newenvironment{non}[1]
{\noindent\textbf{#1}.\itshape}
\newtheorem{lemma}{Lemma}
\newtheorem{theorem}[lemma]{Theorem}
\newtheorem{corollary}[lemma]{Corollary}
\newtheorem{proposition}[lemma]{Proposition}
\let\c@algocf\c@lemma
\def\>{\rangle}
\def\<{\langle}
\begin{document}

\title{On the probabilistic quantum error correction}

\author{Ryszard Kukulski}
\email{rkukulski@iitis.pl}
\affiliation{Institute of Theoretical and Applied Informatics, Polish Academy
	of Sciences, Ba{\l}tycka 5, 44-100 Gliwice, Poland}
\author{{\L}ukasz Pawela}
\affiliation{Institute of Theoretical and Applied Informatics, Polish Academy
	of Sciences, Ba{\l}tycka 5, 44-100 Gliwice, Poland}
\author{Zbigniew Pucha{\l}a}
\affiliation{Institute of Theoretical and Applied Informatics, Polish Academy
	of Sciences, Ba{\l}tycka 5, 44-100 Gliwice, Poland}
\affiliation{Faculty of Physics, Astronomy and Applied Computer Science,
	Jagiellonian University, ul. {\L}ojasiewicza 11,  30-348 Krak{\'o}w, Poland}
\date{ \today}

\begin{abstract}
Probabilistic quantum error correction is an error-correcting 
procedure which uses postselection to determine if the encoded information was 
successfully restored. In this work, we deeply analyze probabilistic version of 
the error-correcting procedure for general noise. We generalized the 
Knill-Laflamme 
conditions for probabilistically correctable errors. We show that 
for some noise channels, we should encode the information into a mixed 
state to maximize the probability of successful error correction. Finally, we 
investigate an advantage of the probabilistic error-correcting procedure over 
the deterministic one. Reducing the probability of successful error correction 
allows for correcting errors generated by a broader class of noise channels. 
Significantly, if the errors are caused by a unitary interaction with an 
auxiliary qubit system, we can probabilistically restore a 
qubit state by using only one additional physical qubit. 
\end{abstract}

\maketitle

\section{Introduction}
Quantum error correction (QEC) is an encoding-decoding procedure that protects 
quantum information from errors arising due to quantum noise. Similarly, as in 
classical computations, this procedure is essential to develop fully 
operational quantum computers \cite{preskill2018quantum}. The theory of QEC, 
initialized by the work of Shor~\cite{shor1995scheme}, covers a wide range of 
coding techniques: Calderbank-Shor-Steane codes~\cite{calderbank1996good, 
steane1996error, steane1996multiple}, stabilizer 
codes~\cite{gottesman1997stabilizer}, topological 
codes~\cite{bombin2006topological}, subsystem codes~\cite{kribs2005unified}, 
entanglement-assisted quantum error-correcting codes~\cite{brun2006correcting, 
brun2014catalytic}, quantum low-density parity-check 
(LDPC) codes~\cite{mackay2004sparse}, quantum maximum distance separable codes 
\cite{huber2020quantum} and many more (for a review see 
\cite{lidar2013quantum}).

In this work, we study a particular QEC procedure called \textit{probabilistic 
quantum error correction} (pQEC) \cite{koashi1999reversing, 
fern2002probabilistic, barberis2010quantum}. To outline how pQEC 
procedure works, let us present an example of classical probabilistic 
error correction. Consider the scenario, when the encoded 
data is harmed by a single bit error, that is with the probability $p \in 
[0,1]$ an arbitrary bit will be flipped. To 
secure 
a one bit of information, we use two physical bits. If we expect that $p\le 
\frac23$, then we can encode $0 \to 00$ and $1 \to 11$. If we receive 
information $00$ at the decoding stage, we are certain the 
encoded message was $0$ (and $1$ for $11$). 
We dismiss the cases $01$ and $10$ as they do not give conclusive answers. 
Otherwise, if $p > \frac23$ it would be beneficial to use encoding $0 \to 00$ 
and 
$1 \to 01$ with the accepting states $10$ and $11$. It is worth mentioning, 
that to secure a one bit of information perfectly, it is necessary to use three 
physical bits, for example $0 \to 000, 1 \to 111$.

Let us return to the quantum case. The heart of pQEC procedure is the 
probabilistic decoding operation~\cite{xiao2013protecting, wang2014protecting}. 
This operation uses a classical postselection to determine if the encoded 
information was successfully restored. The clear drawback is that the 
procedure may fail with some probability. In such case, we should reject the 
output state and ask for a retransmission~\cite{ashikhmin2006fidelity}. 
In the context of QEC, probabilistic 
decoding operations have found application in stabilizer 
codes~\cite{scott2005probabilities} 
especially for iterative probabilistic decoding in LDPC 
codes~\cite{mackay2004sparse, camara2007class, 
	kasai2011quantum}, error decoding~\cite{ashikhmin2000quantum1, 
	ashikhmin2000quantum2} or environment-assisted error 
correction~\cite{wang2014environment}. Moreover, it was noted that they have a 
potential to increase the spectrum of correctable 
errors~\cite{fern2002probabilistic} and 
are useful when the number of qubits is limited~\cite{koashi1999reversing}. 
It is also worth mentioning, they were used with success in other fields of 
quantum information theory, \eg probabilistic 
cloning~\cite{duan1998probabilistic}, 
learning unknown quantum operations~\cite{sedlak2019optimal} or measurement 
discrimination~\cite{puchala2021multiple}. 

Despite the fact that pQEC procedure has been studied in the literature for a 
while, there 
is lack of a formal description of its application for a general noise model. 
In this work, we fill this gap. Inspired by celebrated Knill-Laflamme 
conditions \cite{knill1997theory}, we provide conditions 
(Theorem~\ref{thm-uqec-general}) to check, when probabilistic error 
correction is possible. We discover that optimal error-correcting codes are not 
always generated with the usage of isometric encoding operations. We give an 
explicit example of noise channels family (Section~\ref{sec-example}), such 
that to maximize the probability of successful error correction we need to 
encode the quantum information into a mixed state. Moreover, we discuss the 
advantage of pQEC procedure over the deterministic one 
with a formal statement in Theorem~\ref{thm-nowhere}. We show in 
Theorem~\ref{thm-best-bounds} how to correct noise channels with bounded Choi 
rank. Also, we observe the advantage of pQEC 
procedure for random noise channels, which is presented in 
Theorem~\ref{thm-random}. Finally,
if the errors are caused by a unitary interaction with an 
auxiliary qubit system, we show that it is possible to restore a qubit logical 
state by using only two physical qubits. We present a procedure how to achieve 
this in Algorithm~\ref{alg}.

The rest of the paper is organized as follows. In Section~\ref{sec-prel} we 
introduce the notation and define pQEC protocol. In 
Section~\ref{sec-conditions} we present equivalent 
conditions for probabilistically correctable noise channels. Then, we 
investigate a realization of pQEC 
procedure in Section~\ref{sec-realization}. In Section~\ref{sec-example} we 
present a family of noise channels for which, it is necessary to use mixed 
state encoding to maximize the probability of successful error correction. 
Then, we study an advantage of pQEC procedure in Section~\ref{sec-advantage} 
and Section~\ref{sec-simple}. In Section~\ref{sec-model} we define a 
generalization of pQEC protocol. Finally, we place all proofs in 
Appendix~\ref{sec-app}.

\section{Preliminaries}\label{sec-prel}
\subsection{Mathematical framework}
In this section, we will introduce the notation and recall necessary basic 
facts of quantum information theory. We will denote complex Euclidean spaces 
by symbols $\XX, \YY, \ldots$. The set of linear operators $M: \XX \mapsto \YY$ 
will be written as $\MM(\XX, \YY)$ and $\MM(\XX) \coloneqq \MM(\XX, 
\XX)$. The identity operators will be denoted by $\1_\XX \in \MM(\XX)$. For any 
operator $M \in \MM(\XX,\YY)$ we will consider its vectorization $\ket{M} \in 
\YY 
\otimes \XX$, which is defined as 
\begin{equation}
	\ket{M} \coloneqq \sum_{i=0}^{\dim(\YY) - 1} 
	\ket{i} \otimes M^\top \ket{i},
\end{equation}
where ${\ket{i}}$ are elements of computational basis.
In the space $\MM(\XX)$, we distinguish the 
set of positive semi-definite operators $\PP(\XX)$, the space
of Hermitian operators $\HH(\XX)$ and the set of unitary 
operators $\UU(\XX)$. We use the convention that for non-invertible operator 
$M$, by $M^{-1}$, we denote its Moore-Penrose 
pseudo-inverse~\cite{watrous2018theory}. We consider the 
set of quantum states $\DD(\XX)$, that is, the set of positive semi-definite 
operators with unit trace. We say that a quantum state $\rho$ 
is a pure state if $\mathrm{rank}(\rho)=1$, otherwise, if 
$\mathrm{rank}(\rho)>1$, we say that $\rho$ is a mixed state. The maximally 
mixed state will be denoted by $\rho_{\XX}^* \coloneqq \frac{1}{\dim(\XX)} 
\1_\XX$. 

We also consider transformations between linear operators. We denote by 
$\II_\XX: \MM(\XX) \mapsto \MM(\XX)$ the identity map. 
Let us define the set of quantum subchannels $s\CC(\XX,\YY)$ 
\cite{hellwig1969pure}. 
A quantum subchannel $\Phi \in 
s\CC(\XX,\YY)$ is a linear map $\Phi: \MM(\XX) \mapsto \MM(\YY)$, which is 
completely positive~\cite[Theorem 2.22]{watrous2018theory}, \ie
\begin{equation}
	(\Phi \otimes \II_\XX)(Q) \in \PP(\YY \otimes \XX) \quad \mbox{for any } Q 
	\in \PP(\XX \otimes \XX)
\end{equation} and trace 
non-increasing 
\begin{equation}
	\tr(\Phi(\rho)) \le 1 \quad \mbox{for any } \rho \in \DD(\XX).
\end{equation}
In particular, the subchannel $\Phi$ which is trace preserving, \ie 
\begin{equation}
\tr(\Phi(\rho)) = 1 \quad \mbox{for any } \rho \in \DD(\XX)	
\end{equation}
will be called a quantum channel. We denote by 
$\CC(\XX, \YY)$ the set of quantum channels $\Phi: \MM(\XX) \mapsto \MM(\YY)$. 
We will also use the following notation, $s\CC(\XX) \coloneqq s\CC(\XX,\XX)$ 
and $\CC(\XX) \coloneqq \CC(\XX,\XX)$.

In this work, we will consider the following representations of subchannels:
\begin{itemize}
	\item Kraus representation: Each subchannel $\Phi \in s\CC(\XX,\YY)$ can be 
	defined by a collection of Kraus operators $(K_i)_{i=1}^r \subset 
	\MM(\XX,\YY)$, 	such that $\Phi(X) = \sum_{i=1}^r K_i X K_i^\dagger$ for $X 
	\in 
	\MM(\XX)$ and $r \in \N$. The operators $K_i$ satisfy the condition 
	$\sum_{i=1}^r K_i^\dagger 
	K_i \le \1_\XX$. We say that the subchannel $\Phi$ is given in a canonical 
	Kraus representation $(K_i)_{i=1}^r$, if it holds that $\tr(K_j^\dagger 
	K_i) \propto  \delta_{ij}$ and $K_i \neq 0$ for each $i \le r$. To 
	represent 
	the subchannel $\Phi$ by its Kraus 
	representation $(K_i)_{i=1}^r$, we introduce the notation $\mathcal{K}: 
	\MM(\XX,\YY)^{\times r} \mapsto s\CC(\XX,\YY) $ given by 
	$\Phi = \kraus{(K_i)_{i=1}^{r}}$.
	\item Choi-Jamio{\l}kowski representation: Each subchannel $\Phi \in 
	s\CC(\XX, 
	\YY)$ can be uniquely described by its Choi-Jamio{\l}kowski operator 
	$J(\Phi) \in \MM(\YY \otimes \XX)$, which is defied as $J(\Phi) \coloneqq 
	(\Phi 
	\otimes \II_\XX)(\proj{\1_\XX})$. The rank of $J(\Phi)$ is called the Choi 
	rank and it determines the minimal number $r$ of Kraus operators $K_i$ 
	needed to 
	describe $\Phi$ in the Kraus form $\Phi = \kraus{(K_i)_{i=1}^r}$. 
	Therefore, if the 
	Kraus representation $(K_i)_{i=1}^r$ is canonical, then $r = 
	\mathrm{rank}(J(\Phi))$. 
	\item Stinespring representation: By the Stinespring Dilatation Theorem any 
	subchannel $\Phi \in s\CC(\XX,\YY)$ 
	can be defined as $\Phi(X) = \tr_2 
	\left(A X A^\dagger 
	\right)$ for $X \in \MM(\XX)$, where $A \in \MM(\XX, \YY \otimes \C^r)$ and 
	$\tr_2$ is the partial trace over the second subsystem $\C^r$. 
	The minimal dimension $r$ of the auxiliary system is equal to the Choi 
	rank. In 
	particular, for $\Phi \in \CC(\XX)$, the Stinespring representation of 
	$\Phi$ can be written in the form $\Phi(X) = \tr_2 
	\left(U (X \otimes \proj{\psi}) U^\dagger\right)$, where $\proj{\psi} \in 
	\DD(\C^r)$ and $U \in \UU(\XX \otimes \C^r)$.
	
\end{itemize}

\subsection{Problem formulation}
In this work, we consider the following procedure of probabilistic quantum 
error correction. We are given a noise channel $\EE \in \CC(\YY)$ and a 
Euclidean 
space $\XX$. The goal of pQEC is to choose an appropriate encoding operation 
$\SS 
\in s\CC(\XX, \YY)$ and decoding operation $\RR \in s\CC(\YY, \XX)$, such that 
for any state $\rho \in \DD(\XX)$ we have $\RR\EE \SS(\rho) \propto \rho.$ 
In this protocol, the pair $(\SS, \RR)$ represents the error-correcting 
scheme and the quantity $\tr \left( \RR \EE \SS (\rho) \right)$ 
represents the probability of successful error correction. This protocol may 
fail with the probability $1-  \tr \left( 
\RR \EE \SS 
(\rho) \right)$. In such a case, the output state is rejected. To exclude a 
trivial, null strategy, we add the constrain that a valid error-correcting 
scheme must satisfy $\tr(\RR \EE \SS(\rho)) > 0$ for any $\rho \in \DD(\XX)$.

In this set-up, the probability of successful error correction does not depend 
on the input state $\rho$ (see Lemma~\ref{lem-constant-p} in 
Appendix~\ref{proof-lem-constant-p}). We use this fact to standardize the 
definition of pQEC. From now, we say that $\EE \in \CC(\YY)$ 
is probabilistically correctable for $\XX$, if there exists an error-correcting 
scheme $(\SS, \RR)$ such that
\begin{equation}
	0 \neq \RR \EE \SS \propto \II_\XX.
\end{equation}
We say that $\EE$ is correctable perfectly if $\RR \EE \SS = \II_\XX$. In this 
work, we will be particularly interested in error-correcting schemes 
$(\SS,\RR)$, which maximize the probability of 
success for given $\EE$ and $\XX$.

\section{Probabilistic quantum error correction}\label{sec-conditions}

To inspect pQEC procedure, first, we should state conditions which 
determine when given noise channel is probabilistically correctable. For 
deterministic QEC, such conditions have been known for a long time and in the 
literature as the Knill-Laflamme conditions~\cite{knill1997theory}. Let $\EE  = 
\kraus{(E_i)_i} \in \CC(\YY)$ be a given noise channel. Then, according to the 
Knill-Laflamme Theorem, $\EE$ is 
perfectly correctable for $\XX$ if and only if
\begin{equation}\label{eq-txt-kl}
	S^\dagger E_j^\dagger E_i S \propto \1_\XX
\end{equation}
for all $i,j$ and some isometry operator $S \in \MM(\XX, \YY)$. In the 
following 
theorem we generalize the above, to cover probabilistically correctable noise 
channels.
\begin{theorem}[Equivalent conditions for pQEC]\label{thm-uqec-general}
	Let $\EE = \kraus{(E_i)_i} \in \CC(\YY)$. The following conditions are
	equivalent:
	\begin{enumerate}[(A)]
		\item There exist error-correcting scheme $(\SS, \RR) 
		\in s\CC(\XX,\YY) \times s\CC(\YY,\XX)$ and $p>0$ such that
		\begin{equation}
			\RR \EE \SS  = p \II_\XX.
		\end{equation}
		\item There exist $S
		= \kraus{(S_k)_k} \in s\CC(\XX,\YY)$ and $R \in \PP(\YY)$, such that $R 
		\leq \1_\YY$, for which it holds
		\begin{equation}
			\kraus{ \left( \sqrt{R} E_i S_k \right)_{i,k} } = \kraus{(A_i)_i}: 
			\quad A_i \not = 0, A_j^\dagger A_i \propto
			\delta_{ij} \1_\XX.
		\end{equation}
		\item There exist $\SS =
		\kraus{(S_k)_k} \in s\CC(\XX,\YY)$, $R \in \PP(\YY)$, such that $R \leq 
		\1_\YY$ and a matrix $M= [M_{jl,ik}]_{jl,ik} 
		\not=0$, for which it
		holds
		\begin{equation}
			\forall_{i,j,k,l} \quad S_l^\dagger E_j^\dagger R E_i S_k = M_{jl, 
			ik} \1_\XX.			
		\end{equation}
		\item There exist $S_* \in \MM(\XX,\YY)$ and $R_* \in \MM(\YY, 
		\XX) $ such that
		\begin{equation}\label{eq-thm-1d}
			\forall_i \quad R_* E_i S_* \propto \1_\XX
	\end{equation}
	and there exists $i_0$, for which it holds $R_* E_{i_0} S_* \neq 0$.
	\end{enumerate}

Moreover, if point $(A)$ holds for $\SS = \kraus{(S_k)_k}$ and $\RR = 
\kraus{(R_l)_l}$, then $R \in \PP(\YY)$ from points $(B)$ and $(C)$ 
can be chosen to satisfy $	R = \sum_l R_l^\dagger R_l.$ It also holds that
$R_l E_i S_k \propto \1_\XX$  for any $i,k,l$.
\end{theorem}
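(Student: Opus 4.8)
The plan is to establish the four-way equivalence by proving the cycle $(A)\Rightarrow(B)\Rightarrow(C)\Rightarrow(D)\Rightarrow(A)$, and to extract the \emph{Moreover} statement as a byproduct of the first implication. Throughout I write the composite map as $\RR\EE\SS = \kraus{(R_l E_i S_k)_{l,i,k}}$ whenever $\SS = \kraus{(S_k)_k}$, $\EE = \kraus{(E_i)_i}$ and $\RR = \kraus{(R_l)_l}$, and I abbreviate $A_{ik} \coloneqq \sqrt{R}\, E_i S_k$.

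For $(A)\Rightarrow(B)$ and the \emph{Moreover} clause, the main input is the unitary freedom of Kraus representations. Since $\RR\EE\SS = p\II_\XX$ and the channel $p\II_\XX$ has the one-element canonical representation $(\sqrt{p}\,\1_\XX)$, every composite operator must be a scalar multiple of it; that is, there are scalars $c_{lik}$ with $R_l E_i S_k = c_{lik}\sqrt{p}\,\1_\XX$, which is exactly the last assertion of the \emph{Moreover} clause. Setting $R \coloneqq \sum_l R_l^\dagger R_l$ gives $R\in\PP(\YY)$ with $R\le\1_\YY$ because $\RR$ is a subchannel, establishing the first assertion. I then compute the Gram operators
\begin{equation}
	A_{jl}^\dagger A_{ik} = S_l^\dagger E_j^\dagger R E_i S_k = \sum_m (R_m E_j S_l)^\dagger (R_m E_i S_k) = N_{jl,ik}\,\1_\XX,
\end{equation}
with $N_{jl,ik} = p\sum_m \bar{c}_{mjl}\,c_{mik}$, so each inner product is proportional to $\1_\XX$. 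The scalar array $N=[N_{jl,ik}]$ is a positive semidefinite Gram matrix (its entries are the inner products of the vectors $\sqrt{p}\,(c_{mik})_m$); diagonalising it by a unitary $u$ and forming $A'_\alpha \coloneqq \sum_{ik} u_{\alpha,ik} A_{ik}$ yields, after discarding vanishing terms, a canonical representation of $\kraus{(A_{ik})_{i,k}}$ with $A'_\beta{}^\dagger A'_\alpha \propto \delta_{\alpha\beta}\1_\XX$ and $A'_\alpha\neq 0$. This is precisely condition $(B)$.

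The implications $(B)\Rightarrow(C)$ and $(C)\Rightarrow(D)$ are short. For the former, the canonical operators $(A_\alpha)$ relate to $(A_{ik})$ by an isometric mixing, so $S_l^\dagger E_j^\dagger R E_i S_k = A_{jl}^\dagger A_{ik}$ is again a scalar multiple of $\1_\XX$; naming that scalar $M_{jl,ik}$ produces the required matrix, and $M\neq 0$ because $A_{ik}^\dagger A_{ik}=0$ for all $i,k$ would force the nonzero channel $\kraus{(A_\alpha)}$ to vanish. For the latter, I pick indices with $M_{j_0 l_0,i_0 k_0}\neq 0$ and set $S_*\coloneqq S_{k_0}$ and $R_*\coloneqq S_{l_0}^\dagger E_{j_0}^\dagger R \in\MM(\YY,\XX)$; then $R_* E_i S_* = M_{j_0 l_0, i k_0}\,\1_\XX\propto\1_\XX$ for every $i$ and is nonzero at $i=i_0$, which is condition $(D)$.

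Finally, $(D)\Rightarrow(A)$ is a direct synthesis. Writing $R_* E_i S_* = \mu_i\1_\XX$ with $\mu_{i_0}\neq 0$, I define single-Kraus subchannels $\SS\coloneqq\kraus{(\sqrt{c_S}\,S_*)}$ and $\RR\coloneqq\kraus{(\sqrt{c_R}\,R_*)}$, choosing positive constants $c_S,c_R$ small enough that $c_S S_*^\dagger S_*\le\1_\XX$ and $c_R R_*^\dagger R_*\le\1_\YY$, so that both maps are legitimate subchannels. Then $\RR\EE\SS(\rho) = c_R c_S\sum_i (R_* E_i S_*)\rho(R_* E_i S_*)^\dagger = c_R c_S\big(\sum_i|\mu_i|^2\big)\rho$, hence $\RR\EE\SS = p\II_\XX$ with $p = c_R c_S\sum_i|\mu_i|^2 > 0$, closing the cycle. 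I expect the technical heart of the argument to be the first implication: invoking the unitary freedom of Kraus representations to force $R_l E_i S_k\propto\1_\XX$, and then converting the entrywise proportionality of the Gram operators into a genuine canonical representation by diagonalising the scalar Gram matrix $N$. The remaining implications are essentially bookkeeping together with the explicit normalisation in $(D)\Rightarrow(A)$.
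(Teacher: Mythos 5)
Your proposal is correct, and it reorganizes the argument in a genuinely different way from the paper. The paper proves the equivalences pairwise --- $(A)\Leftrightarrow(B)$, $(B)\Leftrightarrow(C)$, $(A)\Rightarrow(D)$, $(D)\Rightarrow(A)$ --- whereas you close a single cycle $(A)\Rightarrow(B)\Rightarrow(C)\Rightarrow(D)\Rightarrow(A)$. The substantive divergence is in $(A)\Rightarrow(B)$: the paper renormalizes the recovery operators by a pseudo-inverse, $\widetilde{R_k}=R_k\sqrt{R}^{-1}$, canonically decomposes $\kraus{(\sqrt{R}F_i)_i}$, and applies extremality of $\proj{\1_\XX}$ to the products $\widetilde{R_k}A_i$; you instead apply the rank-one argument (your ``unitary freedom'' and the paper's extremality of $\proj{\1_\XX}$ are the same fact, since $J(p\,\II_\XX)=p\proj{\1_\XX}$ has rank one) directly to the full composite family $(R_lE_iS_k)_{l,i,k}$ --- which is exactly how the paper proves $(A)\Rightarrow(D)$ and the \emph{Moreover} clause --- and then rebuild $(B)$ by diagonalizing the scalar Gram matrix $N$, which is the paper's $(C)\Rightarrow(B)$ computation transplanted. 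This recombination avoids the pseudo-inverse bookkeeping and delivers the \emph{Moreover} clause up front; your direct $(C)\Rightarrow(D)$ via $R_*=S_{l_0}^\dagger E_{j_0}^\dagger R$ is an implication the paper never states and is a small genuine simplification. What the paper's route buys and yours forgoes: its constructive $(B)\Rightarrow(A)$ yields the explicit recovery $\RR=\kraus{\left(\alpha_i^{-1/2}A_i^\dagger\sqrt{R}\right)_i}$, which the main text uses right after the theorem to extract the recovery subchannel from the optimization over $(R,\SS,M)$; your cycle reaches $(A)$ from $(B)$ only through $(C)$ and $(D)$, producing a single-Kraus recovery proportional to $R_*$ instead. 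Two cosmetic fixes: mix with conjugated coefficients, $A'_\alpha=\sum_{ik}\overline{u_{\alpha,ik}}\,A_{ik}$ for $N=u^\dagger D u$ (as the paper does), so that $A'^\dagger_\beta A'_\alpha=D_\alpha\delta_{\alpha\beta}\1_\XX$; and note explicitly that $\tr(N)=p>0$ (equivalently some $c_{lik}\neq 0$), which guarantees a nonvanishing $A'_\alpha$ survives the discarding step.
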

The proof of Theorem~\ref{thm-uqec-general} is presented in 
Appendix~\ref{proof-thm-uqec-general}. Let us discuss the meaning of the 
conditions stated in Theorem~\ref{thm-uqec-general}. The condition $(B)$ 
presents a general form of probabilistically correctable 
noise channels $\EE$. Such channels, after applying 
post-processing $\sqrt{R}$ behave as mixed isometry operations. They 
hide parts of an initial quantum information on orthogonal subspaces. 
The condition $(C)$ may be used to calculate the maximum value of the 
probability $p$ of successful error correction. For $r = 
\mathrm{rank}(J(\EE))$ and $d = \dim(\XX)$, $s =\dim(\YY)$ we can introduce the 
optimization procedure:
\begin{equation*}
	\begin{split}
		\text{maximize:}\quad &
		\tr(M)
		\\[2mm]
		\text{subject to:}\quad & 
		S_l^\dagger E_j^\dagger R E_i S_k = M_{jl, ik} \1_\XX, \quad 
		\forall_{i,j,k,l}\\
		& 0 \le R \leq \1_\YY,\\
		& \sum_k S_k^\dagger S_k \le 
		\1_\XX\\
		&R \in \MM(\YY),(S_k)_{k = 1}^{ds}\subset \MM(\XX, \YY), M \in \MM(\C^r 
		\otimes \C^{ds})\\ 
	\end{split}
\end{equation*}
Moreover, one may get the form of a recovery subchannel $\RR$ based on $R, 
\SS = \kraus{(S_k)_k}$ and $M$ obtained from this optimization in the following 
way (see Appendix~\ref{proof-thm-uqec-general}):
\begin{enumerate}
	\item Let $M = U^\dagger D U$ be the spectral decomposition of $M$.
	\item Define $A_{ii'} = \sum_{a,b} \overline{ U_{ii', ab}} \sqrt{R} E_{a} 
	S_b $.
	\item For each $A_{ii'} \neq 0$ define $\alpha_{ii'}: \, \, A_{ii'}^\dagger 
	A_{ii'} = \alpha_{ii'} \1_{\XX}$.
	\item The recovery subchannel is given as $\RR= \kraus{ \left(
	\alpha_{ii'}^{-1/2} A_{ii'}^\dagger \sqrt{R} \right)_{i,i'} }$.
\end{enumerate}
Finally, the condition $(D)$ gives us a simple method to check if $\EE= 
\kraus{(E_i)_{i=1}^r}$ is probabilistically correctable for $\XX$. Let us 
compare the point $(D)$ with Knill-Laflamme conditions. The latter, is a 
constraint satisfaction problem with $r^2$ quadratic constrains $S^\dagger 
E_j^\dagger E_i S \propto \1_\XX$ for the variable $S \in \MM(\XX, \YY)$, which 
satisfies $S \neq 0$. The parameters $E_j^\dagger E_i$ constitute 
$^\dagger-$closed algebra $\mathcal{A}$, such that $\1_\YY \in \mathcal{A}$. In 
comparison, the conditions in the point $(D)$ represent a constraint 
satisfaction problem with $r$ bilinear constrains $RE_iS \propto \1_\XX$ for 
the 
variables $S \in \MM(\XX,\YY)$ and $R \in \MM(\YY, \XX)$. Additionally, it must 
hold $RE_{i_0}S \neq 0$ for some $i_0 \in \{1,\ldots,r\}$. In this problem, the 
parameters $E_i$ are arbitrary operators from $\MM(\YY)$, which satisfy 
$\mathrm{span}\left(\mathrm{im}(E_i^\dagger): i=1,\ldots,r\right) = \YY$ 
(although a stronger condition holds $\sum_i E_i^\dagger E_i = \1_\YY$, we will 
see in Section~\ref{sec-advantage}, it is more convenient to use the weaker 
version).

\section{Realization of pQEC procedure}\label{sec-realization}

In this section, we will investigate the form of error-correcting 
scheme $(\SS, \RR)$ which provides the maximal probability of successful error 
correction. For perfectly correctable noise channels, the encoding 
$\SS$ can be realized by the isometry channel. This observation meaningfully 
reduces the complexity of finding error-correcting schemes -- it is enough to 
consider a vector representation of pure states. Inspired by that, we ask if 
a similar behavior occurs in the probabilistic quantum error correction. The 
following proposition gives us some insight in the form of encoding and 
decoding.

\begin{proposition}\label{prop-realization}
For a given channel $\EE \in \CC(\YY)$, let us fix an error-correcting 
scheme $(\SS, \RR) \in s\CC(\XX,\YY) \times s\CC(\YY,\XX)$ such that $
\RR \EE \SS = p \II_\XX$, for some $p > 0.$ Then, the following holds:
\begin{enumerate}[(A)]
\item There exist $\widetilde \SS \in \CC(\XX,\YY)$ and $\widetilde \RR \in 
s\CC(\YY,\XX)$ such that $\widetilde \RR \EE \widetilde \SS = p \II_\XX.$
\item If $\RR \in \CC(\YY,\XX)$, then there exists $\widetilde \SS = 
\kraus{(\widetilde S)} 
\in \CC(\XX,\YY)$ such that $\RR \EE \widetilde \SS = \II_\XX.$
\item If $p=1$, then there exist $\widetilde \SS = \kraus{(\widetilde S)} 
\in \CC(\XX,\YY)$ and $\widetilde \RR \in \CC(\YY,\XX)$ such that $
\widetilde \RR \EE \widetilde \SS = \II_\XX.$
\end{enumerate}
\end{proposition}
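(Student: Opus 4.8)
The plan is to prove $(A)$, $(B)$, $(C)$ in turn, bootstrapping $(C)$ from $(B)$, and using throughout the fact that $\RR\EE\SS = p\II_\XX$ makes $\tr(\RR\EE\SS(\rho)) = p$ for every state $\rho$. For $(A)$, the first step is to set $T \eqdef \sum_k S_k^\dagger S_k \le \1_\XX$ and observe that $T$ must be invertible: if a unit vector $v$ lay in $\ker T$, then $\SS(\proj{v})$ would be a positive operator of trace $\bra{v} T \ket{v} = 0$, hence zero, forcing $\RR\EE\SS(\proj{v}) = 0 \neq p\proj{v}$. Once $T$ is positive definite I renormalize the encoding by $\widetilde S_k \eqdef S_k T^{-1/2}$, so that $\widetilde\SS \eqdef \kraus{(\widetilde S_k)_k} \in \CC(\XX,\YY)$ is trace preserving and $\SS(\rho) = \widetilde\SS(T^{1/2}\rho T^{1/2})$. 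To compensate on the decoding side I post-process $\RR$ by conjugation with $T^{1/2}$, defining $\widetilde\RR(\cdot) \eqdef T^{1/2}\RR(\cdot)T^{1/2}$, which is completely positive and, since $0 \le T \le \1_\XX$, trace non-increasing, hence a subchannel; a direct substitution then gives $\widetilde\RR\EE\widetilde\SS = p\II_\XX$.

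For $(B)$ the extra hypothesis $\RR \in \CC(\YY,\XX)$ forces a stronger normalization. Because $\RR$ and $\EE$ are trace preserving, $p = \tr(\RR\EE\SS(\rho)) = \tr(\SS(\rho)) = \tr(T\rho)$ for all states, so $T = p\1_\XX$; rescaling $\widehat S_k \eqdef S_k/\sqrt{p}$ produces an encoding channel $\widehat\SS$ with $\RR\EE\widehat\SS = \II_\XX$. The crux is then to extract a single isometric Kraus operator. Since the identity channel has Choi rank one, every Kraus operator of $\RR\EE\widehat\SS$ is proportional to $\1_\XX$, that is $R_l E_i \widehat S_k = c_{lik}\1_\XX$. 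Summing the identities $\widehat S_k^\dagger E_i^\dagger R_l^\dagger R_l E_i \widehat S_k = |c_{lik}|^2 \1_\XX$ first over $l$ using $\sum_l R_l^\dagger R_l = \1_\YY$, and then over $i$ using $\sum_i E_i^\dagger E_i = \1_\YY$, collapses everything to $\widehat S_k^\dagger \widehat S_k = t_k \1_\XX$ with $\sum_k t_k = 1$. Thus each $\widehat S_k$ is a scaled isometry, and for any $k_0$ with $t_{k_0} > 0$ the single operator $\widetilde S \eqdef \widehat S_{k_0}/\sqrt{t_{k_0}}$ is an isometry with $R_l E_i \widetilde S \propto \1_\XX$; tracking the constants shows $\RR\EE\kraus{(\widetilde S)} = \II_\XX$.

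For $(C)$ I first note that $p = 1$ with $\SS,\RR$ subchannels and $\EE$ a channel forces the chain $1 = \tr(\RR\EE\SS(\rho)) \le \tr(\EE\SS(\rho)) = \tr(\SS(\rho)) \le 1$ to be tight, so $\SS$ is already a channel ($T = \1_\XX$) and $\tr\big((\1_\YY - R)\,\EE\SS(\rho)\big) = 0$ for $R \eqdef \sum_l R_l^\dagger R_l$. The remaining task is that $\RR$ need not be trace preserving; I resolve this by completing $\RR$ to a channel $\widetilde\RR$ with extra Kraus operators realizing $Q \eqdef \1_\YY - R \ge 0$. The positivity identity $\tr(Q\,\EE\SS(\rho)) = 0$ forces the support of every output $\EE\SS(\rho)$ to be orthogonal to the range of $Q$, so the added Kraus operators annihilate these outputs and leave the composite unchanged, giving $\widetilde\RR\EE\SS = \RR\EE\SS = \II_\XX$ with $\widetilde\RR \in \CC(\YY,\XX)$. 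Applying part $(B)$ to the scheme $(\SS,\widetilde\RR)$ then yields an isometric encoding $\widetilde\SS = \kraus{(\widetilde S)}$ with $\widetilde\RR\EE\widetilde\SS = \II_\XX$.

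I expect the main obstacle to be the two collapsing arguments. In $(B)$, the nontrivial point is recognizing that the trace-preservation normalizations of $\EE$ and $\RR$ force each encoding Kraus operator to be a multiple of an isometry, so that a single branch already corrects perfectly; in $(C)$, the delicate step is verifying that enlarging the decoder to a full channel does not disturb the already-correct composite, which rests entirely on the support condition $Q\,\EE\SS(\rho) = 0$.
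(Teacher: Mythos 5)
Your proof is correct and follows essentially the same route as the paper's: part (A) via conjugation by $T^{\pm 1/2}$ on both sides, part (B) by showing that the Kraus operators of the encoding are scaled isometries and selecting one, and part (C) by tightening the trace chain, completing $\RR$ to a channel that agrees with $\RR$ on the outputs of $\EE\SS$, and then invoking (B). The only differences are cosmetic: you establish the invertibility of $T$ and the proportionality $R_l E_i \widehat S_k \propto \1_\XX$ by direct extremality arguments instead of citing Theorem~\ref{thm-uqec-general}, and you complete $\RR$ with generic Kraus operators realizing $\1_\YY - R$ rather than the paper's specific map $Y \mapsto \tr\left((\1_\YY - R)Y\right)\rho_{\XX}^*$.
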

The proof of Proposition~\ref{prop-realization} is presented in 
Appendix~\ref{proof-prop-realization}. 
\begin{figure}[h!]
	\includegraphics{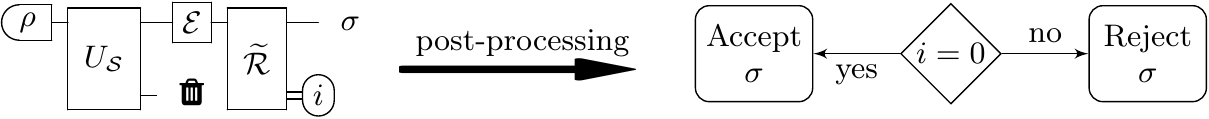}
	\caption{ Schematic realization of pQEC procedure for the 
	noise channel $\EE$. 
	}\label{fig-realization}
\end{figure}
We may use Proposition~\ref{prop-realization} $(A)$ to state a realization of 
pQEC procedure (see Figure~\ref{fig-realization}). For a given noise 
channel $\EE \in \CC(\YY)$ let $(\SS, \RR) \in \CC(\XX,\YY) \times 
s\CC(\YY,\XX)$ be an error-correcting scheme for which $\RR\EE\SS = p \II_\XX$, 
where $p > 0$. The encoding channel $\SS$ can be realized using the 
Stinespring representation given in the form $\SS(X) = \tr_2\left(U_\SS X 
U_\SS^\dagger\right)$. The state is then sent through 
$\EE$. The decoding subchannel 
$\RR \in s\CC(\YY, \XX)$ can be realized by implementing the channel 
$\widetilde \RR \in \CC(\YY, \XX \otimes \C^2)$ given in the form $\widetilde 
\RR (Y) = \RR(Y) \otimes \proj{0} + \Psi(Y) \otimes \proj{1}$, where $\Psi \in 
s\CC(\YY,\XX)$ such that $(\RR + \Psi) \in \CC(\YY, 
\XX)$. In summary, the output of the whole procedure consists of a 
quantum state $\sigma \in \DD(\XX)$ and a classical label $i \in \{0,1\}$. If 
the label $i = 0$ is obtained, we know that $\sigma \propto \RR\EE\SS(\rho) 
= p \rho$, and hence, the output state can be 
accepted. Otherwise, if $i=1$, the output state $\sigma \propto \Psi \EE \SS 
(\rho)$ should be rejected, as in general it may differ from $\rho$.

In 
Proposition~\ref{prop-realization} $(C)$, we observed that using non-isometric 
channels $\SS$ or formal subchannels $\RR$ for perfectly correctable noise 
channels provides no advantage. Moreover, according to 
Theorem~\ref{thm-uqec-general} $(D)$, to 
predict if a noise channel is probabilistically correctable, we may consider 
only 
single Kraus encoding operations. However, among all conditions presented in 
Proposition~\ref{prop-realization} there is 
no condition, which in general allows us to restrict our attention to an 
isometry channel realization of $\SS$. Indeed, there is a class of 
noise channels $\EE$ for which, in order to maximize the probability $p$ of 
successful 
error correction, we need to consider a general channel realization of 
$\SS$. Paraphrasing, to obtain the best performance, we have to encode the 
initial 
state $\proj{\psi} \in \DD(\XX)$ into the mixed state $\SS(\proj{\psi})$.
In Section~\ref{sec-example} we will present a family of noise channels for 
which it is necessary to use mixed state encoding.

\section{Need for mixed state encoding}\label{sec-example}

In this section, we provide an example of a parametrized family of noise 
channels $ \{\EE_R\}_R$ for which the mixed state 
encoding 
improves the probability of successful error correction. In our example we 
assume that $\XX = \C^2$ and $\YY = \C^4$. For each 
$R \in \PP(\C^4)$ satisfying $R \le \1_{\C^4}$ let us define a noise channel 
$\EE_R 
\in \CC(\C^4)$ given by the equation
\begin{equation}\label{eq-param-noise}
	\begin{split}
		\EE_R(Y) = \proj{0} \otimes \tr_1\left( \sqrt{R} Y \sqrt{R}\right) + 
		\proj{1} \otimes \tr\left([\1_{\C^4} - R] Y\right) \rho_2^*.
	\end{split}
\end{equation}
We define the optimal probability $p_0$ of successful error correction as
\begin{equation}
p_0(R) \coloneqq \max\left\{p: \,\, \RR \EE_R \SS = p \II_{\C^2}, \,\, (\SS, 
\RR) 
\in s\CC(\C^2, \C^4) \times s\CC(\C^4,\C^2)\right\}.
\end{equation}
We also define the optimal probability $p_1$ of 
successful error correction restricted to the pure state encoding:
\begin{equation}
p_1(R) \coloneqq \max\left\{p:\,\,  \RR \EE_R \SS = p \II_{\C^2}, \,\, \SS = 
\kraus{(S)}, \,\, (\SS, \RR) 
\in s\CC(\C^2, \C^4) \times s\CC(\C^4,\C^2)\right\}.
\end{equation}
Our claim, which we will present later, is that there exists a family of 
operators $R$ for which 
$p_0(R) > p_1(R)$. 

We start with the following 
lemma, where we show the optimal error-correcting 
scheme $(\SS, \RR)$ and a simplified version of the maximization 
problem $p_0(R)$. 

\begin{lemma}\label{lem-example-gen}
Let $R \in \PP(\C^4)$ and $R \le \1_{\C^4}$. Define $\Pi_R$ as a projector on 
the 
support of $R$. For $\EE_R$ defined in 
Eq.~\eqref{eq-param-noise} we have the following simplified form of the 
maximization problem $p_0(R)$:
\begin{equation}\label{eq-param-noise-lem}
	p_0(R) = \max\left\{ \tr(P): \,\, P \in \PP(\C^2), \tr_1\left(R^{-1} (P 
	\otimes \1_{\C^2})\right) \le \1_{\C^2}, 
	\,\, 
	\forall_{X \in \MM(\C^2)} \, \,\Pi_R (P \otimes X) \Pi_R = P \otimes X 
	\right\}.
\end{equation}
An optimal scheme $(\SS, \RR)$ which achieves the probability $p_0(R)$, that is 
$\RR \EE_R \SS = p_0(R) \II_{\C^2}$, can be taken as
\begin{equation}
\begin{split}
\SS(X) &= \sqrt{R}^{-1} (P \otimes X) \sqrt{R}^{-1},\\
\RR(Y) &= \tr_1\left(Y\left(\proj{0} \otimes \1_{\C^2}\right) \right),
\end{split}
\end{equation}	
where $P$ is an argument maximizing $p_0(R)$ in Eq.~\eqref{eq-param-noise-lem}. 
Moreover, if there exists another optimal scheme $(\widetilde\SS, 
\widetilde\RR)$, that is $\widetilde\RR \EE_R \widetilde \SS = p_0(R) 
\II_{\C^2}$, 
then $\mathrm{rank}(J(\SS)) \le \mathrm{rank}(J(\widetilde{\SS}))$.
\end{lemma}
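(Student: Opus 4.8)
The plan is to analyze the specific structure of the noise channel $\EE_R$ and reduce the full optimization over $(\SS,\RR)$ to the stated maximization over a single positive operator $P \in \PP(\C^2)$. First I would examine the action of $\EE_R$: it has two ``branches'' distinguished by the orthogonal flag states $\ket{0},\ket{1}$ on the first qubit. The branch carrying recoverable information is $\proj{0} \otimes \tr_1(\sqrt{R}\, Y \sqrt{R})$, so a natural recovery is to project onto the $\ket{0}$ flag and discard it, which is exactly the proposed $\RR(Y) = \tr_1(Y(\proj{0}\otimes \1_{\C^2}))$. The key computation is to substitute the candidate encoding $\SS(X) = \sqrt{R}^{-1}(P\otimes X)\sqrt{R}^{-1}$ into $\RR\EE_R\SS$ and verify, using $\sqrt{R}\sqrt{R}^{-1}=\Pi_R$ together with the support condition $\Pi_R(P\otimes X)\Pi_R = P\otimes X$, that $\RR\EE_R\SS(X) = \tr_1(P\otimes X) = \tr(P)\, X$, giving $p_0(R)\ge \tr(P)$ with $p = \tr(P)$.

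Next I would check that this candidate scheme is admissible: $\RR$ is manifestly a valid subchannel, and $\SS \in s\CC(\C^2,\C^4)$ requires the trace-non-increasing condition, which by the single-Kraus form $\SS=\kraus{(\sqrt{R}^{-1}(P^{1/2}\otimes \1))}$ unpacks to $\tr_1(R^{-1}(P\otimes\1_{\C^2}))\le \1_{\C^2}$ --- precisely the first constraint in Eq.~\eqref{eq-param-noise-lem}. The support constraint $\Pi_R(P\otimes X)\Pi_R = P\otimes X$ is exactly what is needed for the $\sqrt{R}^{-1}$'s to act as genuine inverses on the relevant subspace and for the $\ket{1}$-branch to contribute nothing to the accepted output, so including it in the feasible set loses no generality. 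This establishes the ``$\ge$'' direction: every feasible $P$ yields a valid scheme with success probability $\tr(P)$.

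The harder direction is the converse: showing that no error-correcting scheme can beat $\max\{\tr(P)\}$, so that the reduction to $P$ is lossless. Here I would invoke Theorem~\ref{thm-uqec-general}, particularly the freedom granted by Proposition~\ref{prop-realization}$(A)$ to assume $\SS \in \CC(\XX,\YY)$ is a channel and by condition $(C)$/$(B)$ to characterize the optimal $R$ appearing in the recovery. The structure of $\EE_R$ forces any recovery that accepts with positive probability to extract information only through the $\ket{0}$ flag, since the $\ket{1}$ branch outputs the state-independent $\rho_2^*$ and carries no recoverable logical information. I would argue that for an arbitrary optimal scheme, defining $P$ through the encoded operator (roughly $P \otimes \1 \sim \sqrt{R}\,\SS(\cdot)\,\sqrt{R}$ restricted to the support of $R$) reconstructs a feasible point of the program with $\tr(P)$ equal to the achieved probability, yielding ``$\le$''. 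The main obstacle I anticipate is this converse bookkeeping: carefully showing that an arbitrary $(\SS,\RR)$ can be massaged --- without decreasing $p$ --- into the canonical product form $P\otimes X$ on $\mathrm{supp}(R)$, and that the proportionality constraint $\RR\EE_R\SS \propto \II_{\C^2}$ genuinely enforces the tensor-product structure in the second ($\C^2$) factor rather than allowing some entangled encoding to do better.

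Finally, for the minimality claim on $\mathrm{rank}(J(\SS))$, I would observe that the Choi rank of the constructed encoding $\SS(X)=\sqrt{R}^{-1}(P\otimes X)\sqrt{R}^{-1}$ equals $\mathrm{rank}(P)$, since $\sqrt{R}^{-1}$ is injective on $\mathrm{supp}(R)$. The argument is then that any competing optimal scheme $(\widetilde\SS,\widetilde\RR)$ must, via the converse analysis above, correspond to some feasible $P'$ with $\tr(P')=p_0(R)$, and that $\mathrm{rank}(P) \le \mathrm{rank}(P')$ for the chosen optimizer --- the point being to select $P$ of minimal rank among maximizers. The cleanest way to secure this is to show the encoded state $\SS(\rho)$ must have support inside a subspace determined by $P$, so $\mathrm{rank}(J(\widetilde\SS))$ is bounded below by the rank of the corresponding $P'$, and we simply take $P$ to realize the minimal such rank.
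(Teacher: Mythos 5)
Your achievability direction is fine and matches the paper: substituting $\SS(X)=\sqrt{R}^{-1}(P\otimes X)\sqrt{R}^{-1}$ and $\RR(Y)=\tr_1(Y(\proj{0}\otimes\1_{\C^2}))$ and checking the two constraints is exactly the easy half. The problem is that the entire substance of the lemma is the converse, and you have not supplied it --- you explicitly defer it as ``the main obstacle I anticipate,'' namely showing that the proportionality constraint forces the tensor-product form $P\otimes X$ on $\mathrm{supp}(R)$ and that no entangled encoding does better. A proof that names its own missing step is not a proof. The paper closes this gap in two concrete moves, both powered by Theorem~\ref{thm-uqec-general}. First, writing any recovery as $\RR(A\otimes B)=\tr(A\proj{0})\widetilde{\RR}(B)$ and setting $\FF(X)=\tr_1(\sqrt{R}\,\SS(X)\sqrt{R})$, the condition $\widetilde{\RR}\FF=p\,\II_{\C^2}$ together with Theorem~\ref{thm-uqec-general} gives $\widetilde{R}_kF_i\propto\1_{\C^2}$ with some $\widetilde{R}_{k_0}F_{i_0}\neq 0$; since these are $2\times 2$ matrices this forces every $\widetilde{R}_k\propto F_{i_0}^{-1}$, so $\widetilde{\RR}$ is a \emph{single} Kraus operator $\widetilde{R}$, which can be absorbed into the encoding ($\SS'(X)=\SS(\widetilde{R}X\widetilde{R}^\dagger)$) without changing $p$ or increasing $\mathrm{rank}(J(\SS))$. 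This is what justifies fixing $\RR(Y)=\tr_1(Y(\proj{0}\otimes\1_{\C^2}))$ with no loss. Second, applying Theorem~\ref{thm-uqec-general} to $\tr_1(\sqrt{R}\,\SS(X)\sqrt{R})=pX$ yields Kraus operators $A_i$ of $\sqrt{R}\,\SS(\cdot)\sqrt{R}$ with $A_j^\dagger A_i\propto\delta_{ij}\1_{\C^2}$, and a second application to the partial-trace identity forces $A_i=\ket{v_i}\otimes\1_{\C^2}$ with orthogonal $\ket{v_i}$; setting $P=\sum_i\proj{v_i}$ gives $\sqrt{R}\,\SS(X)\sqrt{R}=P\otimes X$, which is precisely the tensor-product rigidity you flagged as unresolved.

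The rank-minimality claim inherits the same gap: the paper obtains it for free because each reduction step (absorbing $\widetilde{R}$ into $\SS$, and restricting $\SS$ to $\Pi_R\,\SS(\cdot)\,\Pi_R$) is shown not to increase $\mathrm{rank}(J(\SS))$, whereas your sketch appeals to a lower bound via a corresponding $P'$ that again depends on the converse analysis you did not carry out. To repair the proposal you need to actually run the two applications of Theorem~\ref{thm-uqec-general} described above rather than gesture at them.
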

The proof of Lemma~\ref{lem-example-gen} is presented in 
Appendix~\ref{proof-lem-example-gen}. Let us separately consider two cases: 
$\mathrm{rank}(R) < 4$ and $\mathrm{rank}(R) = 4$. The first one will 
be discussed briefly as it will not support our claim. 
\begin{corollary}\label{cor-example-rank-123}
Let us take $R \in \PP(\C^4)$ such that $R \le \1_{\C^4}$ and $\mathrm{rank}(R) 
< 
4$. Define $\Pi_R$ as a projector on 
the 
support of $R$. For the noise channel defined in Eq.~\eqref{eq-param-noise} we 
have $p_0(R) 
= 
p_1(R)$. Moreover, it holds
\begin{equation}
p_0(R) = 
\begin{cases}
0, &\mathrm{rank}(R) \le 1,\\
0, &\mathrm{rank}(R) = 2, \Pi_R \neq 
\proj{\psi} \otimes \1_{\C^2}, \ket{\psi} \in \C^2,\\
\|\tr_1\left(R^{-1} 
(\proj{\psi} 
\otimes \1_{\C^2})\right)\|_\infty^{-1}, &\mathrm{rank}(R) = 2, 
\Pi_R = 
\proj{\psi} \otimes \1_{\C^2}, \ket{\psi} \in \C^2,\\
0, &\mathrm{rank}(R) = 3, \Pi_R = 
\1_{\C^4} - \proj{\alpha}, \C^4 \ni \ket{\alpha} \mbox{ is 
entangled},\\
\|\tr_1\left(R^{-1} 
(\proj{\psi} 
\otimes \1_{\C^2})\right)\|_\infty^{-1}, &\mathrm{rank}(R) = 3, \Pi_R = 
\1_{\C^4} - 
\proj{\psi^\perp} \otimes 
\proj{\phi}, \ket{\psi^\perp},\ket{\phi} \in \C^2, \proj{\psi} \in \DD(\C^2),
\end{cases}
\end{equation}
where $R^{-1}$ denotes Moore-Penrose 
pseudo-inverse.
\end{corollary}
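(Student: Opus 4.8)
The plan is to treat the optimization in Lemma~\ref{lem-example-gen} as a purely geometric problem driven by $\mathrm{supp}(R)$, and to read off the value by a dimension count. The first step is to restate the constraint ``$\Pi_R(P\otimes X)\Pi_R = P\otimes X$ for all $X\in\MM(\C^2)$'' as the subspace inclusion $\mathrm{supp}(P)\otimes\C^2\subseteq\mathrm{supp}(R)$. Indeed, specializing to $X=\proj{x}$ makes $P\otimes\proj{x}$ a positive operator fixed by $\Pi_R(\cdot)\Pi_R$, hence $\mathrm{range}(P\otimes\proj{x})=\mathrm{supp}(P)\otimes\mathrm{span}\{\ket{x}\}\subseteq\mathrm{range}(\Pi_R)$; letting $\ket{x}$ sweep $\C^2$ yields the inclusion. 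Conversely, if the inclusion holds, then for every $\ket{p}\in\mathrm{supp}(P)$ both the range and the co-range of $\proj{p}\otimes X$ lie in $\mathrm{range}(\Pi_R)$, so $\Pi_R(\proj{p}\otimes X)\Pi_R=\proj{p}\otimes X$; summing over a spectral decomposition of $P$ recovers the constraint.

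Given this reformulation, I would case-split on $\mathrm{rank}(R)=\dim\mathrm{supp}(R)$ and count dimensions, using $\dim(\mathrm{supp}(P)\otimes\C^2)=2\dim\mathrm{supp}(P)$. If $\mathrm{rank}(R)\le 1$, any $P\neq 0$ gives a left side of dimension $\ge 2$, exceeding $\dim\mathrm{supp}(R)$, so $P=0$ and $p_0(R)=0$. If $\mathrm{rank}(R)=2$, a nonzero $P$ forces $\dim\mathrm{supp}(P)=1$, say $\mathrm{supp}(P)=\mathrm{span}\{\ket{\psi}\}$, and then the two-dimensional $\ket{\psi}\otimes\C^2$ must equal $\mathrm{supp}(R)$, i.e. $\Pi_R=\proj{\psi}\otimes\1_{\C^2}$; otherwise no nonzero $P$ is admissible and $p_0(R)=0$. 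If $\mathrm{rank}(R)=3$, write $\Pi_R=\1_{\C^4}-\proj{\alpha}$; a two-dimensional $\mathrm{supp}(P)$ would give $\mathrm{supp}(P)\otimes\C^2=\C^4\not\subseteq\mathrm{supp}(R)$, so again $\dim\mathrm{supp}(P)=1$, and the inclusion $\ket{\psi}\otimes\C^2\subseteq\{\ket{\alpha}\}^\perp$ is equivalent to $\ket{\alpha}\in\ket{\psi^\perp}\otimes\C^2$, that is $\ket{\alpha}=\ket{\psi^\perp}\otimes\ket{\phi}$ is a product vector. Thus an entangled $\ket{\alpha}$ admits no nonzero $P$ and gives $p_0(R)=0$, while a product $\ket{\alpha}$ leaves only $P=c\proj{\psi}$ with $c\ge 0$.

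In both non-trivial cases the objective reduces to maximizing $\tr(P)=c$ subject to $c\,\tr_1\!\left(R^{-1}(\proj{\psi}\otimes\1_{\C^2})\right)\le\1_{\C^2}$. Writing $T:=\tr_1\!\left(R^{-1}(\proj{\psi}\otimes\1_{\C^2})\right)$, which is positive and nonzero because $R^{-1}$ is positive definite on $\mathrm{supp}(R)\supseteq\ket{\psi}\otimes\C^2$, the largest admissible $c$ is $\|T\|_\infty^{-1}$, yielding the stated value $p_0(R)=\|\tr_1(R^{-1}(\proj{\psi}\otimes\1_{\C^2}))\|_\infty^{-1}$.

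Finally, to obtain $p_0(R)=p_1(R)$ I would note that in every case the optimal $P$ has rank at most one, $P=c\proj{\psi}$. Setting $V=\ket{\psi}\otimes\1_{\C^2}$ gives $P\otimes X=c\,VXV^\dagger$, so the optimal encoding of Lemma~\ref{lem-example-gen}, $\SS(X)=\sqrt{R}^{-1}(P\otimes X)\sqrt{R}^{-1}$, equals $SXS^\dagger$ with the single Kraus operator $S=\sqrt{c}\,\sqrt{R}^{-1}V$, and its admissibility $S^\dagger S=c\,T\le\1_{\C^2}$ is exactly the constraint above. Hence the optimum is already attained by a pure-state (single-Kraus) encoding $\SS=\kraus{(S)}$, so the restricted optimum $p_1(R)$ coincides with $p_0(R)$. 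The main obstacle is the rank-$3$ geometric step: showing that admissibility of a nonzero $P$ forces $\ket{\alpha}$ to be a product vector, which is precisely the source of the entangled/product dichotomy in the statement.
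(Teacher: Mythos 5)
Your proof is correct and follows essentially the same route as the paper's: a rank/dimension count showing $\mathrm{rank}(P)\le 1$ (your support-inclusion reformulation $\mathrm{supp}(P)\otimes\C^2\subseteq\mathrm{supp}(R)$ is just a cleaner packaging of the paper's conditions $2\,\mathrm{rank}(P)\le\mathrm{rank}(\Pi_R)$ and $\Pi_R\ket{x,y}=\ket{x,y}$), the same product/entangled dichotomy for $\ket{\alpha}$ in the rank-$3$ case, and the same reading-off of the optimum from $c\,T\le\1_{\C^2}$. A small bonus: you make explicit why $p_0(R)=p_1(R)$ (the optimal $P$ is rank one, so the optimal encoding is single-Kraus), a point the paper's proof leaves implicit.
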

The proof of Corollary~\ref{cor-example-rank-123} is presented in 
Appendix~\ref{proof-cor-example-rank-123}. 

In the case when the operator $R$ is invertible, the situation is more 
interesting. Let us focus on $p_0(R)$ obtained in 
Eq.~\eqref{eq-param-noise-lem}. As $\Pi_R = \1_{\C^4}$, the 
equation $\Pi_R (P \otimes X) \Pi_R = P \otimes X$ is always satisfied. We can 
take $P = \tr(P) \rho$, for $\rho \in \DD(\C^2)$. The inequality  $\tr(P) 
\tr_1\left(R^{-1} (\rho \otimes \1_{\C^2})\right) \le \1_{\C^2}$ is equivalent 
to $\tr(P) \le 
\|\tr_1\left(R^{-1} 
	(\rho\otimes \1_{\C^2})\right)\|_\infty^{-1}$. Hence, we 
get
\begin{equation}\label{eq-example-opt-prob}
p_0(R) = \max \left\{\|\tr_1\left(R^{-1} 
	(\rho
	\otimes \1_{\C^2})\right)\|_\infty^{-1}: \,\, \rho \in \DD(\C^2) \right\}.
\end{equation}

To calculate $p_1(R)$ it will be sufficient to add the constraint $\SS = 
\kraus{(S)}$. According to Lemma~\ref{lem-example-gen} the optimal $\SS$ is of 
the form  $\SS(X) = \sqrt{R}^{-1} (P \otimes X) \sqrt{R}^{-1}$. As $R$ is 
invertible, $\SS = 
\kraus{(S)}$ if and only if $P = \proj{\psi}$ for some $\ket{\psi} \in \C^2$.
Then, we have
\begin{equation}
p_1(R) = \max \left\{\|\tr_1\left(R^{-1} 
	(\proj{\psi}
	\otimes \1_{\C^2})\right)\|_\infty^{-1}: \,\, \proj{\psi} \in \DD(\C^2) 
	\right\}.
\end{equation}
 \begin{proposition}\label{prop-mixed-encoding}
Let us define an unitary matrix $U \in \UU(\C^4)$ which columns form the magic 
basis 
\cite{hill1997entanglement}
\begin{equation}
	U = \frac{1}{\sqrt{2}} \left[\begin{array}{cccc}
		1 & 0 & 0 & i\\
		0 & i & 1 & 0\\
		0 & i & -1& 0\\
		1 & 0 & 0 &-i
	\end{array}\right].
\end{equation}
Let us also define a diagonal operator $D(\lambda) \coloneqq 
\mathrm{diag}^\dagger 
\left(\lambda\right)$, which is parameterized by 
a $4-$dimensional real vector $\lambda = (\lambda_1, \lambda_2, \lambda_3, 
\lambda_4)$, for which it holds $0 < \lambda_i \le 1$. For $R = U D(\lambda) 
U^\dagger$ and the noise channel $\EE_R$ defined in 
Eq.~\eqref{eq-param-noise} we have
\begin{equation}
\begin{split}
p_0(R) &=  \frac{4}{\tr(R^{-1})},\\
p_1(R) &= \frac{4}{\tr(R^{-1}) + \min\left\{ \left|\frac{1}{\lambda_1} - 
	\frac{1}{\lambda_2} 
	- \frac{1}{\lambda_3} + \frac{1}{\lambda_4}\right|, 
	\left|\left|\frac{1}{\lambda_1}-\frac{1}
	{\lambda_4}\right|
	-\left|\frac{1}{\lambda_2}-
	\frac{1}{\lambda_3}\right|\right| \right\}}.
\end{split}
\end{equation}
\end{proposition}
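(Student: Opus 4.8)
The plan is to start from the explicit optimisation formulas for $p_0(R)$ and $p_1(R)$ that were derived just above the proposition for invertible $R$, namely $p_0(R) = \max\{\|\tr_1(R^{-1}(\rho\otimes\1_{\C^2}))\|_\infty^{-1}:\rho\in\DD(\C^2)\}$ and the analogous expression for $p_1$ with $\rho$ restricted to pure states. In this way everything reduces to analysing the single operator $M(\rho):=\tr_1\!\big(R^{-1}(\rho\otimes\1_{\C^2})\big)$ as a function of $\rho$, and then minimising its largest eigenvalue over all states (for $p_0$) or over pure states (for $p_1$). Writing $R=UD(\lambda)U^\dagger$ gives $R^{-1}=\sum_{k=1}^4\mu_k\proj{e_k}$ with $\mu_k=1/\lambda_k$ and $\ket{e_k}$ the columns of $U$, i.e.\ the magic basis.

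The key step is to compute $M(\rho)$ in closed form. Each magic-basis vector is maximally entangled, so I would write $\ket{e_k}=(\1_{\C^2}\otimes V_k)\ket{\Phi}$ with $\ket{\Phi}=\tfrac1{\sqrt2}(\ket{00}+\ket{11})$; reading off the columns of $U$ shows $V_1=\1_{\C^2}$, $V_2\propto\sigma_x$, $V_3\propto\sigma_y$, $V_4\propto\sigma_z$. Using the elementary identity $\tr_1\big(\proj{\Phi}(\rho\otimes\1_{\C^2})\big)=\tfrac12\rho^\top$ and the fact that $\tr_1$ commutes with unitaries acting on the second factor, one gets $\tr_1\big(\proj{e_k}(\rho\otimes\1_{\C^2})\big)=\tfrac12 V_k\rho^\top V_k^\dagger$, and therefore, since the phases of the $V_k$ cancel, $M(\rho)=\tfrac12\big(\mu_1\rho^\top+\mu_2\sigma_x\rho^\top\sigma_x+\mu_3\sigma_y\rho^\top\sigma_y+\mu_4\sigma_z\rho^\top\sigma_z\big)$. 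Expanding $\rho^\top$ in the Pauli basis and using $\sigma_i\sigma_j\sigma_i=\pm\sigma_j$ then yields the Bloch form $M(\rho)=\tfrac14\big[\tr(R^{-1})\,\1_{\C^2}+t_x\sigma_x+t_y\sigma_y+t_z\sigma_z\big]$, where $t_x=s_x(\mu_1+\mu_2-\mu_3-\mu_4)$, $t_y=s_y(\mu_1-\mu_2+\mu_3-\mu_4)$, $t_z=s_z(\mu_1-\mu_2-\mu_3+\mu_4)$, and $(s_x,s_y,s_z)$ is the Bloch vector of $\rho^\top$. Since transposition preserves positivity, trace and rank, this vector ranges over the whole unit ball as $\rho$ ranges over $\DD(\C^2)$, and over the unit sphere for pure $\rho$.

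From the Bloch form $M(\rho)$ is manifestly positive semidefinite, so its operator norm equals its largest eigenvalue, $\|M(\rho)\|_\infty=\tfrac14\big[\tr(R^{-1})+\sqrt{t_x^2+t_y^2+t_z^2}\big]$. Both optimisations thus reduce to minimising the length of $(t_x,t_y,t_z)$. For $p_0$ the choice $\rho=\tfrac12\1_{\C^2}$ gives $(s_x,s_y,s_z)=0$, hence length $0$ and $p_0(R)=4/\tr(R^{-1})$. For $p_1$, minimising $s_x^2a^2+s_y^2b^2+s_z^2c^2$ over the unit sphere, with $a=\mu_1+\mu_2-\mu_3-\mu_4$, $b=\mu_1-\mu_2+\mu_3-\mu_4$, $c=\mu_1-\mu_2-\mu_3+\mu_4$, gives $\min(|a|,|b|,|c|)$ and hence $p_1(R)=4/\big(\tr(R^{-1})+\min(|a|,|b|,|c|)\big)$. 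It only remains to match $\min(|a|,|b|,|c|)$ with the stated expression: $|c|$ is literally the first term, and setting $x=\mu_1-\mu_4$, $y=\mu_2-\mu_3$ so that $a=x+y$ and $b=x-y$, the elementary identity $\min(|x+y|,|x-y|)=\big||x|-|y|\big|$ turns $\min(|a|,|b|)$ into the second term.

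I expect the main obstacle to be the magic-basis computation in the second step: correctly identifying the operators $V_k$ (and checking that their phases cancel inside $V_k\rho^\top V_k^\dagger$) and keeping track of the transposition signs $\sigma_x^\top=\sigma_x$, $\sigma_y^\top=-\sigma_y$, $\sigma_z^\top=\sigma_z$ so that the three Bloch coefficients emerge with exactly the right combinations of the $\mu_k$. Once the sum-of-Pauli-conjugates form is in hand, positivity of $M(\rho)$ is immediate, and the remaining spherical minimisation together with the $\min(|x+y|,|x-y|)=\big||x|-|y|\big|$ identity is routine.
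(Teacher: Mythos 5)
Your argument is correct and reaches both formulas, but the central computation is organized differently from the paper's. The paper brute-forces the $2\times 2$ matrix $(\bra{x}\otimes\1_{\C^2})R^{-1}(\ket{x}\otimes\1_{\C^2})$ entrywise in terms of $x_0,x_1,\lambda_i$, reads off that its trace is the constant $\tfrac12\tr(R^{-1})$ (which gives $p_0$ exactly as you do, via the maximally mixed state), and then for $p_1$ explicitly minimizes the largest eigenvalue over the relative phase $\alpha$ and finally over $(|x_0|,|x_1|)$, identifying the two candidate optima $|x_0|\in\{0,1\}$ and $|x_0|=|x_1|=1/\sqrt2$ whose values produce the two terms inside the $\min$. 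You instead exploit the structure of the magic basis, writing $R^{-1}=\sum_k\mu_k\proj{e_k}$ with $\ket{e_k}=(\1_{\C^2}\otimes V_k)\ket{\Phi}$ and $V_k$ proportional to Pauli operators, so that $\tr_1(R^{-1}(\rho\otimes\1_{\C^2}))$ becomes a Pauli twirl of $\rho^\top$; the problem then collapses to minimizing a diagonal quadratic form $s_x^2a^2+s_y^2b^2+s_z^2c^2$ over the Bloch ball or sphere, which is immediate, and the identity $\min(|x+y|,|x-y|)=\bigl||x|-|y|\bigr|$ recovers the paper's two-term $\min$ without any case analysis of optima. Your route is cleaner and makes the role of the magic basis conceptually transparent (it explains \emph{why} the trace is constant and why the answer is controlled by the three Pauli coefficients), at the cost of the one structural observation $\tr_1\bigl(\proj{\Phi}(\rho\otimes\1_{\C^2})\bigr)=\tfrac12\rho^\top$; the paper's route is longer but entirely elementary. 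I verified your Bloch coefficients against the paper's explicit matrix and they agree, so there is no gap.
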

The proof of Proposition~\ref{prop-mixed-encoding} is presented in 
Appendix~\ref{proof-prop-mixed-encoding}. We can clearly see that in the case 
$\mathrm{rank}(R) = 4$, there are operators $R$, for which the mixed state 
encoding improves the probability of successful error correction over the pure 
state encoding, $p_0(R) > p_1(R)$. In general, the maximization problem 
in Eq.~\eqref{eq-example-opt-prob} intuitively supports the inequality $p_0(R) 
> 
p_1(R)$. The function $\rho \mapsto \|\tr_1\left(R^{-1} (\rho \otimes 
\1_{\C^2})\right)\|_\infty$ is convex, so it is possible, that the minimal 
value of it will be achieved for some mixed state $\rho$. 
We observed such behavior in Proposition~\ref{prop-mixed-encoding} for $R$ 
given in the spectral decomposition $R = U D(\lambda) U^\dagger$. The 
introduced family of noise channels is parameterized by a $4-$dimensional 
vector 
$\lambda 
= (\lambda_1, \ldots, 
\lambda_4)$, such that $\lambda_i 
\in (0,1]$. For almost all such $\lambda$ we have $p_0(R) > p_1(R)$. The only 
exception is 
the $3-$dimensional subset defined by the relation 
\begin{equation}
\frac{1}{\lambda_1} + 
\frac{1}{\lambda_4}= \frac{1}{\lambda_2} + 
\frac{1}{\lambda_3} \vee \left|\frac{1}{\lambda_1}-\frac{1}
{\lambda_4}\right| = \left|\frac{1}{\lambda_2}-
\frac{1}{\lambda_3}\right|,
\end{equation}
which describes the situation, when the pure state encoding match the mixed 
state encoding, $p_0(R) = p_1(R)$. In an extremal case, \eg for $\lambda = 
(\frac{1}{2N}, \frac{1}{2}, 
\frac{1}{2}, 
\frac{1}{2})$, $N \in \N$, we get $p_1(R) = \frac{1}{N + 1}$ and $p_0(R) = 
\frac{2}{N + 3}$. Especially, when $N \to \infty$ the mixed state encoding 
provides the advantage, $p_0(R) / p_1(R) \to 2$.

The family of parameters $R$ introduced in 
Proposition~\ref{prop-mixed-encoding} is not the 
only one for which the minimum value of $ \|\tr_1\left(R^{-1} 
(\rho\otimes \1_{\C^2})\right)\|_\infty$ is achieved for mixed state $\rho$. 
Let 
$R^{-1} \propto (\II_{\C^2} \otimes \Phi)(\proj{\1_{\C^2}})$ for some $\Phi \in 
\CC(\C^2)$. 
Then, $\|\tr_1\left(R^{-1} (\rho \otimes \1_{\C^2})\right)\|_\infty \propto \| 
\Phi(\rho^\top)\|_\infty$. Therefore, the value of $p_0(R)$ is one-to-one 
related with the 
maximum value of the output 
min-entropy of the channel $\Phi$ (see for instance \cite{muller2013quantum}). 
Especially, we can see, if the image of the 
Bloch 
ball under $\Phi$ is a three dimensional ellipsis and contains the maximally 
mixed state $\rho_2^*$ in its interior, 
then the mixed state encoding provides benefits.

Finally, the noise channel $\EE_R$ defined for $R$ from 
Proposition~\ref{prop-mixed-encoding} is perfectly 
correctable for $\XX = \C^2$ if and only if $R = \1_{\C^4}$. Interestingly, 
this 
suggests that 
perfectly 
correctable noise channels may constitute only a small subset of 
probabilistically 
correctable noise channels. This behavior will be the object of our 
investigation in 
the next section.

\section{Advantage of pQEC procedure}\label{sec-advantage}

The goal of this section is to show that pQEC procedure corrects a 
wider class of noise channels than the QEC procedure based on Knill-Laflamme 
conditions Eq.~\eqref{eq-txt-kl}. For any Euclidean spaces $\XX,\YY$ let 
us define two families of noise channels; these which are probabilistically 
correctable 
for $\XX$ as $\xi(\XX, \YY)$, and these which are correctable perfectly for 
$\XX$ as $\xi_1(\XX,\YY)$:
\begin{equation}
	\begin{split}
	\xi(\XX,\YY) &\coloneqq \{ \EE \in \CC(\YY): \, \, \exists_{(\SS, \RR) \in 
			s\CC(\XX,\YY) \times s\CC(\YY,\XX)} \,\,  0 \neq \RR \EE \SS 
			\propto 
			\II_\XX\},\\
	\xi_1(\XX,\YY) &\coloneqq \{ \EE \in \CC(\YY): \, \, \exists_{(\SS, \RR) 
	\in s\CC(\XX,\YY) \times s\CC(\YY,\XX)} \,\,  \RR \EE \SS = 
	\II_\XX \}.		
	\end{split}
\end{equation}
We begin our analysis with some observations.
\begin{proposition}\label{prop-prop}
For any $\XX$, $\YY$ we have the following 
properties:
\begin{enumerate}[(A)]
\item $\xi_1(\XX,\YY) \subset \xi(\XX,\YY),$
\item If $\dim(\XX) > \dim(\YY)$, then $\xi(\XX,\YY) = \emptyset,$
\item If $\dim(\XX) \le \dim(\YY)$, then $\xi_1(\XX,\YY) \neq \emptyset,$ 
\item If $ \dim(\XX) = \dim(\YY)$, then $\xi_1(\XX,\YY) = \xi(\XX,\YY).$
\end{enumerate}
\end{proposition}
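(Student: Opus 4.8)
The plan is to dispatch the four parts in order of increasing difficulty, using the equivalence $(A)\Leftrightarrow(D)$ of Theorem~\ref{thm-uqec-general} as the main engine for $(B)$ and $(D)$. Part $(A)$ is immediate: if $\RR\EE\SS=\II_\XX$ then $0\neq\RR\EE\SS=1\cdot\II_\XX\propto\II_\XX$, so the very same scheme witnesses $\EE\in\xi(\XX,\YY)$, giving $\xi_1(\XX,\YY)\subset\xi(\XX,\YY)$. For part $(C)$, with $\dim(\XX)\le\dim(\YY)$ I would simply exhibit one perfectly correctable channel: fix an isometry $V\in\MM(\XX,\YY)$ with $V^\dagger V=\1_\XX$ (which exists precisely because $\dim(\XX)\le\dim(\YY)$), take the noiseless channel $\EE=\II_\YY\in\CC(\YY)$, the encoding $\SS(X)=VXV^\dagger\in\CC(\XX,\YY)$, and the decoding $\RR(Y)=V^\dagger Y V$. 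The map $\RR$ is a valid subchannel since $VV^\dagger\le\1_\YY$ is a projector, hence $\RR$ is trace non-increasing; and $\RR\EE\SS(X)=V^\dagger V X V^\dagger V=X$, so $\II_\YY\in\xi_1(\XX,\YY)\neq\emptyset$.

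For part $(B)$ I would argue by contradiction. Condition $(D)$ of Theorem~\ref{thm-uqec-general} states that $\EE=\kraus{(E_i)_i}\in\xi(\XX,\YY)$ if and only if there exist $S_*\in\MM(\XX,\YY)$ and $R_*\in\MM(\YY,\XX)$ with $R_*E_iS_*\propto\1_\XX$ for all $i$ and $R_*E_{i_0}S_*\neq0$ for some $i_0$. So if $\EE\in\xi(\XX,\YY)$ then $R_*E_{i_0}S_*=c\1_\XX$ with $c\neq0$, whose left-hand side has rank $\dim(\XX)$. But this operator factors through $\YY$, so its rank is at most $\dim(\YY)$; when $\dim(\XX)>\dim(\YY)$ this is the desired contradiction, forcing $\xi(\XX,\YY)=\emptyset$.

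The substantive part is $(D)$. By $(A)$ it suffices to show $\xi(\XX,\YY)\subset\xi_1(\XX,\YY)$ when $\dim(\XX)=\dim(\YY)=:d$, and I would identify $\XX\cong\YY\cong\C^d$ so that every operator is a $d\times d$ matrix. Taking $\EE\in\xi(\XX,\YY)$ and the matrices $S_*,R_*$ supplied by condition $(D)$, the key observation is that $R_*E_{i_0}S_*=c\1_\XX$ is invertible and is a product of square matrices, so each of $R_*,E_{i_0},S_*$ is invertible. Writing $R_*E_iS_*=c_i\1_\XX$ and inverting gives $E_i=c_iT$ for every $i$, where $T:=R_*^{-1}S_*^{-1}$ is a fixed invertible matrix; that is, all Kraus operators are proportional to the same $T$. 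Hence $\EE(Y)=\sum_i|c_i|^2\,TYT^\dagger=\kappa\,TYT^\dagger$ with $\kappa=\sum_i|c_i|^2>0$, and trace preservation of $\EE$ forces $\kappa\,T^\dagger T=\1_\YY$, so $V:=\sqrt{\kappa}\,T$ is unitary and $\EE(Y)=VYV^\dagger$ is a unitary channel. A unitary channel is perfectly correctable: choosing any unitary encoding $\SS(X)=WXW^\dagger$ and decoding $\RR(Y)=(VW)^\dagger Y(VW)$ yields $\RR\EE\SS=\II_\XX$, so $\EE\in\xi_1(\XX,\YY)$.

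The hard part is thus $(D)$: the nontrivial step is realizing that equal dimensions upgrade the bare algebraic constraint of condition $(D)$ into full invertibility of $S_*$, $R_*$ and the relevant $E_i$, which collapses $\EE$ to a single-Kraus unitary channel; everything else is routine once Theorem~\ref{thm-uqec-general}$(D)$ is available. As a byproduct this argument shows the sharper fact that for $\dim(\XX)=\dim(\YY)$ both families coincide with the set of unitary channels.
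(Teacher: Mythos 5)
Your proof is correct and, for the only nontrivial part $(D)$, follows exactly the paper's argument: invoke Theorem~\ref{thm-uqec-general}$(D)$, observe that $R_*E_{i_0}S_*=c\1_\XX$ with $c\neq0$ forces the square matrices $R_*,S_*$ to be invertible, conclude $E_i\propto R_*^{-1}S_*^{-1}$ so that $\EE$ is a unitary channel, and correct it trivially. The paper omits the proofs of $(A)$--$(C)$ as routine; your treatments of them (same witness scheme for $(A)$, the rank-through-$\YY$ obstruction for $(B)$, and the isometric encoding of the identity channel for $(C)$) are all valid.
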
 
The proof of Proposition~\ref{prop-prop} is presented in 
Appendix~\ref{proof-prop-prop}. We see that if $\dim(\XX) = \dim(\YY)$, then 
there is no need to consider pQEC procedure. The situation changes if we 
encode the initial information into a larger space, $\dim(\YY) > \dim(\XX)$. In 
the following theorem, we will show that $\xi_1(\XX,\YY) \subsetneqq 
\xi(\XX,\YY)$ for $\dim(\YY) > \dim(\XX)$.  

\begin{theorem}\label{thm-nowhere}
Let $\XX$ and $\YY$ be Euclidean spaces for which $\dim(\XX) < \dim(\YY)$. 
Then, 
the set $\xi_1(\XX,\YY)$ is a nowhere dense subset of $\xi(\XX,\YY)$. 
\end{theorem}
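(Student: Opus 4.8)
The plan is to establish nowhere density in two independent steps: that $\xi_1(\XX,\YY)$ is closed in $\CC(\YY)$, and that its complement $\xi(\XX,\YY)\setminus\xi_1(\XX,\YY)$ is dense in $\xi(\XX,\YY)$. Throughout I identify channels with their Choi operators, so that $\CC(\YY)$, $\xi(\XX,\YY)$ and $\xi_1(\XX,\YY)$ are subsets of the finite-dimensional real space $\HH(\YY\otimes\YY)$ with the Euclidean topology. Since $\xi_1(\XX,\YY)$ will be shown to be closed, its closure equals itself, and the requirement ``empty interior of the closure relative to $\xi(\XX,\YY)$'' reduces exactly to density of the complement.

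For closedness I would use the standard fact that a channel $\Phi\in\CC(\XX,\YY)$ is perfectly correctable if and only if its complementary channel $\Phi^{c}$ is constant, i.e. $\Phi^{c}(\rho)$ does not depend on $\rho\in\DD(\XX)$. Fix the compact set $\mathcal I\subset\MM(\XX,\YY)$ of isometries. For $\EE\in\CC(\YY)$ and $S\in\mathcal I$ put $\Phi_{\EE,S}\coloneqq\EE\circ\mathrm{Ad}_{S}$, and let $g(\EE,S)$ be the Euclidean distance from $J(\Phi_{\EE,S}^{c})$ to the set $\{\sigma\otimes\1_\XX:\sigma\text{ a state}\}$ of Choi operators of constant channels. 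The function $g$ is continuous and nonnegative, and by the Knill--Laflamme theorem $\EE\in\xi_1(\XX,\YY)$ if and only if $g(\EE,S)=0$ for some $S$. Setting $G(\EE)\coloneqq\min_{S\in\mathcal I}g(\EE,S)$, compactness of $\mathcal I$ guarantees that the minimum is attained and that $G$ is continuous; hence $\xi_1(\XX,\YY)=\{G=0\}$ is closed.

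The core of the argument is density of the complement: given $\EE\in\xi_1(\XX,\YY)$ and $\varepsilon>0$ I must produce $\EE'\in\xi(\XX,\YY)$ with $\EE'\notin\xi_1(\XX,\YY)$ and $\|\EE'-\EE\|<\varepsilon$. I would build $\EE'$ by perturbing a Stinespring isometry of $\EE$ along a carefully chosen tangent direction, which automatically keeps $\EE'$ completely positive and trace preserving and close to $\EE$. Using $\dim(\YY)>\dim(\XX)$, there is room in $\YY$ to introduce an arbitrarily small state-dependent leakage into the environment while reserving a postselection projector $\Pi$ on $\YY$ such that the compressed map $\rho\mapsto\Pi\,\EE'(S\rho S^{\dagger})\,\Pi$ is still proportional to an isometric encoding of $\XX$. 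The latter certifies $\EE'\in\xi(\XX,\YY)$ directly through condition $(D)$ of Theorem~\ref{thm-uqec-general} (take $S_*=S$ and $R_*$ built from $\Pi$), while the leakage makes the complementary channel genuinely nonconstant, so perfect correctability fails. The family $\{\EE_R\}_R$ of Section~\ref{sec-example}, which is perfectly correctable for $\C^2$ exactly when $R=\1_{\C^4}$ yet probabilistically correctable for all invertible $R$, is precisely the local model realizing this perturbation around $\EE_{\1_{\C^4}}$.

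The main obstacle is the final ``perfect correctability fails'' claim: it is not enough that the perturbation destroys the Knill--Laflamme conditions for the original code $S$, since after perturbing some other code might become perfectly correctable. I would control this by working with the continuous functional $G$ from the closedness step: it suffices to choose the perturbation direction so that $G(\EE')>0$, i.e. so that no isometric code yields a constant complementary channel. I expect to verify this by a lower-semicontinuity/genericity argument, the constancy of $\Phi_{\EE',S}^{c}$ for some $S$ being an algebraic condition of positive codimension once $\dim(\YY)>\dim(\XX)$, so that a generic arbitrarily small leakage leaves $G$ strictly positive, with the explicit computation for $\{\EE_R\}_R$ serving as the concrete prototype guaranteeing that such directions exist.
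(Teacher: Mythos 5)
Your two-step plan (closedness of $\xi_1(\XX,\YY)$ plus density of the complement) is exactly the structure of the paper's proof, and your closedness argument is a workable alternative: the paper instead extracts convergent subsequences of the schemes $(\SS_n,\RR_n)$ from the compact sets $\CC(\XX,\YY)$ and $\CC(\YY,\XX)$ (using Proposition~\ref{prop-realization} to normalize them to channels) and passes to the limit. Your route through a continuous functional $G(\EE)=\min_{S}g(\EE,S)$ works too, but note one point you must address: the complementary channel is only defined up to an isometry on the environment, so for $g$ to be a well-defined continuous function of $\EE$ you need a canonical continuous choice of Stinespring dilation on a fixed environment (e.g.\ Kraus operators read off from the columns of $J(\EE\circ\mathrm{Ad}_S)^{1/2}$). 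With that fixed, $\{G=0\}$ closed is fine.

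The genuine gap is in the density step, and it is precisely at the obstacle you yourself flag. Your proposed resolution --- that ``constancy of $\Phi^c_{\EE',S}$ for some $S$ is an algebraic condition of positive codimension, so a generic small leakage leaves $G$ strictly positive'' --- is circular: the set of channels admitting \emph{some} isometric code with constant complementary channel is exactly $\xi_1(\XX,\YY)$, and the assertion that it has positive codimension (equivalently, empty interior) is the theorem you are trying to prove. The explicit family $\{\EE_R\}_R$ only certifies this near one particular channel for $\XX=\C^2$, $\YY=\C^4$; it does not supply perturbation directions around an arbitrary $\EE\in\xi_1(\XX,\YY)$ in arbitrary dimensions. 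The paper closes this gap with a concrete perturbation carrying a \emph{universal} obstruction that rules out all codes at once: writing $\EE\SS=\kraus{(A_i)_i}$ with $A_j^\dagger A_i\propto\delta_{ij}\1_\XX$ and choosing $\ket{y}$ with $\bra{y}A_1=0$ (possible since $\dim(\XX)<\dim(\YY)$), set $\EE_n=\frac{n}{n+1}\EE+\frac{1}{n+1}\tr(\cdot)\proj{y}$. Then $\RR_n=\kraus{(A_1^\dagger)}$ still corrects $\EE_n$ probabilistically, so $\EE_n\in\xi(\XX,\YY)$; but for \emph{any} scheme $(\widetilde\SS_n,\widetilde\RR_n)$ with trace-preserving encoding and $\widetilde\RR_n\EE_n\widetilde\SS_n=p\,\II_\XX$, the input-independent contribution $\frac{1}{n+1}\widetilde\RR_n(\proj{y})$ is a positive semidefinite operator dominated by $p\proj{\psi}$ for every $\ket{\psi}$, hence zero, which caps the success probability at $\frac{n}{n+1}<1$ and shows $\EE_n\notin\xi_1(\XX,\YY)$. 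You need either this kind of code-independent obstruction or an actual (non-circular) proof that $G(\EE')>0$ for your perturbation; as written, that step is missing.
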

The proof of Theorem~\ref{thm-nowhere} is presented in 
Appendix~\ref{proof-thm-nowhere}.

\subsection{Choi rank of correctable noise channels}

Intensity of a noise channel $\EE$ can be connected with its Choi rank $r = 
\mathrm{rank}(J(\EE))$. Given $\EE$ in the Stinespring form, the Choi rank 
describes the dimension of an environment system which unitarily interacts 
with the encoded information. If the interaction is the weakest ($r=1$) we 
deal with unitary noise channels, which are always perfectly correctable. The 
strongest 
interaction ($r = \dim(\YY)^2$) is a property of hardly correctable noise 
channels. 
For example, the maximally depolarizing channel $\EE(Y) = \tr(Y) \rho_{\YY}^*$, 
which can not be corrected, has the maximal Choi rank. In 
the following theorem, we investigate the maximum Choi rank of 
probabilistically correctable noise channels $\xi(\XX,\YY)$ and compare it with 
the 
maximum Choi rank for $\xi_1(\XX,\YY)$.
\begin{theorem}\label{thm-rank}
Let $\XX$ and $\YY$ be some Euclidean spaces such that $\dim(\YY) \ge 
\dim(\XX)$. 
The following relations hold:
\begin{equation}
\begin{array}{llll}
(A) && \max \left\{ \mathrm{rank}(J(\EE)): \EE \in \xi_1(\XX, \YY) \right\} &= 
\dim(\YY)^2 
	- \dim(\YY) \dim(\XX) + \floor{\frac{\dim(\YY)}{\dim(\XX)}},\\
(B) && \max \left\{ \mathrm{rank}(J(\EE)): \EE \in \xi(\XX, \YY) \right\} &= 
\dim(\YY)^2 
	- \dim(\XX)^2 + 1.
\end{array}
\end{equation}
\end{theorem}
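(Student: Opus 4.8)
The plan is to prove both parts by combining the characterization in Theorem~\ref{thm-uqec-general}$(D)$ with a dimension count on the relevant operator algebras, and to realize the extremal bounds by explicit constructions. Throughout, write $d = \dim(\XX)$ and $s = \dim(\YY)$. The key structural fact is that a noise channel $\EE = \kraus{(E_i)_{i=1}^r}$ lies in $\xi(\XX,\YY)$ iff there exist $S_* \in \MM(\XX,\YY)$ and $R_* \in \MM(\YY,\XX)$ with $R_* E_i S_* \propto \1_\XX$ for all $i$, not all zero. A convenient reformulation is that the Choi rank $r$ equals the number of linearly independent Kraus operators, so bounding $r$ from above amounts to bounding the dimension of the space $\SPAN\{E_i\}$ consistent with correctability, and bounding $r$ from below (to show the maximum is attained) amounts to constructing a correctable channel with many independent Kraus operators.

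For part $(B)$, I would fix the encoding $S_*$ as a fixed isometry onto a $d$-dimensional subspace and the recovery $R_*$; after applying Proposition~\ref{prop-realization}$(A)$ I may assume $\SS$ is a channel, and after a change of basis I may assume $S_* = \begin{bmatrix} \1_\XX \\ 0 \end{bmatrix}$ embeds $\XX$ as the first $d$ coordinates of $\YY$ and $R_* = \begin{bmatrix} \1_\XX & 0 \end{bmatrix}$ is the adjoint projection. The condition $R_* E_i S_* \propto \1_\XX$ then says the top-left $d \times d$ block of each $E_i$ is a scalar multiple of identity, which is one linear constraint beyond the $d^2-1$ off-diagonal/traceless constraints inside that block. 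Counting: the space of $s \times s$ matrices whose top-left $d\times d$ block is a scalar multiple of $\1_\XX$ has dimension $s^2 - (d^2 - 1) = s^2 - d^2 + 1$. Since the $E_i$ must be linearly independent and lie in this space (up to the freedom in choosing $S_*, R_*$), we get $r \le s^2 - d^2 + 1$, which is the upper bound. For the reverse inequality, I would exhibit a specific channel: take Kraus operators spanning exactly this constrained space, normalized so that $\sum_i E_i^\dagger E_i = \1_\YY$, and verify via Theorem~\ref{thm-uqec-general}$(D)$ that it is probabilistically correctable with Choi rank equal to the bound. The main subtlety is checking that the normalization $\sum_i E_i^\dagger E_i = \1_\YY$ can be imposed without reducing the number of independent Kraus operators, and that the correctability witness $(S_*,R_*)$ survives.

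For part $(A)$, the constraint is stronger: perfect correctability forces $\RR\EE\SS = \II_\XX$, so by Proposition~\ref{prop-realization}$(C)$ I may take $\SS$ isometric and $\RR$ a genuine channel. Here the Knill--Laflamme conditions Eq.~\eqref{eq-txt-kl} apply directly: $S^\dagger E_j^\dagger E_i S \propto \1_\XX$ for an isometry $S$. I would count the dimension of the space of matrices $E$ satisfying the induced block structure: writing $E S$ in terms of its action on the $d$-dimensional code subspace and its complement, the Knill--Laflamme condition constrains the $d \times d$ diagonal blocks to be scalar, giving a correctable algebra whose maximal rank is governed by how many orthogonal ``syndrome'' subspaces of dimension $d$ fit inside $\YY$, namely $\floor{s/d}$, together with the off-diagonal freedom $s^2 - s\cdot d$ coming from the $s \times (s-d)$ off-code block. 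Assembling these contributions yields $s^2 - s d + \floor{s/d}$. The reverse construction would partition $\YY$ into $\floor{s/d}$ orthogonal copies of $\XX$ plus a leftover, choosing Kraus operators that permute among these copies in a way compatible with Eq.~\eqref{eq-txt-kl}.

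The hard part will be the lower bounds (attainability) in both parts rather than the upper bounds. The upper bounds are essentially linear-algebra dimension counts once the normal forms for $S_*, R_*$ are fixed, but constructing explicit channels that simultaneously (i) have the maximal number of linearly independent Kraus operators, (ii) satisfy the trace-preserving normalization $\sum_i E_i^\dagger E_i = \1_\YY$, and (iii) pass the correctability test of Theorem~\ref{thm-uqec-general}$(D)$ (resp.\ the Knill--Laflamme test for part $(A)$) requires care. In particular, for part $(A)$ the interplay between the floor function $\floor{s/d}$ and the requirement that the syndrome subspaces be mutually orthogonal and exactly $d$-dimensional is the delicate combinatorial point; I expect the construction to reduce to choosing a collection of partial isometries indexed by these subspaces and verifying orthogonality of the resulting $E_j^\dagger E_i$ against $\1_\XX$ by direct computation.
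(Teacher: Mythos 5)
Your overall skeleton matches the paper's: upper bounds via Theorem~\ref{thm-uqec-general}$(D)$ plus a linear-algebra count, lower bounds via explicit channels. Your count for the upper bound in $(B)$ is correct and in fact slightly cleaner than the paper's: writing $R_*=AP$ and $S_*=QB$ with $A,B$ invertible and $P,Q$ rank-$d$ partial isometries, the condition $R_*E_iS_*\propto\1_\XX$ forces every canonical Kraus operator into the $(s^2-d^2+1)$-dimensional preimage of a line under the surjective map $E\mapsto PEQ$, whence $r\le s^2-d^2+1$ by linear independence. (Your normal form with $R_*$ the adjoint projection of $S_*$ is not justified --- the supports of $R_*$ and $S_*$ need not coincide, and the proportionality target is a general invertible matrix rather than $\1_\XX$ --- but this does not change the dimension of the constrained subspace. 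The paper argues differently, showing that exactly one of the operators $A_i=\sqrt{R_*^\dagger R_*}E_i'S_*$ is nonzero and counting the kernel of $\sqrt{R_*^\dagger R_*}\otimes S_*^\top$.)

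There are two genuine gaps. First, in part $(A)$ the quantity $s^2-sd+\floor{s/d}$ is not the dimension of any subspace containing the Kraus operators; it is the sum of two separate counts, and you do not justify why they may be added. The missing step is to pass (via Theorem~\ref{thm-uqec-general}) to a Kraus representation $(E_i')_i$ with $E_i'S=A_i$, where the nonzero $A_i$ are proportional to isometries with mutually orthogonal $d$-dimensional ranges --- hence at most $\floor{s/d}$ of them --- while the remaining $E_i'$ lie in $\ker(E\mapsto ES)$, of dimension $s^2-sd$; only then does a rank--nullity argument give $r\le(s^2-sd)+\floor{s/d}$. A priori all $r$ canonical Kraus operators could act nontrivially on the code subspace, so ``assembling the contributions'' is exactly the point that needs proof. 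Second, your extremal construction for $(A)$ --- partial isometries permuting $\floor{s/d}$ orthogonal copies of the code --- only produces Choi rank $\floor{s/d}$. The dominant term $s^2-sd$ must come from a maximally noisy component supported off the code subspace; the paper adds the term $\tr\left((\1_\YY-\Pi)Y\right)\rho_{\YY}^*$, which by itself contributes $s(s-d)$ to the Choi rank, and without something of this kind your construction falls far short of the claimed maximum. (Your construction sketch for $(B)$ --- a basis of the constrained subspace renormalized by $T^{-1/2}$ with $T=\sum_iE_i^\dagger E_i$ --- can be completed, since the witness transforms as $(S_*,R_*)\mapsto(T^{1/2}S_*,R_*)$ and any basis of a $(s^2-d^2+1)$-dimensional subspace must contain an element with $PEQ\neq0$, but you leave this unverified; the paper instead exhibits a concrete channel.)
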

The proof of Theorem~\ref{thm-rank} is presented in 
Appendix~\ref{proof-thm-rank}. In Proposition~\ref{prop-prop} we showed 
that if $\dim(\XX) = \dim(\YY)$, then the pQEC procedure gives us no 
advantage. Indeed, the only reversible noise channels, 
in this case, are unitary noise channels. In the language of Choi 
rank, that means, if the Choi rank of a noise channel is equal to one, 
then it can be corrected. We can ask, what is the maximum value of $r \in \N$, 
such that all noise channels which Choi rank is less or equal $r$, can be 
corrected 
perfectly or probabilistically, respectively. Formally speaking, for any $\XX$ 
and $\YY$ we define the following quantities:
\begin{equation}\label{eq-def-r}
	\begin{array}{lrll}
		&r_1(\XX,\YY) \coloneqq & \max \left\{r \in \N: \quad \forall_{\EE \in 
		\CC(\YY)} 
		\,\,  
		\mathrm{rank}(J(\EE)) \le r \implies \EE \in \xi_1(\XX, \YY) \right\} 
		,\\
		&r(\XX,\YY) \coloneqq & \max \left\{r \in \N : \quad \forall_{\EE \in 
		\CC(\YY)} \,\,  
		\mathrm{rank}(J(\EE)) \le r \implies \EE \in \xi(\XX, \YY) \right\}.
	\end{array}
\end{equation}
The quantity $r_1(\XX, \YY)$ for a general noise model was studied in 
\cite{knill2000theory, chiribella2011quantum}. The authors of 
\cite{knill2000theory} calculated a lower bound for $r_1(\XX, \YY)$ by using a 
technique of noise 
diagonalization along with Tverberg's theorem. They obtained the following 
result
	\begin{equation}
		\max\left\{r \in \N: \dim(\XX) \le \frac{\left\lceil 
		\frac{\dim(\YY)}{r^2} 
		\right\rceil+r^2}{r^2+1}\right\} \le r_1(\XX,\YY).
	\end{equation}
It implies that $\left\lfloor \sqrt[4]{\frac{\dim(\YY)}{\dim(\XX)}} 
\right\rfloor \le 
r_1(\XX,\YY).$ On the other 
hand, by using the Quantum packing bound \cite{chiribella2011quantum} we may 
gain some insight of the upper bound for $r_1(\XX, \YY)$. If we assume that we 
are allowed to use only non-degenerated codes, then for perfectly correctable 
$\EE$ we have a bound of the form $\mathrm{rank}(J(\EE)) \le 
\frac{\dim(\YY)}{\dim(\XX)}$. 
In the next 
part of this section, we will improve the upper bound of $r_1(\XX,\YY)$ 
without putting any additional assumptions. We 
also will estimate the behavior of $r(\XX, 
\YY)$. In the particular case $\XX = \C^2$ and $\YY=\C^4$, we will also show 
that $r_1(\XX, \YY) < r(\XX, \YY)$.

Let us start with the following simple, but important properties, required to 
study $r(\XX,\YY)$.  We will notice, that for a constant Choi rank of the 
noise, it is easier to construct error-correcting scheme, if the dimension of 
$\YY$ is large.

\begin{lemma}\label{lem-weird-channels}
Let $\XX, \YY$ be Euclidean spaces. Define $Q \in \MM(\YY)$ such that $0 < 
Q\le \1_\YY$. Take $\EE \in \CC(\YY)$ and $\FF \in s\CC(\YY)$ given by $\FF(Y) 
= \EE(QYQ)$. Then, $\EE \in \xi(\XX, \YY)$ if and only if there exists a scheme 
${(\SS, \RR) \in s\CC(\XX,\YY) \times s\CC(\YY,\XX)}$ such that $0 \neq \RR \FF 
\SS \propto \II_\XX$.
\end{lemma}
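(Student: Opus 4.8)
The plan is to exploit the invertibility of $Q$ — guaranteed by $0 < Q$ — to translate error-correcting schemes back and forth between $\EE$ and $\FF$, keeping the decoding map $\RR$ fixed and only reshaping the encoding. The first thing I would isolate is a controlled way in which conjugation preserves subchannels: if $A \in \MM(\YY)$ satisfies $A^\dagger A \le \1_\YY$ and $\Phi \in s\CC(\XX,\YY)$, then the map $X \mapsto A \Phi(X) A^\dagger$ is again a subchannel, since it is manifestly completely positive and $\tr(A \Phi(\rho) A^\dagger) = \tr(A^\dagger A \, \Phi(\rho)) \le \tr(\Phi(\rho)) \le 1$ for every $\rho \in \DD(\XX)$. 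I would also record the elementary fact that $0 < Q \le \1_\YY$ forces $Q^2 \le \1_\YY$.

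For the ``if'' direction, suppose $(\SS,\RR)$ satisfies $0 \neq \RR \FF \SS \propto \II_\XX$. I would set $\SS''(X) \coloneqq Q \SS(X) Q$. Since $Q^\dagger Q = Q^2 \le \1_\YY$, the observation above shows $\SS'' \in s\CC(\XX,\YY)$. Because $\FF(Y) = \EE(QYQ)$, we have $\EE \SS''(X) = \EE(Q \SS(X) Q) = \FF \SS(X)$, hence $\RR \EE \SS'' = \RR \FF \SS$, which is nonzero and proportional to $\II_\XX$. Thus $(\SS'', \RR)$ witnesses $\EE \in \xi(\XX,\YY)$, and this direction is essentially automatic.

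For the ``only if'' direction, suppose $\EE \in \xi(\XX,\YY)$, so there is $(\SS,\RR)$ with $0 \neq \RR \EE \SS \propto \II_\XX$. Here I would undo the conjugation by setting $\SS'(X) \coloneqq c \, Q^{-1} \SS(X) Q^{-1}$ for a constant $c>0$ to be fixed below. Then $Q \SS'(X) Q = c \SS(X)$, so $\FF \SS'(X) = \EE(c \SS(X)) = c \, \EE \SS(X)$ and therefore $\RR \FF \SS' = c \, \RR \EE \SS \propto \II_\XX$ is nonzero.

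The single point that needs care — and the only real obstacle — is that $\SS'$ must remain a subchannel. In this direction the relevant operator is $A = \sqrt{c}\, Q^{-1}$ with $A^\dagger A = c\, Q^{-2}$, and since $Q^{-1} \ge \1_\YY$ this is \emph{not} bounded by $\1_\YY$ for $c=1$. This is exactly where the proportionality (rather than strict equality) built into the definition of $\xi(\XX,\YY)$ is used: I would choose $c$ small enough, for instance $c = \lambda_{\min}(Q)^2$, so that $c\, Q^{-2} \le \1_\YY$ and hence $\SS' \in s\CC(\XX,\YY)$ by the preliminary observation. The rescaling is harmless because correctability only requires $\RR \FF \SS' \propto \II_\XX$ with a nonzero proportionality constant, which $c>0$ preserves.
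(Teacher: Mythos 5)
Your proof is correct and complete: both directions reduce to absorbing or undoing the conjugation by $Q$ into the encoding map, with the rescaling $c=\lambda_{\min}(Q)^2$ correctly handling the only delicate point, namely that $Q^{-1}\SS(\cdot)Q^{-1}$ need not be trace non-increasing without it. The paper states this lemma without proof, and your argument is exactly the intended one that the proportionality (rather than equality) in the definition of $\xi(\XX,\YY)$ is designed to permit.
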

Directly from Lemma~ \ref{lem-weird-channels} we receive the monotonicity 
of $r(\XX, \YY)$ w.r.t. the dimension of $\YY$. Let $\YY, \YY'$ be such 
Euclidean 
spaces that $\dim(\YY) \le \dim(\YY')$. Take $\EE = \kraus{(E_i)_i} \in 
\CC(\YY')$. There exist two projectors 
$\Pi_1, \Pi_2 \in \PP(\YY')$, such that $\mathrm{rank}(\Pi_1) = 
\mathrm{rank}(\Pi_2) = \dim(\YY)$ and for $ \FF = \kraus{(\Pi_2 E_i 
\Pi_1)_i}$ we have $\mathrm{rank}(\tr_1(J(\FF))) = \dim(\YY)$. Hence, if there 
exists a scheme $(\SS, \RR)$ such that $0 \neq \RR \FF \SS \propto \II_\XX$, 
then $\EE \in \xi(\XX, \YY')$. Eventually, we have
\begin{equation}
r(\XX,\YY) \le r(\XX,\YY').
\end{equation}

\subsection{Schur noise channels}
In this subsection, we restrict our attention to a particular family of 
noise channels whose Kraus operators are diagonal in the computational basis. 
In the 
literature, these channels are  referred to as Schur channels 
\cite[Theorem 4.19]{watrous2018theory}. We use them to study 
an upper bound for $r(\XX, \YY)$ and $r_1(\XX, \YY)$.
\begin{lemma}\label{lem-rank-bounds}
	Let $\XX$ and $\YY$ be Euclidean spaces such that $\dim(\YY) \ge 
	\dim(\XX)$. Then, there exists a Schur channel $\EE \in \CC(\YY)$ such that 
	$\mathrm{rank}(J(\EE)) = \left\lceil \frac{\dim(\YY)}{\dim(\XX) - 1} 
	\right\rceil$ 
	and $\EE \not\in \xi(\XX, 
	\YY)$. 
	Moreover, there exists a Schur channel $\FF \in \CC(\YY)$ such that 
	$\mathrm{rank}(J(\FF)) = \left\lceil\sqrt{\left\lceil 
	\frac{\dim(\YY)}{\dim(\XX) - 1} 
	\right\rceil}\right\rceil$ and $\FF \not\in \xi_1(\XX, 
	\YY)$. Especially, that implies
	\begin{equation}
	\begin{split}
		r(\XX,\YY) &< \frac{\dim(\YY)}{\dim(\XX) - 1},\\
		r_1(\XX,\YY) &< \sqrt{\frac{\dim(\YY)}{\dim(\XX)-1}}.
	\end{split}
	\end{equation}
\end{lemma}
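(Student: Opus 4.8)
The plan is to prove both halves by an explicit construction of Schur channels adapted to a partition of the computational basis of $\YY$, and then to read off the two inequalities. Set $d=\dim(\XX)\ge 2$ and $s=\dim(\YY)$, and write $K=\left\lceil\frac{s}{d-1}\right\rceil$. First I would partition the index set $\{1,\dots,s\}$ of the computational basis of $\YY$ into $K$ blocks $B_1,\dots,B_K$, each of cardinality at most $d-1$; this is possible precisely because $K(d-1)\ge s$. For the first channel I take the block projectors $\Pi_{B_m}=\sum_{l\in B_m}\ketbra{l}{l}$ and set $\EE=\kraus{(\Pi_{B_m})_{m=1}^K}$. These Kraus operators are diagonal (so $\EE$ is a Schur channel), mutually orthogonal and nonzero, hence canonical, giving $\mathrm{rank}(J(\EE))=K$. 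To see $\EE\notin\xi(\XX,\YY)$ I invoke condition $(D)$ of Theorem~\ref{thm-uqec-general}: any witness $S_*\in\MM(\XX,\YY)$, $R_*\in\MM(\YY,\XX)$ must satisfy $R_*\Pi_{B_m}S_*\propto\1_\XX$ for every $m$, yet $\mathrm{rank}(R_*\Pi_{B_m}S_*)\le\mathrm{rank}(\Pi_{B_m})=|B_m|\le d-1<d$, so each such product must vanish, contradicting the nondegeneracy clause $R_*E_{i_0}S_*\neq 0$. Thus a channel of Choi rank $K$ is not probabilistically correctable, and $r(\XX,\YY)<K\le s/(d-1)$ since $K-1<s/(d-1)$.

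For the second channel I keep the same partition but make the diagonals block-constant. Put $r=\left\lceil\sqrt{K}\right\rceil$ and choose unit vectors $v_1,\dots,v_K\in\C^r$ whose rank-one projectors $v_mv_m^\dagger$ are linearly independent in $\HH(\C^r)$; this is possible because $\dim_\R\HH(\C^r)=r^2\ge K$ and such projectors span $\HH(\C^r)$. Define diagonal operators $F_i=\diag(\mathbf{a}_i)$ for $i=1,\dots,r$ by $(\mathbf{a}_i)_l=(v_{m(l)})_i$, where $m(l)$ denotes the block containing $l$, and set $\FF=\kraus{(F_i)_{i=1}^r}$. Trace preservation $\sum_i F_i^\dagger F_i=\1_\YY$ reduces to $\|v_{m(l)}\|^2=1$, which holds. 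Because $K>(r-1)^2$, linear independence of the projectors $v_mv_m^\dagger$ forces $\mathrm{span}\{v_m\}=\C^r$, and this in turn makes the $F_i$ linearly independent, so $\mathrm{rank}(J(\FF))=r=\left\lceil\sqrt{K}\right\rceil$.

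To show $\FF\notin\xi_1(\XX,\YY)$ I use the Knill-Laflamme conditions Eq.~\eqref{eq-txt-kl} together with linearity. Since $(F_j^\dagger F_i)_{ll}=(v_{m(l)}v_{m(l)}^\dagger)_{ij}$, the diagonal of every element of $\mathrm{span}\{F_j^\dagger F_i\}_{i,j}$ is block-constant; pairing against the dual basis of the independent family $\{v_mv_m^\dagger\}$ shows that each block projector $\Pi_{B_m}$ lies in $\mathrm{span}\{F_j^\dagger F_i\}$. If $\FF$ were perfectly correctable there would be an isometry $S$ with $S^\dagger F_j^\dagger F_iS\propto\1_\XX$ for all $i,j$, hence $S^\dagger\Pi_{B_m}S\propto\1_\XX$ by linearity; but $\mathrm{rank}(S^\dagger\Pi_{B_m}S)\le|B_m|\le d-1<d$ forces $S^\dagger\Pi_{B_m}S=0$, and summing over $m$ gives $S^\dagger S=0$, contradicting $S^\dagger S=\1_\XX$. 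Hence $r_1(\XX,\YY)<r$. Finally, $r=\left\lceil\sqrt{K}\right\rceil<\sqrt{K}+1$ gives $(r-1)^2<K$, and since $(r-1)^2$ is an integer we get $(r-1)^2\le K-1<s/(d-1)$, so $r_1(\XX,\YY)\le r-1<\sqrt{s/(d-1)}$.

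I expect the main obstacle to be the construction behind $\FF$, namely balancing two competing demands: the products $\{F_j^\dagger F_i\}$ must span a space large enough to recover all $K$ block projectors, which needs $r^2\ge K$, while the Choi rank must stay at the minimal value $r=\left\lceil\sqrt{K}\right\rceil$, which needs the projectors $v_mv_m^\dagger$ independent yet the $v_m$ spanning $\C^r$. Verifying that a configuration of unit vectors realizing both exists, and that the counting $(r-1)^2<K\le r^2$ makes the final real-number inequalities strict, is the delicate part; once the partition is in place the rank obstruction itself is immediate from condition $(D)$ and from Eq.~\eqref{eq-txt-kl}.
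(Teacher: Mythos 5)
Your proof is correct and follows essentially the same route as the paper's: the same partition of the computational basis into blocks of size at most $d-1$ with block projectors as Kraus operators for the $\xi(\XX,\YY)$ bound, and the same block-constant diagonal construction (the paper's $F_i=\sum_a\braket{\psi_a}{i}E_a$ with linearly independent rank-one projectors $\proj{\psi_a}$) combined with the same dual-functional extraction of the block projectors and the rank-$(d-1)$ obstruction for the $\xi_1(\XX,\YY)$ bound. One tiny slip: the intermediate claim $K\le s/(d-1)$ is false in general since $K=\left\lceil s/(d-1)\right\rceil\ge s/(d-1)$, but the justification you actually use, namely $r(\XX,\YY)\le K-1<s/(d-1)$, is the correct one and gives the stated strict inequality.
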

The proof of Lemma~\ref{lem-rank-bounds} is presented in 
Appendix~\ref{proof-lem-rank-bounds}. The bounds obtained in 
Lemma~\ref{lem-rank-bounds} are asymptotically tight for Schur noise 
channels with $\dim(\YY) \to \infty$. To prove the tightness of the bound 
for perfectly correctable noise channels, we may use the construction provided 
in \cite{knill2000theory}. Hence, if we take a Schur channel $\EE = 
\kraus{(E_i)_i} \in \CC(\YY)$, such that $\mathrm{rank}(J(\EE)) 
\approx \sqrt{\frac{\dim(\YY)}{\dim(\XX)-1}}$, we obtain $\EE \in 
\xi_1(\XX, \YY)$. In the following proposition we will prove the tightness for 
probabilistically correctable Schur noise channels.

\begin{proposition}\label{prop-diag}
	Let $\XX$ and $\YY$ be Euclidean spaces and $\dim(\XX) \le \dim(\YY)$. For 
	any Schur channels $\EE \in \CC(\YY)$, such that $\mathrm{rank}(J(\EE)) < 
	\frac{\dim(\YY)}{\dim(\XX) - 1}$, it 
	holds $\EE \in \xi(\XX, \YY)$.
\end{proposition}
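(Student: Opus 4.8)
The plan is to verify condition $(D)$ of Theorem~\ref{thm-uqec-general}, exploiting the diagonal structure of Schur channels. First I would pass to the canonical Kraus representation $\EE = \kraus{(E_i)_{i=1}^r}$ with $r = \mathrm{rank}(J(\EE))$; since $J(\EE)$ is supported on $\mathrm{span}\{\ket{kk}\}$, the canonical Kraus operators are diagonal, $E_i = \diag(v_i)$ with $v_i \in \C^{\dim(\YY)}$. Writing $s = \dim(\YY)$, $d = \dim(\XX)$ and collecting the diagonals into the $r\times s$ matrix $V$ with columns $u_k = (v_1(k),\dots,v_r(k))^\top \in \C^r$, linear independence of the canonical $E_i$ gives $\mathrm{rank}(V) = r$, so the profiles $\{u_k\}_k$ span $\C^r$.

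Next I would look for $S_* \in \MM(\XX,\YY)$ and $R_*\in\MM(\YY,\XX)$ of the localized form $S_* = \sum_b \ket{\phi_b}\bra{b}$, $R_* = \sum_a \ket{a}\bra{\psi_a}$, for which $R_* E_i S_* = \sum_{a,b}\big(\sum_k \overline{\psi_a(k)} v_i(k)\phi_b(k)\big)\ketbra{a}{b}$. The key reduction is that it suffices to find pairwise disjoint index sets $T_1,\dots,T_d\subseteq\{0,\dots,s-1\}$ and a common nonzero vector $c = (c_i)_i\in\C^r$ with $c\in\mathrm{span}\{u_k : k\in T_a\}$ for every $a$: writing $c = \sum_{k\in T_a} g_a(k)u_k$ and choosing $\psi_a$ supported on $T_a$ with $\psi_a(k)=1$ and $\phi_b(k)=g_b(k)$ on $T_b$, disjointness kills the off-diagonal terms while the diagonal ones all equal $c_i$, so $R_* E_i S_* = c_i \1_\XX$. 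Since $c\neq 0$ some $c_{i_0}\neq 0$, giving $R_* E_{i_0} S_*\neq 0$, so condition $(D)$ holds and $\EE\in\xi(\XX,\YY)$.

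Thus the whole statement reduces to the combinatorial claim: any $u_1,\dots,u_s$ spanning $\C^r$ with $s\ge r(d-1)+1$ admit $d$ pairwise disjoint blocks whose spans share a common nonzero vector. I would prove this by induction on $r$ using the matroid base-packing theorem, which states that the linear matroid of the $u_k$ contains $t$ disjoint bases iff $|E\setminus A|\ge t\,(r - \mathrm{rank}(A))$ for all $A$. If the packing number satisfies $t\ge d-1$, I extract $d-1$ disjoint bases, each spanning $\C^r$, occupying $r(d-1)<s$ coordinates; I then take any leftover index $k^*$, set $c=u_{k^*}\neq 0$ and let the $d$-th block be $\{k^*\}$. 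If instead $t\le d-2$, the theorem yields a set $A$ of rank $\rho<r$ with $|E\setminus A|\le (d-1)(r-\rho)-1$, hence $|A|\ge (d-1)\rho+2\ge \rho(d-1)+1$; the $u_k$ with $k\in A$ span the $\rho$-dimensional space $\mathrm{span}(A)$, so the induction hypothesis applied inside $\mathrm{span}(A)$ produces $d$ disjoint blocks in $A$ with a common nonzero $c\in\mathrm{span}(A)\subseteq\C^r$. The base case $r=1$ is immediate since $s\ge d$ nonzero scalars give $d$ singleton blocks.

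The main obstacle is precisely this combinatorial lemma: a naive ``$d-1$ full bases plus a singleton'' argument fails when the profiles are degenerate (for instance many parallel $u_k$), because one cannot always pack $d-1$ disjoint spanning sets. The tight-set dichotomy above is what resolves it, since degeneracy forces a large low-rank subfamily $A$ on which the same problem recurs with strictly smaller dimension. I also expect the threshold $r(d-1)+1$ to be sharp, witnessed by profiles on a moment curve (any $r$ of them independent), where each of the $d-1$ nontrivial blocks genuinely needs $r$ coordinates; this matches the converse bound established in Lemma~\ref{lem-rank-bounds}.
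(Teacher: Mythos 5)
Your proof is correct in substance, but it takes a genuinely different route from the paper's. Both arguments verify condition $(D)$ of Theorem~\ref{thm-uqec-general} with $S_*,R_*$ built from $d=\dim(\XX)$ disjoint diagonal blocks of $\{0,\dots,s-1\}$ whose ``profile spans'' contain a common vector $c$; the difference lies in how those blocks are produced. The paper recurses on $\dim(\XX)$: it peels off a single block $\Pi$ of size equal to the current Choi rank on which the restricted diagonals are linearly independent, solves the problem for $\XX'=\C^{\dim(\XX)-1}$ on the complement (where the Choi rank may drop, which is what keeps the greedy step alive), and then matches the constants $c_i$ returned by the recursive call on the new block by solving an invertible $r\times r$ linear system $\bra{r}\Pi E_i\Pi\ket{s}=c_i$. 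You instead isolate the clean combinatorial statement --- $s\ge r(d-1)+1$ nonzero vectors spanning $\C^r$ admit $d$ disjoint blocks with a common nonzero vector in all their spans --- and prove it by induction on $r$ via Edmonds' matroid base-packing theorem, using the tight-set dichotomy to recurse into a low-rank subfamily when $d-1$ disjoint bases cannot be packed. Your route buys a reusable lemma, a transparent explanation of \emph{why} degenerate profiles are not an obstruction, and a sharpness discussion matching Lemma~\ref{lem-rank-bounds}; the paper's recursion is more elementary (no matroid machinery) and meshes directly with the channel-theoretic statement, since the rank drop is absorbed by the hypothesis $\mathrm{rank}(J(\FF))\le r$ at the next level.

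One small repair is needed: as literally stated, your combinatorial claim is false if zero columns are allowed (take $r=1$, $d=2$, $u_1=1$, $u_2=0$: any block avoiding index $1$ has span $\{0\}$), and Case 1 of your induction silently assumes the leftover $u_{k^*}$ is nonzero. You should add the hypothesis that all $u_k\neq 0$; in the application this is automatic because $\sum_i E_i^\dagger E_i=\1_\YY$ forces every column $u_k$ to be a unit vector, and nonzeroness is inherited by the subfamily $A$ in Case 2, so the induction goes through unchanged.
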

The proof of Proposition~\ref{prop-diag} is presented in 
Appendix~\ref{proof-prop-diag}. In the case of Schur channels we have a
clear separation between probabilistically and perfectly correctable noise 
channels. 

\subsection{From bi-linear to linear problem}

In general, the difficulty of finding error-correcting schemes $(\SS, 
\RR)$ comes from  bi-linearity of the problem Eq.~\eqref{eq-thm-1d}. However, 
there is a 
particular class of noise channels, for which we can easily rewrite the 
bi-linear 
problem as a linear one. In this subsection, we will 
focus our attention on noise channels $\EE \in 
\CC(\YY)$, such that $\mathrm{rank}(\EE(\1_\YY)) = \dim(\XX)$. Note, that this 
assumption implies $\dim(\XX) \mathrm{rank}(J(\EE)) \ge \dim(\YY)$. 

Let $\EE = \kraus{(E_i)_i}$ and let $\Pi$ be the projector on the image of 
$\EE(\1_\YY)$. Consider an associated channel $\FF = \kraus{(F_i)_i} =  
\kraus{(V_{\Pi}^\dagger 
E_i)_i} \in \CC(\YY, \XX)$, where $V_{\Pi} \in \MM(\XX, 
\YY)$ is an isometry operator with the image on the subspace defined by $\Pi$. 
It is clear that $\EE$ is probabilistically correctable for a given space 
$\XX$ if and only if there exists a scheme $(\SS, \RR)$, such that 
$0\neq\RR\FF\SS \propto \II_\XX$. Hence, according to Theorem 
\ref{thm-uqec-general} we need to find $S_* \in 
\MM(\XX, \YY), R_* \in \MM(\XX)$, such that $R_* F_i S_* = c_i \1_\XX$ and 
$c_{i_0} \neq 0$ for some $i_0$. Interestingly, we can combine together an 
action of $S_*, R_*$ as just the action of some pre-processing $S_*' \in 
\MM(\XX, \YY)$, that is 
\begin{equation}
\begin{split}
R_* F_i S_* = c_i \1_\XX \iff  F_i S_* R_* = c_i \1_\XX \iff F_i S_*' = c_i
\1_\XX.
\end{split}
\end{equation}
Therefore, we obtained a linear problem equivalent to Eq.~\eqref{eq-thm-1d}. In 
the following proposition we will investigate consequences of a such 
simplification.

\begin{proposition}\label{prop-bi-lin}
Let $\XX$ and $\YY$ be some Euclidean spaces and $\dim(\XX) \le \dim(\YY)$.
\begin{enumerate}[(A)]
\item If $\EE \in \CC(\YY)$ is a noise channel such that 
$\mathrm{rank}(\EE(\1_\YY)) 
= \dim(\XX)$ and $\mathrm{rank}(J(\EE)) < 
	\frac{\dim(\YY)\dim(\XX)}{\dim(\XX)^2 - 1}$, then 
	$\EE \in \xi(\XX, \YY)$.
\item There exists a noise channel $\EE \in 
	\CC(\YY)$ such that $\mathrm{rank}(\EE(\1_\YY)) = \dim(\XX)$ and 
	$\mathrm{rank}(J(\EE)) \ge 
	\frac{\dim(\YY)\dim(\XX)}{\dim(\XX)^2 - 1}$, for which 
	we have $\EE \not\in \xi(\XX, \YY)$.
\end{enumerate} 

\end{proposition}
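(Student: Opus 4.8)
The plan is to exploit the linearization already established above. Writing $d=\dim(\XX)$, $s=\dim(\YY)$ and $r=\mathrm{rank}(J(\EE))$, and letting $\FF=\kraus{(F_i)_i}=\kraus{(V_\Pi^\dagger E_i)_i}\in\CC(\YY,\XX)$ be the associated channel, the noise $\EE$ belongs to $\xi(\XX,\YY)$ if and only if there is $S\in\MM(\XX,\YY)$ with $F_iS\propto\1_\XX$ for every $i$ and $F_{i_0}S\neq 0$ for some $i_0$. First I would record that $\FF$ is genuinely trace preserving: since $\mathrm{im}(E_i)\subseteq\mathrm{im}(\EE(\1_\YY))=\mathrm{im}(\Pi)$ we have $\Pi E_i=E_i$, hence $\sum_iF_i^\dagger F_i=\sum_iE_i^\dagger\Pi E_i=\sum_iE_i^\dagger E_i=\1_\YY$. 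Consequently the linear map $S\mapsto(F_iS)_i$ is injective, because $F_iS=0$ for all $i$ forces $S^\dagger S=S^\dagger(\sum_iF_i^\dagger F_i)S=0$. The central object is then the linear map $L=\bigoplus_iL_i$ with $L_i(S)=F_iS-\tfrac1d\tr(F_iS)\1_\XX$, whose kernel $\mathcal V=\{S:\ F_iS\in\C\1_\XX\ \forall i\}$ collects exactly the candidate encoders; its source has dimension $sd$ and its target dimension $r(d^2-1)$.

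For part $(A)$ the argument is a dimension count. If $r<\tfrac{sd}{d^2-1}$ then $r(d^2-1)<sd$, so $L$ cannot be injective and $\mathcal V\neq\{0\}$. Picking $0\neq S\in\mathcal V$ gives $F_iS=c_i\1_\XX$ with $c_i=\tfrac1d\tr(F_iS)$; the scalars cannot all vanish, for $c_i=0$ for all $i$ would give $F_iS=0$ for all $i$ and hence $S=0$ by the injectivity above. Thus $S$ together with $R_*=V_\Pi^\dagger$ satisfies condition $(D)$ of Theorem~\ref{thm-uqec-general}, so $\EE\in\xi(\XX,\YY)$.

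For part $(B)$ I would reverse the construction and produce a channel for which $\mathcal V=\{0\}$. Fix $r=\lceil sd/(d^2-1)\rceil$, so that $r(d^2-1)\ge sd$ and $L$ can be injective on dimensional grounds. Choose generic operators $\tilde F_1,\dots,\tilde F_r\in\MM(\YY,\XX)$, normalise them to a channel by setting $F_i=\tilde F_i G^{-1/2}$ with $G=\sum_j\tilde F_j^\dagger\tilde F_j$ (invertible for a generic choice), and fix an isometry $V\colon\XX\to\YY$ to define $\EE=\kraus{(V F_i)_i}\in\CC(\YY)$. Then $\EE$ is trace preserving, $\EE(\1_\YY)=V\FF(\1_\YY)V^\dagger$ has rank $d$ because $\FF(\1_\YY)$ is full rank for a generic choice, and the $VF_i$ stay linearly independent so $\mathrm{rank}(J(\EE))=r\ge\tfrac{sd}{d^2-1}$; moreover $V$ realises the projector $\Pi$ of this $\EE$, so its associated channel is exactly $\FF$. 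Right multiplication by $G^{-1/2}$ is a bijection of $\MM(\XX,\YY)$ intertwining the two kernels, so $\mathcal V=\{0\}$ for $F_i$ iff it holds for $\tilde F_i$; and for a generic tuple $L$ attains full rank $sd$, whence $\mathcal V=\{0\}$ and $\EE\notin\xi(\XX,\YY)$.

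The main obstacle is the genericity step in $(B)$: I must certify that the stratum where $L$ has full rank $sd$ is nonempty, since only then is the complementary ``bad'' locus a proper Zariski-closed subset that a generic tuple avoids. Concretely this means exhibiting one explicit tuple $(\tilde F_i)$ for which $L$ is injective while simultaneously $\sum_i\tilde F_i\tilde F_i^\dagger$ is full rank and the $\tilde F_i$ are linearly independent; the clean way is to build such a tuple directly in a convenient basis (for instance, supported on disjoint blocks of coordinates of $\YY$) and then invoke that these three open conditions hold on a dense open set. Verifying that the normalisation $F_i=\tilde F_i G^{-1/2}$ preserves all three properties is routine but needs to be stated, and one should note that the degenerate case $d=1$, where $(A)$ is vacuously true and $(B)$ is empty, is excluded.
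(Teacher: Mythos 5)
Your part (A) is correct and is essentially the paper's own argument: after passing to the associated channel $\FF = \kraus{(V_\Pi^\dagger E_i)_i}$, both you and the paper reduce the existence of an encoder to a linear dimension count ($r(d^2-1) < sd$ forces a nontrivial kernel of the linearized map), and your observation that $\sum_i F_i^\dagger F_i = \1_\YY$ rules out the all-zero proportionality constants is a clean way to get $c_{i_0}\neq 0$ (the paper obtains the same from injectivity of $F\otimes\1_\XX$, where $F=\sum_i\ket{i}\otimes F_i$).

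Part (B), however, has a genuine gap, and you have named it yourself: the genericity reduction is sound but circular until you exhibit a single tuple $(\tilde F_i)_{i=1}^r$ with $r \ge sd/(d^2-1)$ for which the map $S \mapsto \bigl(F_iS - \tfrac1d\tr(F_iS)\1_\XX\bigr)_i$ is injective. That witness is the entire mathematical content of part (B) --- the inequality $r(d^2-1)\ge sd$ only says injectivity is not excluded by dimensions, not that it actually occurs --- and ``build it on disjoint blocks of coordinates'' is a plan, not a proof. The paper's proof is precisely this construction: for $s=kd$ it takes $F_i=\bra{i}\otimes\1_\XX$ for $i<k$ (these force any solution to have the form $S=\ket{c}\otimes\1_\XX$) and then, completing $\1_\XX$ to a basis $\{\1_\XX, M_0,\dots,M_{d^2-2}\}$ of $\MM(\XX)$, defines each remaining $F_i$ so that the condition $F_iS\propto\1_\XX$ annihilates a fresh block of $d^2-1$ coordinates of $\ket{c}$; the hypothesis $r\ge sd/(d^2-1)$ is exactly what guarantees $(r-k)(d^2-1)\ge k$, so all of $\ket{c}$ is killed, and the case $s=kd+l$ requires a separate, slightly fiddly modification of $F_k$. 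Until you carry out such a construction (or replace it by a genuinely different non-emptiness argument, e.g.\ a dimension count on the incidence variety that treats every rank stratum of $S$), part (B) is not proved. Two smaller remarks: the paper avoids your normalisation by $G^{-1/2}$ altogether by invoking Lemma~\ref{lem-weird-channels}, which permits working with any $F$ of rank $\dim(\YY)$ rather than an isometry; and your observation that $d=1$ must be excluded is correct, since the bound is undefined there.
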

The proof of Proposition~\ref{prop-bi-lin} is presented in 
Appendix~\ref{proof-prop-bi-lin}. Eventually, it is worth mentioning that 
the QEC procedure based on Knill-Laflamme conditions works well with this 
class of noise channels. Consider the situation $\dim(\XX)\mathrm{rank}(J(\EE)) 
= 
\dim(\YY)$. Then, if $\EE \in \CC(\YY)$ and $\mathrm{rank}(\EE(\1_\YY)) = 
\dim(\XX)$, it holds $\EE \in \xi_1(\XX, \YY)$. To see this, take the Kraus 
decomposition of $\EE = \kraus{(E_i)}$ and notice that operators $E_i$ are 
orthogonal pieces of some unitary operator.

\subsection{Correctable noise channels with bounded Choi rank}

In this subsection we will study the behavior of $r(\XX,\YY)$ and $r_1(\XX, 
\YY)$. We will state a lower and a upper bound for both 
quantities.

\begin{theorem}\label{thm-best-bounds}
	Let $\XX$ and $\YY$ be some Euclidean spaces such that $\dim(\YY) \ge 
	\dim(\XX)$. Then, we have
	\begin{equation}
	\begin{split}
		\left\lfloor \sqrt[4]{\frac{\dim(\YY)}{\dim(\XX)}} 
		\right\rfloor \le r_1(\XX,\YY) \le 		
		\left\lceil\sqrt{\frac{\dim(\YY)}{\dim(\XX)-1}}\right\rceil-1\le 
		r(\XX,\YY)
		 < 
		\frac{\dim(\YY)\dim(\XX)}{\dim(\XX)^2 - 1}.
	\end{split}
	\end{equation}
\end{theorem}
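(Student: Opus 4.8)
The plan is to assemble the four-part chain of inequalities in Theorem~\ref{thm-best-bounds} from results already established in the excerpt, so that almost all the work reduces to bookkeeping. Let me denote $d = \dim(\XX)$ and $s = \dim(\YY)$. Reading the chain left to right, the outer two bounds $\floor{\sqrt[4]{s/d}} \le r_1(\XX,\YY)$ and $r(\XX,\YY) < \frac{sd}{d^2-1}$ are essentially stated elsewhere. The lower bound on $r_1$ is the Knill--Laflamme-type estimate from \cite{knill2000theory} quoted verbatim in the text preceding the theorem, so I would simply cite it. The rightmost strict inequality $r(\XX,\YY) < \frac{sd}{d^2-1}$ is exactly Proposition~\ref{prop-bi-lin}$(B)$: that proposition exhibits a noise channel $\EE$ with $\mathrm{rank}(\EE(\1_\YY)) = d$ and $\mathrm{rank}(J(\EE)) \ge \frac{sd}{d^2-1}$ that is \emph{not} in $\xi(\XX,\YY)$, which by the definition of $r(\XX,\YY)$ in Eq.~\eqref{eq-def-r} forces $r(\XX,\YY) < \frac{sd}{d^2-1}$.

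The two inner inequalities are where the substance lies, and both hinge on the quantity $\big\lceil\sqrt{s/(d-1)}\big\rceil - 1$. For the middle inequality $\big\lceil\sqrt{s/(d-1)}\big\rceil - 1 \le r(\XX,\YY)$, I would invoke Proposition~\ref{prop-diag}: every Schur channel (and more to the point, every channel) with $\mathrm{rank}(J(\EE)) < \frac{s}{d-1}$ lies in $\xi(\XX,\YY)$. Thus any integer $r$ satisfying $r < \frac{s}{d-1}$ certifies that all channels of Choi rank $\le r$ are probabilistically correctable, so $r(\XX,\YY) \ge r$ for the largest such $r$. The largest integer strictly below $\sqrt{s/(d-1)}$ is precisely $\big\lceil\sqrt{s/(d-1)}\big\rceil - 1$ once one checks the ceiling arithmetic: this integer is $< \sqrt{s/(d-1)}$, and squaring (both sides positive) gives Choi rank $< s/(d-1)$, so Proposition~\ref{prop-diag} applies and delivers the bound. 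For the leftmost nontrivial inequality $r_1(\XX,\YY) \le \big\lceil\sqrt{s/(d-1)}\big\rceil - 1$, I would use the second half of Lemma~\ref{lem-rank-bounds}, which produces a Schur channel $\FF$ with $\mathrm{rank}(J(\FF)) = \big\lceil\sqrt{\lceil s/(d-1)\rceil}\big\rceil$ and $\FF \notin \xi_1(\XX,\YY)$. Hence $r_1(\XX,\YY)$ is strictly below this rank, i.e. $r_1(\XX,\YY) \le \big\lceil\sqrt{\lceil s/(d-1)\rceil}\big\rceil - 1$, and I would then reconcile the nested-ceiling expression $\big\lceil\sqrt{\lceil s/(d-1)\rceil}\big\rceil$ with the single-ceiling expression $\big\lceil\sqrt{s/(d-1)}\big\rceil$ appearing in the theorem.

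The main obstacle is precisely that last reconciliation of ceiling expressions, which is the only genuinely new computation. I need the identity $\big\lceil\sqrt{\lceil x\rceil}\big\rceil = \big\lceil\sqrt{x}\big\rceil$ for real $x \ge 0$. This is a standard but non-obvious ceiling lemma: it holds because $\sqrt{\cdot}$ is monotone and because for any integer $n$, the condition $\sqrt{x} \le n$ is equivalent to $x \le n^2$, which (as $n^2$ is an integer) is equivalent to $\lceil x\rceil \le n^2$, hence to $\sqrt{\lceil x\rceil} \le n$. Taking $n = \big\lceil\sqrt{x}\big\rceil$ shows both ceilings agree. Applying this with $x = s/(d-1)$ converts Lemma~\ref{lem-rank-bounds}'s bound into the theorem's form and closes the chain. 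I would state this ceiling identity as a short auxiliary observation, verify it in one line, and then the four inequalities link up with no further analysis.
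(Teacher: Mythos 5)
Your handling of the two outer bounds and of the upper bound on $r_1(\XX,\YY)$ matches the paper: the leftmost inequality is cited from \cite{knill2000theory}, the rightmost follows from Proposition~\ref{prop-bi-lin}$(B)$, and the bound $r_1(\XX,\YY)\le\lceil\sqrt{\lceil s/(d-1)\rceil}\rceil-1$ from Lemma~\ref{lem-rank-bounds} is correctly reconciled with the single-ceiling form via the identity $\lceil\sqrt{\lceil x\rceil}\rceil=\lceil\sqrt{x}\rceil$ (your one-line verification of that identity is sound, and is in fact more careful than the paper, which passes over this point silently).

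However, your argument for the middle inequality $\lceil\sqrt{s/(d-1)}\rceil-1\le r(\XX,\YY)$ has a genuine gap, and it is precisely where the substance of the theorem lives. Proposition~\ref{prop-diag} applies only to \emph{Schur} channels (those with Kraus operators diagonal in the computational basis), whereas the definition of $r(\XX,\YY)$ in Eq.~\eqref{eq-def-r} quantifies over \emph{all} channels of bounded Choi rank. Your parenthetical ``(and more to the point, every channel)'' is not only unjustified but provably false: if every channel with $\mathrm{rank}(J(\EE))<s/(d-1)$ were correctable, then $r(\XX,\YY)\ge\lceil s/(d-1)\rceil-1$, which for $s\ge d^2-1$ contradicts the upper bound $r(\XX,\YY)<sd/(d^2-1)$ established by Proposition~\ref{prop-bi-lin}$(B)$ (note $sd/(d^2-1)=\frac{s}{d-1}\cdot\frac{d}{d+1}<\frac{s}{d-1}$). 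This mismatch is also why the correct lower bound involves $\sqrt{s/(d-1)}$ rather than $s/(d-1)$; your ``squaring'' step hints at this but does not supply the missing argument. The paper closes the gap with a nontrivial reduction: given an arbitrary $\EE=\kraus{(E_j)_{j=1}^r}$ with $r^2(d-1)<s$, it finds a minimal-rank projector $\Pi_{s_0}$ with $\mathrm{rank}(\EE^\dagger(\Pi_{s_0}))=\dim(\YY)$ (which forces $s_0\ge s/r$), constructs vectors $\ket{v_i},\ket{w_i}$ turning the compressed operators $F_j=[\bra{v_a}E_j\ket{w_b}]_{a,b}$ into diagonal matrices on $\C^{s_0}$ with $\sum_j F_j^\dagger F_j>0$, and only then applies Proposition~\ref{prop-diag} together with Lemma~\ref{lem-weird-channels} and Theorem~\ref{thm-uqec-general}, using $r(d-1)<s/r\le s_0$. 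Without some such reduction from general channels to the diagonal case, the middle link of your chain does not hold.
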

The proof of Theorem~\ref{thm-best-bounds} is presented in 
Appendix~\ref{proof-thm-best-bounds}. Unfortunately, according to this theorem, 
there is no clear separation of $r(\XX,\YY)$ and $r_1(\XX,\YY)$ for arbitrary 
$\XX$ and $\YY$. The improvement of these bounds will be investigated in the 
future. 

For now, we will calculate explicitly $r(\XX,\YY)$ and $r_1(\XX,\YY)$ for $\XX 
= \C^2$ and $\YY = \C^3, \C^4$.
\begin{proposition}\label{prop-qubit-rank}
	For all $\EE \in \CC(\C^4)$ satisfying $\mathrm{rank}(J(\EE)) \le 2$ we 
	have $\EE \in \xi(\C^2, \C^4)$.
\end{proposition}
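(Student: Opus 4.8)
The plan is to verify condition $(D)$ of Theorem~\ref{thm-uqec-general}: writing $\EE = \kraus{(E_1,E_2)}$ in a canonical Kraus form (if $\mathrm{rank}(J(\EE))\le 1$ then $\EE$ is a subunitary channel, so $\EE\in\xi_1(\C^2,\C^4)\subseteq\xi(\C^2,\C^4)$ and we are done), I must produce $S_*\in\MM(\C^2,\C^4)$ and $R_*\in\MM(\C^4,\C^2)$ with $R_*E_iS_*\propto\1_{\C^2}$ for $i=1,2$ and $R_*E_{i_0}S_*\neq 0$ for some $i_0$. Two preliminary observations will organise everything. First, condition $(D)$ is invariant under $E_i\mapsto VE_iW$ with $V,W$ invertible and under $\mathrm{GL}(2)$ recombinations of $(E_1,E_2)$: from a solution $(\widetilde S,\widetilde R)$ for the transformed operators one recovers $S_*=W\widetilde S$, $R_*=\widetilde R V$ with the same proportionality constants, so I am free to bring the matrix pencil $sE_1-tE_2$ to a convenient normal form. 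Second, trace preservation $E_1^\dagger E_1+E_2^\dagger E_2=\1_{\C^4}$ forbids a common kernel, $\ker E_1\cap\ker E_2=\{0\}$, and forces $\mathrm{rank}(E_1)+\mathrm{rank}(E_2)\ge 4$; hence the effective output space $\OO=\mathrm{im}(E_1)+\mathrm{im}(E_2)$ has $\dim\OO=\mathrm{rank}(\EE(\1_{\C^4}))\in\{2,3,4\}$, and since $\mathrm{im}(E_i)\subseteq\OO$ I may regard each $E_i$ as a map $\C^4\to\OO$. In the subcase $\mathrm{rank}(\EE(\1_{\C^4}))=2$ the hypotheses of Proposition~\ref{prop-bi-lin}$(A)$ hold ($2<\tfrac{8}{3}$), so $\EE\in\xi(\C^2,\C^4)$ immediately.

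The heart of the argument is a small toolbox of encodings, each a direct instance of condition $(D)$. The \emph{transverse} construction: if there is a $2$-dimensional $\mathcal V$ on which $E_1$ is injective and with $E_1\mathcal V\cap E_2\mathcal V=\{0\}$, take $R_*$ to be a left inverse of $E_1|_{\mathcal V}$ that annihilates $E_2\mathcal V$; then $R_*E_1S_*=\1_{\C^2}$ and $R_*E_2S_*=0$, i.e. $(c_1,c_2)=(1,0)$. When $E_1\mathcal V\oplus E_2\mathcal V=\C^4$ this is the statement that $T=[\,E_1S_*\mid E_2S_*\,]$ is invertible and $R_*=[\,\1_{\C^2}\mid 0\,]T^{-1}$. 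The \emph{proportional} construction: if some pencil member has a $2$-dimensional kernel, $\mathcal V\subseteq\ker(\beta E_1-\alpha E_2)$ with $\alpha\neq 0$ and $E_1|_{\mathcal V}$ injective, then $E_2|_{\mathcal V}=(\beta/\alpha)E_1|_{\mathcal V}$ share a common $2$-dimensional image, and a left inverse of $E_1|_{\mathcal V}$ gives $R_*E_1S_*=\1_{\C^2}$, $R_*E_2S_*=(\beta/\alpha)\1_{\C^2}$. In both cases $c_1\neq 0$, as required.

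To deploy these I split on the pencil. If $sE_1-tE_2$ is \emph{regular}, some combination is invertible; after a $\mathrm{GL}(2)$ move and an $E_i\mapsto VE_iW$ move I may assume $E_1=\1_{\C^4}$ and set $A=E_2$. If $A-\lambda\1$ has rank $\le 1$ for some $\lambda$ (equivalently $A$ has an eigenspace of dimension $\ge 3$), any $2$-dimensional $\mathcal V$ inside that eigenspace feeds the proportional construction; otherwise every eigenspace of $A$ has dimension $\le 2$, and one checks that a generic $2$-dimensional $\mathcal V$ gives $\mathcal V\oplus A\mathcal V=\C^4$ (explicitly $\mathcal V=\SPAN(v,A^2v)$ for a cyclic vector when $A$ is nonderogatory, and by a short direct argument in the remaining multiplicity-two cases), feeding the transverse construction. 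If the pencil is \emph{singular}, I pass to the Kronecker canonical form; the absence of a common kernel rules out the trivial singular block, and the remaining minimal-index structure, together with $\mathrm{rank}(\EE(\1_{\C^4}))<4$ on the relevant cases, lets me choose a combination $G=\beta E_1-\alpha E_2$ of maximal rank, encode along $\mathcal V=(\mathcal V\cap\ker G)+\C v$, and build $R_*$ through the quotient $\OO/G\mathcal V$: the congruence $\alpha E_2 s\equiv\beta E_1 s\pmod{G\mathcal V}$ on $\mathcal V$ then forces $R_*E_2S_*=(\beta/\alpha)R_*E_1S_*$, again giving both compressions proportional to $\1_{\C^2}$ with a nonzero constant.

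The routine, generic situation (regular pencil, $A$ cyclic) is handled at once by the transverse construction; the real work, and the main obstacle, is the degenerate configurations. The naive ``make $T$ invertible'' strategy provably fails whenever some Kraus operator has low rank or the pencil is singular, and there one must locate the correct proportional or rank-drop compression and verify it case by case across the finitely many canonical forms of a $4\times 4$ pencil. The most delicate point throughout is bookkeeping the non-triviality clause of condition $(D)$ — that some $R_*E_{i_0}S_*\neq 0$ — which is precisely where the constraint $\ker E_1\cap\ker E_2=\{0\}$ from trace preservation is used (it guarantees $E_1k\neq 0$ for kernel vectors $k$, and excludes the genuinely uncorrectable degeneration).
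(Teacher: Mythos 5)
Your reduction to condition $(D)$ of Theorem~\ref{thm-uqec-general}, the invariance of $(D)$ under $E_i\mapsto VE_iW$ and $\mathrm{GL}(2)$ recombinations of the pencil, the dispatch of the subcase $\mathrm{rank}(\EE(\1_{\C^4}))=2$ via Proposition~\ref{prop-bi-lin}, and the regular-pencil analysis are all sound, and this pencil-theoretic route is genuinely different from the paper's: there one uses $E_0^\dagger E_0+E_1^\dagger E_1=\1_{\C^4}$ to write $E_0=U_0D_0V$, $E_1=U_1D_1V$ with a common right unitary, reducing everything to two mutually orthogonal $4$-tuples $\ket{x_i},\ket{y_i}$ with $\|x_i\|^2+\|y_i\|^2=1$, and then builds $S_*,R_*$ by an explicit case analysis on which pairs $(x_i,y_i)$ are linearly independent. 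Your approach trades that extra metric structure for normal forms of the pencil, which makes the generic case cleaner but pushes all the difficulty into the degenerate configurations.

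That is where there is a genuine gap: the singular-pencil case, which you yourself call ``the real work,'' is asserted rather than proved. Two concrete problems. First, the supporting claim that $\mathrm{rank}(\EE(\1_{\C^4}))<4$ ``on the relevant cases'' is unjustified, and it is false as a general statement about singular pencils: the channel with Kraus operators $E_1=\ketbra{0}{0}+\tfrac{1}{\sqrt{2}}\ketbra{1}{2}+\tfrac{1}{\sqrt{2}}\ketbra{3}{3}$ and $E_2=\ketbra{0}{1}+\tfrac{1}{\sqrt{2}}\ketbra{2}{2}+\tfrac{1}{\sqrt{2}}\ketbra{3}{3}$ is trace preserving, satisfies $\det(sE_1-tE_2)\equiv 0$ (the columns indexed by $\ket{0}$ and $\ket{1}$ of every pencil member are parallel), and yet $\mathrm{im}(E_1)+\mathrm{im}(E_2)=\C^4$. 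This particular channel happens to be caught by your transverse construction with $\mathcal{V}=\mathrm{span}(\ket{0},\ket{2})$, but it shows your stated case division does not cover the singular configurations as described. Second, the quotient construction is never verified: to get a valid $(S_*,R_*)$ from it you must check, for each Kronecker type, that $\alpha\neq 0$, that the chosen Kraus operator is injective on $\mathcal{V}$, and that some $R_*$ simultaneously left-inverts it on $E_1\mathcal{V}$ and annihilates $G\mathcal{V}$ (i.e.\ that $Gv\notin E_1\mathcal{V}$) --- exactly the hypotheses that can fail for low-rank Kraus operators. Until the finitely many singular Kronecker forms are actually worked through, the proof is incomplete precisely where the paper's explicit case analysis does its work.
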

The proof of Proposition~\ref{prop-qubit-rank} is presented in 
Appendix~\ref{proof-prop-qubit-rank}.  By 
using Theorem~\ref{thm-best-bounds} and Proposition~\ref{prop-qubit-rank} we 
get the following advantage of pQEC protocol for $\XX = \C^2$ and $\YY=\C^4$. 
\begin{corollary}\label{cor-exact-r}
	For $\XX = \C^2$ and $\YY = \C^4$ we have
	\begin{equation}
		r_1(\XX,\YY) < r(\XX,\YY).
	\end{equation}
In particular, it holds
\begin{equation}
\begin{array}{ccc}
	r_1(\C^2,\C^3) = 1 & \quad & r(\C^2,\C^3) = 1 \\
	r_1(\C^2,\C^4) = 1 & \quad & r(\C^2,\C^4) = 2
\end{array}
\end{equation}
\end{corollary}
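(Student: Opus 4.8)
The plan is to read off all four quantities by substituting the specific dimensions $\dim(\XX)=2$ and $\dim(\YY)\in\{3,4\}$ into the chain of inequalities provided by Theorem~\ref{thm-best-bounds}, and then to close the single remaining gap using Proposition~\ref{prop-qubit-rank}. Recall that Theorem~\ref{thm-best-bounds} gives
\begin{equation*}
\left\lfloor \sqrt[4]{\tfrac{\dim(\YY)}{\dim(\XX)}}\right\rfloor \le r_1(\XX,\YY) \le \left\lceil\sqrt{\tfrac{\dim(\YY)}{\dim(\XX)-1}}\right\rceil-1 \le r(\XX,\YY) < \tfrac{\dim(\YY)\dim(\XX)}{\dim(\XX)^2-1}.
\end{equation*}
The core observation is that for these small dimensions the outer bounds collapse onto integers, so the computation pins down almost everything, leaving only the lower bound on $r(\C^2,\C^4)$ to be supplied separately.

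First I would handle $r_1$ in both cases. For $\YY=\C^3$ the lower bound is $\lfloor\sqrt[4]{3/2}\rfloor=1$ and the upper bound is $\lceil\sqrt{3}\rceil-1=2-1=1$, forcing $r_1(\C^2,\C^3)=1$. For $\YY=\C^4$ the lower bound is $\lfloor\sqrt[4]{2}\rfloor=1$ and the upper bound is $\lceil\sqrt{4}\rceil-1=2-1=1$, forcing $r_1(\C^2,\C^4)=1$. Next I would treat $r(\C^2,\C^3)$: the lower bound is $\lceil\sqrt{3}\rceil-1=1$, and the strict upper bound is $\tfrac{3\cdot 2}{3}=2$, so $r(\C^2,\C^3)\le 1$, whence $r(\C^2,\C^3)=1$.

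The remaining quantity is $r(\C^2,\C^4)$. Here the strict upper bound is $\tfrac{4\cdot 2}{3}=\tfrac{8}{3}$, giving $r(\C^2,\C^4)\le 2$. The lower bound from Theorem~\ref{thm-best-bounds} only yields $\lceil\sqrt{4}\rceil-1=1$, which is insufficient. This is the one place where the general bounds fall short, and it is precisely where Proposition~\ref{prop-qubit-rank} is needed: it asserts that every $\EE\in\CC(\C^4)$ with $\mathrm{rank}(J(\EE))\le 2$ lies in $\xi(\C^2,\C^4)$, which by the definition in Eq.~\eqref{eq-def-r} means $r(\C^2,\C^4)\ge 2$. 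Combined with the upper bound this gives $r(\C^2,\C^4)=2$.

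Assembling these values, we obtain $r_1(\C^2,\C^4)=1 < 2 = r(\C^2,\C^4)$, which is the claimed strict inequality $r_1(\XX,\YY)<r(\XX,\YY)$, together with the explicit table of values. The only substantive step is the invocation of Proposition~\ref{prop-qubit-rank} for the lower bound on $r(\C^2,\C^4)$; every other entry follows from elementary arithmetic with the bounds of Theorem~\ref{thm-best-bounds}, so I expect no further obstacle beyond verifying these ceiling and floor evaluations carefully.
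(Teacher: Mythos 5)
Your proposal is correct and matches the paper's argument exactly: the paper likewise obtains the corollary by evaluating the chain of bounds in Theorem~\ref{thm-best-bounds} at these dimensions and invoking Proposition~\ref{prop-qubit-rank} to supply the missing lower bound $r(\C^2,\C^4)\ge 2$. All of your floor and ceiling evaluations check out.
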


\subsection{Random noise channels}

In the last subsection, we will show the advantage of pQEC procedure for 
randomly generated noise channels. We will follow the procedure of sampling 
quantum 
channels considered in \cite{bruzda2009random, 
nechita2018almost, kukulski2021generating}. 

Let $r \in \N$ and let $(G_i)_{i=1}^r \subset \MM(\YY)$ be a tuple of random 
and independent Ginibre matrices (matrices with independent and identically 
distributed entries drawn from standard complex normal distribution). Define $Q 
= \sum_{i=1}^r G_i^\dagger G_i$. We define a random channel $\EE_r \in 
\CC(\YY)$ given as
\begin{equation}\label{eq-random}
	\EE_r = \kraus{(G_i Q^{-1/2}))_{i=1}^r}.
\end{equation}
This sampling procedure induces the measure $\PP$ on $\CC(\YY)$ whose support 
is defined on $\{\EE \in \CC(\YY): \,\, \mathrm{rank}(J(\EE)) \le r\}$.

\begin{theorem}\label{thm-random}
	Let $\EE_r \in \CC(\YY)$ be a random quantum channel 
	defined according to Eq.~\eqref{eq-random}. Then, the following two 
	implications hold
	\begin{equation}
\begin{split}
r < \frac{\dim(\XX) \dim(\YY)}{\dim(\XX)^2 - 1} &\implies \PP\left( \EE_{r} \in 
		\xi(\XX, \YY)\right) = 1,\\
\PP\left( \EE_{r} \in \xi_1(\XX, \YY)\right) = 1 &\implies r < 
\sqrt{\frac{\dim(\YY)}{\dim(\XX)-1}}.
\end{split}
\end{equation}
\end{theorem}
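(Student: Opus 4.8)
The plan is to prove both implications by reducing the random-channel statement to the deterministic rank bounds already established, and then arguing that the relevant algebraic conditions hold almost surely under the Ginibre measure. The key structural observation is that $\PP$ is supported on channels of Choi rank at most $r$, and that for generic Ginibre data the rank is exactly $r$; moreover the Kraus operators $G_i Q^{-1/2}$ inherit genericity from the $G_i$. So throughout I would work with a channel $\EE_r = \kraus{(G_i Q^{-1/2})_{i=1}^r}$ where almost surely $\mathrm{rank}(J(\EE_r)) = r$ and the operators $G_i$ are in ``general position'' (no accidental algebraic coincidences).

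First, for the first implication, suppose $r < \frac{\dim(\XX)\dim(\YY)}{\dim(\XX)^2 - 1}$. The idea is to invoke Proposition~\ref{prop-bi-lin}~$(A)$, whose hypothesis is $\mathrm{rank}(\EE(\1_\YY)) = \dim(\XX)$ together with the same rank bound $\mathrm{rank}(J(\EE)) < \frac{\dim(\YY)\dim(\XX)}{\dim(\XX)^2-1}$. So I would first show that $\mathrm{rank}(\EE_r(\1_\YY)) = \dim(\XX)$ holds almost surely. Here there is a subtlety: Proposition~\ref{prop-bi-lin} requires the rank of $\EE_r(\1_\YY) = \sum_i G_i Q^{-1} G_i^\dagger$ to equal $\dim(\XX)$, which is smaller than $\dim(\YY)$, whereas a generic Ginibre sum would produce full rank $\dim(\YY)$. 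This is the main obstacle and the place where the proof cannot use $\EE_r$ naively. My plan is to instead apply Lemma~\ref{lem-weird-channels}: compose $\EE_r$ with a suitable $Q_0 \in \MM(\YY)$, or equivalently restrict to the construction used in the proof of Proposition~\ref{prop-bi-lin}, so that the effective channel satisfies the rank-$\dim(\XX)$ image condition while preserving $\EE_r \in \xi(\XX,\YY) \iff$ correctability of the composed map. Once the image condition is arranged, the bi-linear-to-linear reduction from Proposition~\ref{prop-bi-lin} turns the problem into solving $F_i S_*' \propto \1_\XX$, a system of linear equations whose solvability is a Zariski-open condition in the entries of the $G_i$; since the strict inequality on $r$ guarantees the solution space is nontrivial for the generic parameter point, the set where a valid $S_*'$ exists has full $\PP$-measure.

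Second, for the contrapositive of the second implication, I would assume $r \ge \sqrt{\frac{\dim(\YY)}{\dim(\XX)-1}}$, equivalently $r^2(\dim(\XX)-1) \ge \dim(\YY)$, and show $\PP(\EE_r \in \xi_1(\XX,\YY)) < 1$, \ie that with positive probability $\EE_r$ is \emph{not} perfectly correctable. Perfect correctability is governed by the Knill-Laflamme conditions Eq.~\eqref{eq-txt-kl}: one needs an isometry $S$ with $S^\dagger E_j^\dagger E_i S \propto \1_\XX$ for all $i,j$. This is a system of $r^2$ independent $\dim(\XX)^2$-dimensional trace-orthogonality constraints on the $\dim(\XX)\dim(\YY)$-dimensional space of isometries $S$. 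The plan is a dimension count: the variety of isometries carrying a solvable Knill-Laflamme system has dimension strictly less than the ambient isometry manifold once $r^2(\dim(\XX)^2-1)$ exceeds the available degrees of freedom, and the Ginibre density is absolutely continuous, so the generic channel admits no such $S$. Concretely I would exhibit, for generic $(G_i)$, an obstruction showing the $r^2$ Gram-type operators $E_j^\dagger E_i$ cannot be simultaneously scalarized on any $\dim(\XX)$-dimensional code, which is exactly the mechanism behind the Schur-channel counterexample of Lemma~\ref{lem-rank-bounds}; transferring that obstruction from the explicit Schur construction to a positive-measure Ginibre event is the technical crux. Since the failure event is open and nonempty, it has positive $\PP$-measure, giving $\PP(\EE_r \in \xi_1(\XX,\YY)) < 1$ and hence the stated implication.
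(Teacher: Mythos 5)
Your plan for the first implication is essentially the paper's argument. The paper also observes that $\EE_r(\1_\YY)$ is generically full rank, and fixes this exactly as you suggest: it fixes a rank-$\dim(\XX)$ projector $\Pi$, absorbs $Q^{1/2}$ into the encoder (this is where Lemma~\ref{lem-weird-channels} enters) and $\Pi$ into the decoder, reducing everything to the subchannel $\kraus{(\Pi G_i)_i}$, to which Proposition~\ref{prop-bi-lin}~$(A)$ applies; the only genericity needed is that $Q$ has full rank and $\sum_i G_i^\dagger \Pi G_i$ has the expected rank, both almost-sure events. One small correction: once those rank conditions hold, Proposition~\ref{prop-bi-lin}~$(A)$ gives a nonzero solution of the linear system for \emph{every} such channel by a pure dimension count, so no further Zariski-genericity argument about solvability is required.

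For the second implication you argue the contrapositive, while the paper argues forward: from $\PP(\EE_r\in\xi_1)=1$ it deduces that $\xi_1$ is dense in the support $\{\EE:\mathrm{rank}(J(\EE))\le r\}$, invokes the closedness of $\xi_1$ (established in the proof of Theorem~\ref{thm-nowhere} by a compactness argument) to conclude that \emph{all} channels of Choi rank at most $r$ are perfectly correctable, hence $r\le r_1(\XX,\YY)$, and finishes with Lemma~\ref{lem-rank-bounds}. Your route is logically equivalent, but you have made it harder than necessary: the step you flag as ``the technical crux'' --- transferring the Schur obstruction to a positive-measure set of generic Ginibre channels --- is not needed at all. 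It suffices that the single Schur channel $\FF$ of Lemma~\ref{lem-rank-bounds}, with $\mathrm{rank}(J(\FF))=\lceil\sqrt{\lceil\dim(\YY)/(\dim(\XX)-1)\rceil}\rceil\le r$ under your contrapositive hypothesis, lies in the support of $\PP$ and is not in $\xi_1$; since the complement of $\xi_1$ is open, every open set meeting the support has positive measure, and you are done. Two things you must still supply to make this airtight: a justification that the failure event is open, i.e.\ that $\xi_1(\XX,\YY)$ is closed (you assert openness without proof; cite or reproduce the compactness argument from Theorem~\ref{thm-nowhere}), and the check that the counterexample's Choi rank does not exceed $r$, which uses the identity $\lceil\sqrt{\lceil x\rceil}\rceil=\lceil\sqrt{x}\rceil$. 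The dimension-count strand of your plan should simply be dropped.
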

The proof of Theorem~\ref{thm-random} is presented in 
Appendix~\ref{proof-thm-random}.

\section{Example of pQEC qubit code}\label{sec-simple}
Consider the following scenario. You have a task to transfer a given qubit 
state $\rho \in \DD(\C^2)$ through a quantum communication 
line represented by a noise channel $\EE \in \CC(\YY)$ of the form $\EE(Y) = 
\tr_2 \left(U(Y \otimes \proj{\psi}) U^\dagger\right)$, where $\proj{\psi} \in 
\DD(\C^2)$ and $U \in \UU(\YY \otimes \C^2)$. At this point a natural question 
arises. What is the minimal size of the communication line $\dim(\YY)$, which 
is large enough to recover the state $\rho$ with the pQEC procedure?

To answer this question, observe that the channel $\EE$ satisfies 
$\mathrm{rank}(J(\EE)) \le 2$. In Proposition~\ref{prop-qubit-rank} we noticed 
that such channels are probabilistically correctable for a given input space 
$\C^2$, if $\dim(\YY) = 4$ (in fact, from monotonicity for $\dim(\YY) \ge 4$).
Therefore, to correctly transfer a qubit state through $\EE$, we may define 
an error-correcting scheme with only two physical qubits. 

We provide the following pQEC procedure based on 
Proposition~\ref{prop-qubit-rank}. 

\begin{algorithm}[H]
\SetAlgoLined
\KwIn{$\EE \in \CC(\C^4)$ such that 
$\mathrm{rank}(J(\EE))\le 2$.}
\KwOut{pQEC procedure with success probability $p>0$.}
  \BlankLine\BlankLine

\nl Let $\EE = \kraus{(E_0, E_1)}$.\\
\nl Define $S_* \in \MM(\C^2, \C^4)$ and $R_* \in \MM(\C^4, \C^2)$, 
such that $R_* E_0 S_* \propto \1_{\C^2}$, $R_* E_1 S_* \propto \1_{\C^2}$ and 
$R_* E_0 S_* \neq 0 \vee R_* E_1 S_* \neq 0$ according to 
Appendix~\ref{proof-prop-qubit-rank}.\\
\nl Define \begin{equation*}
\begin{split}
Q &= S_*^\dagger S_*,\\
S &=  S_* Q^{-0.5},\\
R &= \frac{Q^{0.5} R_*}{\|Q^{0.5} R_* \|_\infty}.
\end{split}
\end{equation*}\\
\nl Calculate $p \in (0,1]$, such that 
$R\left(\EE\left(SXS^\dagger\right)\right)R^\dagger = pX$ for any $X \in 
\MM(\C^2).$\\
\nl Define $U_S \in \UU(\C^4)$ which satisfies 
$U_S(\1_{\C^2} 
\otimes \ket{0}) = S$.\\
\nl Let $R = \sigma_1 \ketbra{z_1}{t_1} + \sigma_2 \ketbra{z_2}{t_2}$ be 
	the SVD of $R$. Define $U_R \in \UU(\C^4)$ which 
	satisfies
	\begin{equation*}
		\begin{split}
			U_R \ket{t_1} &= \ket{0,0},\\
			U_R \ket{t_2} &= \ket{1, 0}.
		\end{split}
	\end{equation*}\\
\nl Define $R' = R U_R^\dagger (\1_{\C^2} \otimes \ket{0})$.\\
\nl Define $V_R \in \UU(\C^4)$ which satisfies $(\1_{\C^2} 
\otimes \bra{0})V_R (\1_{\C^2} \otimes \ket{0}) = R'.$\\
\nl Run the QEC procedure presented in Figure~\ref{scheme} for $\ket{\psi}, 
U_S, U_R, V_R$.\\
\nl Let $\sigma_{\mathrm{exp}}$ be the output state of the procedure 
	presented in Figure \ref{scheme}. Use the post-processing of the 
	measurements' output $(i,j)$ according to the following table:
	
	\begin{equation*}
		\begin{array}{|c|c|c|c|c|}
			\hline
			\mathrm{\textbf{Labels}} & 
			\mathrm{\textbf{Probability}}&\mathrm{\textbf{Status}} & 
			\mathrm{\textbf{Action}} & 
			\mathrm{\textbf{Result}} \\
			\hline
			(i, j) = (0,0) & p & \mbox{QEC procedure succeeded } & \mbox{Accept 
			} 
			\sigma_{\mathrm{exp}} & 
			\sigma_{\mathrm{exp}} = 
			\proj{\psi} \\
			(i, j) \neq (0,0) & 1-p & \mbox{QEC procedure failed } & 
			\mbox{Reject } 
			\sigma_{\mathrm{exp}} & \sigma_{\mathrm{exp}}\,\, ? \,\, 
			\proj{\psi} \\\hline
		\end{array}
	\end{equation*}
\caption{Probabilistic QEC qubit code}\label{alg}
\end{algorithm}

\begin{figure}[h!]
	\includegraphics[width=0.9\linewidth]{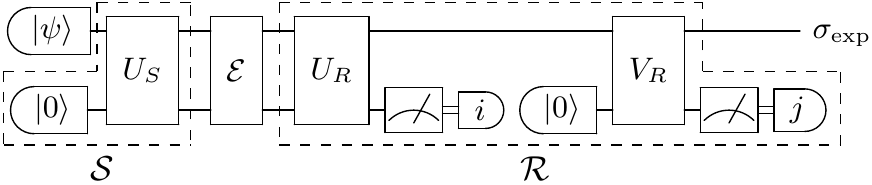}
	\caption{The circuit representing the pQEC procedure. We have access to two 
	physical qubits. The first qubit is in the state $\ket{\psi}$. 
		This state will be encoded. The second state we set
		$\ket{0}$. We implement two-qubit, encoding unitary operator $U_S$. 
		Then, the encoded state $U_S \left(\ket{\psi} \otimes \ket{0} 
		\right)$ is affected by the noise channel $\EE$. After that, we start 
		the 
		decoding 
		procedure. We implement two-qubit unitary 
		rotation $U_R$. We measure the second qubit in the standard basis 
		and 
		obtain a 
		classical label $i \in \{0,1\}$. We prepare a third qubit in the 
		state $\ket{0}$ and implement two qubit unitary rotation 
		$V_R$. 
		We measure the third qubit in the standard basis and obtain 
		a classical label $j \in \{0,1\}$. If $(i,j)=(0,0)$ we accept the 
		output state, otherwise, we reject it.\label{scheme}}
\end{figure}
	
\section{Generalization of pQEC procedure}\label{sec-model}

Let us denote by $\Upsilon$ an arbitrary family of noise channels, that is 
$\Upsilon 
\subset \CC(\YY)$. In this section, we ask if there exists error-correcting 
scheme $(\SS,\RR)$, such that all noise channels $\EE \in \Upsilon$ we have 
$\RR \EE 
\SS 
= p_\EE \II_\XX$, for some $p_\EE \ge 0$. Note, that $p_\EE$ may differ for 
different noise channels 
$\EE$, hence, we shall introduce a quantity to ``globally'' control the 
effectiveness of $(\SS, \RR)$. We propose the following approach.

Let $\mu$ be some probability measure defined on the set $\Upsilon$. We assume 
that 
noise channels $\EE \in \Upsilon$ are probed according to $\mu$. The scheme 
$(\SS, 
\RR)$ 
will be a valid error-correcting scheme for $\Upsilon$ and $\mu$ if 
in average, the probability of successful error correction is non zero, that is 
\begin{equation}
	\int_\Upsilon p_\EE \mu(d\EE) > 0.
\end{equation}
Without loss of the generality we may assume that $\Upsilon$ is convex. 
Additionally, we assume that the support of $\mu$ is equal to $\Upsilon$. 
Usually, 
we can take $\mu$ as the flat measure, representing the maximal uncertainty in 
the process of probing random noise channels $\EE$ from $\Upsilon$. Let us 
define 
the 
average noise channel of $\Upsilon$ with respect to $\mu$
\begin{equation}
\bar \EE = \int_{\Upsilon} \EE \mu(d\EE).
\end{equation} 
We will show that we can correct \emph{all} 
noise channels from the family $\Upsilon$, whenever $\bar \EE$ is 
probabilistically correctable for $\XX$. We put this statement as the following 
proposition.

\begin{proposition}\label{prop-noise-model}
Let $\Upsilon \subset \CC(\YY)$ be a nonempty and convex family of noise 
channels. Define $\mu$ to be a probability measure defined on $\Upsilon$ and 
assume that the support of $\mu$ is equal to $\Upsilon$. Let $\bar \EE = 
\int_{\Upsilon} \EE \mu(d\EE) \in \CC(\YY)$ and fix 
$(\SS, \RR) \in s\CC(\XX,\YY) \times s\CC(\YY,\XX)$. The following conditions 
are equivalent:
\begin{enumerate}[(A)]
\item For each $\EE \in \Upsilon$ there exists $p_\EE \ge 0$ such that $\RR 
\EE \SS = p_\EE \II_\XX$ and $	\int_{\Upsilon} p_\EE \mu(d\EE) > 0.$
\item It holds that $0 \neq \RR \bar \EE \SS \propto \II_\XX$.
\end{enumerate}
\end{proposition}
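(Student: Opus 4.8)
The plan is to prove the equivalence by exploiting the linearity of the map $\EE \mapsto \RR \EE \SS$ together with the assumption that the support of $\mu$ equals $\Upsilon$. The key structural observation is that for a \emph{fixed} scheme $(\SS, \RR)$, the condition $\RR \EE \SS = p_\EE \II_\XX$ in point $(A)$ says that $\RR \EE \SS$ is proportional to the identity; this is a linear constraint on $\EE$. Since $\bar\EE$ is a $\mu$-average (an integral, hence a limit of convex combinations) of channels $\EE \in \Upsilon$, linearity of the integral gives immediately
\begin{equation}\label{eq-avg-lin}
\RR \bar\EE \SS = \int_\Upsilon \RR \EE \SS \, \mu(d\EE).
\end{equation}

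First I would prove $(A) \Rightarrow (B)$. Assuming $(A)$, each integrand in \eqref{eq-avg-lin} equals $p_\EE \II_\XX$, so $\RR \bar\EE \SS = \left( \int_\Upsilon p_\EE\, \mu(d\EE)\right) \II_\XX$. The assumption $\int_\Upsilon p_\EE\, \mu(d\EE) > 0$ then forces the scalar coefficient to be strictly positive, which yields $0 \neq \RR \bar\EE \SS \propto \II_\XX$, exactly point $(B)$. This direction is essentially just pushing the integral through the linear map and reading off the positivity of the coefficient.

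The more delicate direction is $(B) \Rightarrow (A)$. Here I would argue as follows. Define for each $\EE \in \Upsilon$ the operator $T_\EE := \RR \EE \SS \in \MM(\XX)$, so that $T_\EE$ is a $\mu$-integrable family and $\int_\Upsilon T_\EE\, \mu(d\EE) = \RR\bar\EE\SS$ is, by $(B)$, a strictly positive multiple of $\II_\XX$. The heart of the matter is to show that each individual $T_\EE$ is \emph{already} proportional to $\II_\XX$, i.e. that no off-diagonal or traceless components can cancel in the average without vanishing pointwise. The plan here is to use the complete positivity of $\RR$ and $\SS$ (and of $\EE$): since $\RR, \EE, \SS$ are subchannels, the composition $\RR\EE\SS$ is a subchannel, so $T_\EE \in \PP(\XX)$ is positive semidefinite for every $\EE$. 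If $\RR\bar\EE\SS = p\,\II_\XX$ with $p>0$, then writing $T_\EE = p_\EE \II_\XX + N_\EE$ with $N_\EE$ the ``non-scalar remainder'', positivity plus the fact that $\int N_\EE\,\mu(d\EE)=0$ must be leveraged to conclude $N_\EE = 0$ $\mu$-almost everywhere; because the support of $\mu$ is all of $\Upsilon$ and the map $\EE \mapsto T_\EE$ is continuous (being linear), ``$\mu$-almost everywhere'' upgrades to ``everywhere on $\Upsilon$''.

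The main obstacle is precisely this last step: showing that positivity of the summands forbids cancellation of their non-scalar parts in the average. I expect the cleanest route is to apply a suitable positive linear functional or twirl. Concretely, I would consider the projection $\Pi$ onto the off-diagonal (traceless, non-scalar) part of $\MM(\XX)$ and note that the constraint from $(B)$ together with positivity of each $T_\EE$ constrains the geometry: the average of positive operators lands on the ray $\mathbb{R}_{>0}\,\II_\XX$, which is an extreme ray of the cone $\PP(\XX)$ intersected with the relevant constraint set only if each contributing operator lies on it. Alternatively, one can use the earlier characterization (Theorem~\ref{thm-uqec-general}) applied to $\bar\EE$: since $\bar\EE$ is probabilistically correctable via $(\SS,\RR)$, the structural conclusions about $R_l E_i S_k \propto \1_\XX$ transfer, via the Kraus operators of $\bar\EE$ expressed in terms of those of the individual $\EE$, back to each $\EE$ in the support. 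Once $T_\EE = p_\EE\II_\XX$ is established pointwise with $p_\EE \ge 0$, the integral condition $\int p_\EE\,\mu(d\EE) = p > 0$ is just \eqref{eq-avg-lin} read in reverse, completing $(A)$.
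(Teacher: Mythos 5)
Your direction $(A)\Rightarrow(B)$ is complete and correct (the paper omits it as immediate from linearity of the integral). The gap is in $(B)\Rightarrow(A)$: you correctly identify that the entire content of the proposition is to show that positivity forbids the non-scalar parts of the individual $\RR\EE\SS$ from cancelling in the average, but you never actually prove this, and the one concrete justification you offer is false. The ray through $\1_\XX$ is \emph{not} an extreme ray of the cone $\PP(\XX)$ when $\dim(\XX)>1$ (the identity is a sum of rank-one projectors), so ``an average of positive operators lands on an extreme ray only if each summand lies on it'' does not apply to the objects you have set up. There is also a type mismatch: $T_\EE=\RR\EE\SS$ is a superoperator, not an element of $\MM(\XX)$, and being a subchannel means being completely positive as a map, not positive semidefinite as an operator. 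The decomposition $T_\EE=p_\EE\,\II_\XX+N_\EE$ with $\int N_\EE\,\mu(d\EE)=0$ gives nothing by itself; without a positivity argument pinned to a \emph{rank-one} object, traceless parts can certainly cancel in an average.

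The fix --- and what the paper actually does --- is to evaluate on a pure state. For each $\proj{\psi}\in\DD(\XX)$, condition $(B)$ gives $\int_\Upsilon \RR\EE\SS(\proj{\psi})\,\mu(d\EE)=p\proj{\psi}$ with every integrand positive semidefinite; since $\proj{\psi}$ is rank one, any positive semidefinite operator dominated by a multiple of it is proportional to it, so one only needs a pointwise bound $\RR\EE_0\SS(\proj{\psi})\le c^{-1}p\,\proj{\psi}$ with $c>0$ for each fixed $\EE_0\in\Upsilon$. The paper extracts this bound by placing $\EE_0$ as a vertex of a simplex $\Delta_k\subset\Upsilon$ (here convexity of $\Upsilon$ is used), writing each $\EE\in\Delta_k$ in barycentric coordinates $q_i(\EE)$, and showing $\int_{\Delta_k}q_0(\EE)\,\mu(d\EE)>0$ from the full-support hypothesis. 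Your sketched alternative --- $\bra{\phi}\RR\EE\SS(\proj{\psi})\ket{\phi}=0$ holds $\mu$-almost everywhere for each $\ket{\phi}\perp\ket{\psi}$, then upgrade to everywhere by continuity of the linear map $\EE\mapsto\RR\EE\SS(\proj{\psi})$ and full support of $\mu$ --- would also work (and would not even need convexity), but you would have to actually carry it out rather than assert it. Either way one then invokes Lemma~\ref{lem-constant-p} to pass from ``$\RR\EE\SS(\proj{\psi})\propto\proj{\psi}$ for all $\psi$'' to $\RR\EE\SS=p_\EE\,\II_\XX$, after which $\int_\Upsilon p_\EE\,\mu(d\EE)=p>0$ follows by linearity. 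As written, your proposal names the destination but does not make the journey.
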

The proof of Proposition~\ref{prop-noise-model} is presented in 
Appendix~\ref{proof-prop-noise-model}.

\section{Discussion}
In this work, we analyzed pQEC procedure for a general noise model. We 
established 
the conditions to check if a given noise channel is probabilistically 
correctable. Moreover, we showed that mixed state encoding should be taken into 
account when maximizing the probability of successful error correction. 
Finally, we pointed the advantage of the probabilistic error-correcting 
procedure over the deterministic one. We saw a clear separation especially for 
a correction of Schur noise channels and random noise channels. We obtained the 
maximum value of Choi rank of probabilistically correctable noise channels. We 
also provide a method how to probabilistically correct noise channels with 
bounded Choi rank. 

There are many directions for further study that still remain to be 
explored. It would be interesting to strengthen Theorem~\ref{thm-best-bounds} 
and show the separation between $r(\XX,\YY)$ and $r_1(\XX,\YY)$ by improving 
the proposed proof technique in Appendix~\ref{proof-thm-best-bounds}. We 
obtained such separation for $\XX = \C^2$ and $\YY = \C^4$ in 
Corollary~\ref{cor-exact-r}. Another 
promising direction 
is to propose tools for 
the numerical analysis of pQEC protocols, based on 
Theorem~\ref{thm-uqec-general}. Such tools would help us estimate the value of 
$r(\XX, \YY)$ and gain an insight into probabilistically correctable noises 
that require mixed state encoding. Last but not least, we would like to 
calculate the worst-case probability of successful error correction for a given 
noise intensity $r \le r(\XX, \YY)$. For example, as we showed in 
Proposition~\ref{prop-qubit-rank}, the errors 
caused by a unitary interaction with an auxiliary qubit system ($r=2$), can be 
corrected by using only two physical qubits ($\dim(\YY)=4$). We can ask, how 
many times in average the procedure presented in Algorithm~\ref{alg} needs to 
be repeated.

\section*{Acknowledgments}

This work was supported by the project ``Near-term Quantum Computers: 
challenges, optimal implementations and applications'' under
Grant Number POIR.04.04.00-00-17C1/18-00, which is carried out within the 
Team-Net programme of the
Foundation for Polish Science co-financed by the European Union under the 
European Regional
Development Fund.

\bibliographystyle{ieeetr}
\bibliography{unambiguous.bib}
\newpage
\appendix

\section{}\label{sec-app}
\subsection{Constant probability of successful error 
correction}\label{proof-lem-constant-p}
\begin{lemma}\label{lem-constant-p}
	Let $\EE \in \CC(\YY)$, $\SS \in s\CC(\XX,\YY)$ and $\RR \in 
	s\CC(\YY,\XX)$. If for any pure state $\proj{\psi} \in \DD(\XX)$ it holds
	$\RR \EE \SS (\proj{\psi}) \propto \proj{\psi}$, then there exists $p \in 
	[0,1]$ such 
	that $\RR \EE \SS = p \II_\XX$.
\end{lemma}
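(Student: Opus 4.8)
The plan is to set $\Phi \coloneqq \RR\EE\SS$, which is a subchannel in $s\CC(\XX)$, and to exploit that $\Phi$ is completely positive (a composition of CP maps), hence Hermiticity-preserving and positivity-preserving. The hypothesis says that for every unit vector $\ket\psi$ there is a scalar $c_\psi \ge 0$ with $\Phi(\proj\psi) = c_\psi\proj\psi$, where non-negativity holds because $\Phi(\proj\psi)\in\PP(\XX)$. The whole statement reduces to two claims: that $c_\psi$ does not depend on $\psi$ — call the common value $p$ — and that $\Phi$ then equals $p\,\II_\XX$ as a linear map. Once both hold, $p = \tr\Phi(\proj\psi) = \tr(\RR\EE\SS(\proj\psi)) \in [0,1]$, since $\SS$ is trace non-increasing, $\EE$ trace preserving and $\RR$ trace non-increasing, while positivity forces $p\ge 0$.

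The crux is the first claim, and I would prove it by a polarization argument. Fix two orthonormal vectors $\ket i, \ket j$ and set $\ket\pm \coloneqq \frac1{\sqrt2}(\ket i \pm \ket j)$. Expanding $\proj{\pm}$ in the matrix units $\proj i,\proj j,\ketbra i j,\ketbra j i$ and applying $\Phi$ by linearity, the identity $\Phi(\proj\pm) = c_\pm \proj\pm$ becomes a matching of coefficients in the $2\times 2$ block spanned by $\ket i,\ket j$. Comparing the off-diagonal entries forces $c_+ = c_-$, and comparing the diagonal entries then forces $c_i = c_j = c_+ = c_-$. Thus every orthonormal pair carries a single constant, so $c_\psi$ is constant on each orthonormal basis. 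To link an arbitrary superposition back to $\ket i$, I apply the same pair argument to the orthonormal pair $\{\ket{+i},\ket{-i}\}$ with $\ket{\pm i}\coloneqq\frac1{\sqrt2}(\ket i \pm \ii\ket j)$: since $\frac1{\sqrt2}(\ket{+i}+\ket{-i}) = \ket i$, it yields $c_{+i} = c_{-i} = c_i = p$. Iterating the pairwise comparison shows all pure-state constants coincide.

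Finally I would pin down $\Phi$ on the off-diagonal matrix units. The $\ket\pm$ computation, now with the common value $p$, yields the symmetric relation $\Phi(\ketbra i j) + \Phi(\ketbra j i) = p(\ketbra i j + \ketbra j i)$, while the analogous computation with $\ket{\pm i}$ (using $c_{\pm i}=p$) yields the antisymmetric relation $\Phi(\ketbra i j) - \Phi(\ketbra j i) = p(\ketbra i j - \ketbra j i)$. Adding them gives $\Phi(\ketbra i j) = p\,\ketbra i j$ for all $i\neq j$, and $\Phi(\proj i) = p\proj i$ holds by hypothesis. Since the matrix units $\{\ketbra i j\}_{i,j}$ form a basis of $\MM(\XX)$ and $\Phi$ is linear, this gives $\Phi = p\,\II_\XX$, as required. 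The only genuinely delicate point is the polarization bookkeeping that simultaneously equates the constants and determines the off-diagonal images; the rest is routine linear algebra.
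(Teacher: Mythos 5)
Your proof is correct in substance, but it takes a genuinely different and more hands-on route than the paper. The paper's proof is a two-line basis trick: writing $\LL=\RR\EE\SS$ and $\LL(U\proj{i}U^\dagger)=p_{U,i}\,U\proj{i}U^\dagger$, it sums over the basis to get $\LL(\1_\XX)=U\bigl(\sum_i p_{U,i}\proj{i}\bigr)U^\dagger$ for \emph{every} unitary $U$; since the fixed operator $\LL(\1_\XX)$ is then diagonal in every orthonormal basis it must equal $p\1_\XX$, which forces all $p_{U,i}=p$ at once and settles constancy over all pure states in one stroke, after which spanning by pure-state projectors finishes the job exactly as you do. Your polarization argument replaces this with explicit bookkeeping on matrix units; it is more elementary and has the side benefit of exhibiting $\Phi(\ketbra{i}{j})=p\,\ketbra{i}{j}$ directly rather than through an abstract spanning step. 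One sentence of yours does overclaim, though: ``iterating the pairwise comparison shows all pure-state constants coincide'' is not literally true, because the vectors reachable from $\ket{i},\ket{j}$ by repeatedly taking orthogonal partners and forming $\tfrac{1}{\sqrt2}(\ket{a}\pm\ket{b})$ or $\tfrac{1}{\sqrt2}(\ket{a}\pm\ii\ket{b})$ constitute a finite (octahedral) orbit on the Bloch sphere of $\mathrm{span}(\ket{i},\ket{j})$ and never reach a generic superposition $\alpha\ket{i}+\beta\ket{j}$. This does not sink the proof: your final step needs the common constant only on one fixed orthonormal basis together with its $\pm$ and $\pm\ii$ superpositions, all of which you do establish, and once $\Phi=p\,\II_\XX$ is known on the matrix units, linearity yields $c_\psi=p$ for every pure state as a corollary rather than a prerequisite. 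If you want to keep the two-claim structure you announced, the clean repair is precisely the paper's observation: compute $\Phi(\1_\XX)$ in an orthonormal basis containing $\ket{i}$ and again in one containing an arbitrary $\ket{\phi}$, and compare.
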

\begin{proof}
	Let $\LL = \RR \EE \SS$ and for any unitary operator $U \in \UU(\XX)$ and 
	$i 
	= 0,\ldots,\dim(\XX)-1$ define $p_{U,i} \in [0,1]$ by 
	$\LL(U\proj{i}U^\dagger) = p_{U,i} 
	U\proj{i}U^\dagger$. We have $\LL(\1_\XX) = U 
	\left(\sum_i p_{U,i} 
	\proj{i} \right) U^\dagger$ for any $U$ and hence, there exists $p \in 
	[0,1]$ such that 
	$\LL(\1_\XX) = p \1_\XX$. That means, $ p_{U, i} = p$ for any $U$ and $i$, 
	so $\LL(\proj{\psi}) 
	= p\proj{\psi}$ 
	for any 
	$\proj{\psi} \in \DD(\XX)$. We obtain the thesis by 
	noting that $\text{span}_{\C}(\proj{\psi}) = \MM(\XX)$.
\end{proof}

\subsection{Proof of 
Theorem~\ref{thm-uqec-general}}\label{proof-thm-uqec-general}
\begin{non}{Theorem~\ref{thm-uqec-general}}
	Let $\EE = \kraus{(E_i)_i} \in \CC(\YY)$. The following conditions are
	equivalent:
	\begin{enumerate}[(A)]
		\item There exist error-correcting scheme $(\SS, \RR) 
		\in s\CC(\XX,\YY) \times s\CC(\YY,\XX)$ and $p>0$ such that
		\begin{equation}
			\RR \EE \SS  = p \II_\XX.
		\end{equation}
		\item There exist $S
		= \kraus{(S_k)_k} \in s\CC(\XX,\YY)$ and $R \in \PP(\YY)$, such that $R 
		\leq \1_\YY$, for which it holds
		\begin{equation}
			\kraus{ \left( \sqrt{R} E_i S_k \right)_{i,k} } = \kraus{(A_i)_i}: 
			\quad A_i \not = 0, A_j^\dagger A_i \propto
			\delta_{ij} \1_\XX.
		\end{equation}
		\item There exist $\SS =
		\kraus{(S_k)_k} \in s\CC(\XX,\YY)$, $R \in \PP(\YY)$, such that $R \leq 
		\1_\YY$ and a matrix $M= [M_{jl,ik}]_{jl,ik} 
		\not=0$, for which it
		holds
		\begin{equation}
			\forall_{i,j,k,l} \quad S_l^\dagger E_j^\dagger R E_i S_k = M_{jl, 
				ik} \1_\XX.			
		\end{equation}
		\item There exist $S_* \in \MM(\XX,\YY)$ and $R_* \in \MM(\YY, 
		\XX) $ such that
		\begin{equation}
			\forall_i \quad R_* E_i S_* \propto \1_\XX
		\end{equation}
		and there exists $i_0$, for which it holds $R_* E_{i_0} S_* \neq 0$.
	\end{enumerate}
	
	Moreover, if point $(A)$ holds for $\SS = \kraus{(S_k)_k}$ and $\RR = 
	\kraus{(R_l)_l}$, then $R \in \PP(\YY)$ from points $(B)$ and $(C)$ 
	can be chosen to satisfy $	R = \sum_l R_l^\dagger R_l.$ It also holds that
	$R_l E_i S_k \propto \1_\XX$  for any $i,k,l$.
\end{non}
\begin{proof}
	In order to show that $(A) \iff (B) \iff (C)$, in all implications 
	presented below, we will use the same encoding $\SS = 
	\kraus{(S_k)_k} \in s\CC(\XX, \YY)$. Hence, to simplify the proof, we 
	introduce the notation of $\FF 
	\coloneqq \EE \SS$ given in the form $\FF = \kraus{(F_i)_i}$.\\
	
	$(B) \implies (A)$\\
	Let us define $\alpha_i>0$ to satisfy $A_i^\dagger A_i = \alpha_i \1_\XX$
	and operation $\RR: \MM(\YY) \mapsto \MM(\XX)$ given by
	\begin{equation}
		\begin{split}
			\RR &= \kraus{\left( \alpha_i^{-1/2} A_i^\dagger \sqrt{R} 
			\right)_i}.
		\end{split}
	\end{equation}
	We will check that $\RR$ is a subchannel. First, from the definition of 
	$\RR$, it follows that $\RR$ is completely positive. Second, from the
	assumption $(B)$, operators $\alpha_i^{-1} A_i A_i^\dagger \in \PP(\YY)$ are
	orthogonal projectors and hence
	\begin{equation}
		\sum_i \alpha_i^{-1} \sqrt{R} A_i A_i^\dagger \sqrt{R}=\sqrt{R} \left( 
		\sum_i
		\alpha_i^{-1}  A_i A_i^\dagger \right) \sqrt{R} \leq R \leq \1_\YY.
	\end{equation}
	It means that $\RR \in s\CC(\YY,\XX)$. Finally, it holds
	\begin{equation}
		\RR \FF  = \kraus{\left(\alpha_j^{-1/2}A_j^\dagger \sqrt{R} F_i 
		\right)_{i,j}} =
		\kraus{\left(\alpha_j^{-1/2}A_j^\dagger 
		A_i\right)_{i,j}}=\kraus{(\alpha_i^{1/2} \1_\XX)_i}  = 
		p 
		\II_\XX,
	\end{equation}
	where we introduced $p \coloneqq \sum_i \alpha_i > 0$. \\
	
	$(A) \implies (B)$\\
	Let $\RR = \kraus{(R_k)_k}$ and take $R = \sum_k R_k^\dagger R_k \in 
	\PP(\YY)$.
	From $\RR \in s\CC(\YY,\XX)$ it follows $R \le \1_\YY$. Define $\Pi_R$ to 
	be the projector on the support of $R$. One can show that $R_k \Pi_R =
	R_k$ for each $k$. We define $\widetilde{\RR} = 
	\kraus{\left(\widetilde{R_k}\right)_k}$, where $ 
	\widetilde{R_k} \coloneqq R_k \sqrt{R}^{-1}$. 
	From the definition of $\widetilde{\RR}$ we have $ \sum_k 
	\widetilde{R_k}^\dagger 
	\widetilde{R_k} = \sqrt{R}^{-1} R\sqrt{R}^{-1} = \Pi_R.$ Using the 
	assumption $(A)$ we get $p \II_\XX = \RR \FF = \widetilde{\RR} \circ 
	\kraus{(\sqrt{R} 
	F_i)_i}$. As we have $p>0$, it follows that $\kraus{(\sqrt{R} F_i)_i } \not 
	= 
	0$. 
	Hence, 
	there exists a canonical decomposition
	\begin{equation}
		\kraus{(\sqrt{R} F_i)_i } = \kraus{(A_i)_i}: \quad A_i \not=0, 
		\tr(A_j^\dagger 
		A_i) = 0 \text{ 
		for }
		i \neq j.
	\end{equation}
	From the relationship between Kraus representations, it follows that $A_i$ 
	satisfy $\Pi_R A_i = 
	A_i$. Then, by
	Choi-Jamio{\l}kowski isomorphism we have
	\begin{equation}
		p \proj{\1_\XX} = (\RR \FF \otimes \II_\XX)(\proj{\1_\XX})=\sum_i 
		(\widetilde{\RR} \otimes \II_\XX)(\proj{A_i}).
	\end{equation}
	Therefore, from the extremality of the point $\proj{\1_\XX}$ in $\PP(\XX 
	\otimes \XX)$ we have
	\begin{equation}
		\widetilde{R_k} A_i \propto \1_\XX
	\end{equation}
for any $i,k$. On the one hand we get
	\begin{equation}
		\sum_k A_j^\dagger \widetilde{R_k}^\dagger \widetilde{R_k} A_i 
		\propto
		\1_\XX
	\end{equation}
	and on the other hand
	\begin{equation}
		\sum_k A_j^\dagger \widetilde{R_k}^\dagger \widetilde{R_k}
		A_i=A_j^\dagger \Pi_R A_i = A_j^\dagger A_i.
	\end{equation}
	The above conditions provide that $A_j^\dagger A_i = c_{ji} 
	\1_\XX$, for some $c_{ji} \in \C$. Then, for $i \neq j$ we have $0=\tr 
	(A_j^\dagger
	A_i)=c_{ji} \dim(\XX)$ and eventually $A_j^\dagger A_i = 0$.\\
	
	$(B) \implies (C)$\\
	Let us define $\alpha_k>0$ to satisfy $A_k^\dagger A_k = \alpha_k \1_\XX$.
	From the relationship between Kraus decompositions $\kraus{(\sqrt{R} F_i)_i 
	}$ and $\kraus{(A_i)_i}$, there	exists isometry operator $U$, such that
	\begin{equation}
		\sqrt{R} F_i = \sum_k U_{ik} A_k.
	\end{equation}
	Therefore, it holds
	\begin{equation}
		F_j^\dagger R F_i = \sum_{k,k'} U_{ik}
		\overline{U_{jk'}}A_{k'}^\dagger A_k = \sum_k U_{ik} \overline{U_{jk}}
		\alpha_k \1_\XX.
	\end{equation}
	Let us define a matrix $M = [M_{j,i}]_{j,i}$ where $M_{j,i} = \sum_k U_{ik} 
	\overline{U_{jk}}
	\alpha_k$.	Note, that
	\begin{equation}
		\tr(M) = \sum_{i,k} |U_{ik}|^2 \alpha_k = \sum_k \alpha_k > 0.
	\end{equation}

	$(C) \implies (B)$\\
	Let us define a operator $F=\sum_i \bra{i} \otimes F_i$. From the
	assumption $(C)$ it follows
	\begin{equation}
		F^\dagger R F  = M \otimes
		\1_\XX.
	\end{equation}
	That implies $M \ge 0$. Take the spectral decomposition $M =
	U^\dagger D U$ and define
	\begin{equation}
		A_i = \sum_k \overline{U_{ik}} \sqrt{R} F_k.
	\end{equation}
	Observe that $\kraus{(\sqrt{R} F_i)_i }  = \kraus{(A_i)_i}$. We 
	obtain
	\begin{equation}
		A_j^\dagger A_i = \sum_{k,k'}  \overline{U_{ik}} U_{jk'}  
		F_{k'}^\dagger R
		F_k  = \sum_{k,k'} \overline{U_{ik}} U_{jk'} M_{k'k} \1_\XX = 
		D_{ji}
		\1_\XX.
	\end{equation}
	This is equivalent to $A_j^\dagger A_i \propto \delta_{ij} \1_\XX$. Finally,
	$A_i \neq 0$ if and only if $D_{ii}>0$ and by the fact $M \neq 0$ we
	conclude the set $\{A_i: A_i \neq 0\}$ is not empty.\\
	
	$(A) \implies (D)$\\
	Let $\SS = \kraus{(S_k)_k}$ and $\RR = \kraus{(R_l)_l}$. Using the 
	Choi-Jamio{\l}kowski isomorphism we get
	\begin{equation}
		\left(\RR\EE\SS \otimes \II_\XX  \right)\left(\proj{\1_\XX}\right) = 
		\sum_{l,i,k} \proj{R_l E_i S_k} = p \proj{\1_\XX}.
	\end{equation}
	Therefore, from the extremality of the point $\proj{\1_\XX}$ in $\PP(\XX 
	\otimes \XX)$ we obtain $R_l E_i S_k \propto \1_\XX$. There exist $l_0, 
	i_0, k_0$ such that $R_{l_0} E_{i_0} S_{k_0} \neq 0$. We can take $S_* = 
	S_{k_0}$ and $R_* = R_{l_0}$.\\
	
	$(D) \implies (A)$\\
	There exist $q_0, q_1 > 0$ for which $\SS 
	\coloneqq q_0\kraus{(S_*)} \in s\CC(\XX, \YY)$ and $\RR \coloneqq 
	q_1\kraus{(R_*)} \in s\CC(\YY, \XX)$. One may note that $0 \neq \RR \EE 
	\SS\propto \II_\XX$.\\
	
	$(*)$\\
	Assume that $\SS = \kraus{(S_k)_k}$, $\RR = 
	\kraus{(R_l)_l}$ and it holds $(A)$. From the proof of 
	implications $(A) \implies (B)$ and $(B) \implies (C)$ it follows that 
	$R$ can be chosen as $R = \sum_l R_l^\dagger R_l$. The relation $R_lE_iS_k 
	\propto \1_\XX$ was proven in $(A) \implies (D)$. 	
\end{proof}

\subsection{Proof of 
	Proposition~\ref{prop-realization}}\label{proof-prop-realization}	
\begin{non}{Proposition~\ref{prop-realization}}
	For a given channel $\EE \in \CC(\YY)$, let us fix an error-correcting 
	scheme $(\SS, \RR) \in s\CC(\XX,\YY) \times s\CC(\YY,\XX)$ such that $
	\RR \EE \SS = p \II_\XX$, for some $p > 0.$ Then, the following holds:
	\begin{enumerate}[(A)]
		\item There exist $\widetilde \SS \in \CC(\XX,\YY)$ and $\widetilde \RR 
		\in 
		s\CC(\YY,\XX)$ such that $\widetilde \RR \EE \widetilde \SS = p 
		\II_\XX.$
		\item If $\RR \in \CC(\YY,\XX)$, then there exists $\widetilde \SS = 
		\kraus{(\widetilde S)} 
		\in \CC(\XX,\YY)$ such that $\RR \EE \widetilde \SS = \II_\XX.$
		\item If $p=1$, then there exist $\widetilde \SS = \kraus{(\widetilde 
		S)} 
		\in \CC(\XX,\YY)$ and $\widetilde \RR \in \CC(\YY,\XX)$ such that $
		\widetilde \RR \EE \widetilde \SS = \II_\XX.$
	\end{enumerate}
\end{non}
\begin{proof}
	$(A)$\\
	Let $\SS = \kraus{(S_k)_k}$ and $S = \sum_k S_k^\dagger S_k \le \1_\XX$. 
	Using 
	Theorem~\ref{thm-uqec-general} one can show that there exists $k_0$ for 
	which $\mathrm{rank}(S_{k_0})=\dim(\XX)$. Hence, $S$ is invertible.
	Define $\widetilde{\SS} \in \CC(\XX,\YY), \widetilde{\RR} \in 
	s\CC(\YY,\XX)$ given 
	by the equations
	\begin{equation}
		\begin{split}
			\widetilde{\SS}(X) &= \SS \left( S^{-1/2} X S^{-1/2} \right),\\
			\widetilde{\RR}(Y) &= S^{1/2} \RR(Y) S^{1/2}.
		\end{split}
	\end{equation}
	We obtain $\widetilde{\RR} \EE \widetilde{\SS}(X) = S^{1/2} (\RR \EE 
	\SS) \left(S^{-1/2} X 
			S^{-1/2}\right) S^{1/2}= pX.$\\
	
	$(B)$\\
	Let $\SS = \kraus{ (S_k)_k }$ and define $\SS_k(X) = S_kXS_k^\dagger$. From 
	Theorem~\ref{thm-uqec-general} there exists $k_0$ such that $\RR \EE 
	\SS_{k_0} = p_{k_0} \II_\XX$, for some $p_{k_0} > 0$. For any $\proj{\psi} 
	\in \DD(\XX)$ it holds 
	then
	\begin{equation}
		p_{k_0} = \tr \left( \RR \EE \SS_{k_0} (\proj{\psi})\right) = \tr 
		\left( 
		\SS_{k_0} 
		(\proj{\psi})\right) = \bra{\psi} S_{k_0}^\dagger S_{k_0} \ket{\psi}.
	\end{equation}
	Hence, we get $S_{k_0}^\dagger S_{k_0} = p_{k_0} \1_\XX$. Define 
	$\widetilde 
	\SS = 
	\frac{1}{p_{k_0}} \SS_{k_0} \in \CC(\XX,\YY)$ and note that $\RR \EE 
	\widetilde 
	\SS 
	= \II_\XX$.\\
	
	$(C)$\\
	Let $\SS = \kraus{(S_k)_k}$ and $\RR = \kraus{(R_k)_k}$. For any 
	$\proj{\psi} 
	\in 
	\DD(\XX)$ we have
	\begin{equation}
		1 = \tr(\RR\EE\SS(\proj{\psi})) \le \tr(\SS(\proj{\psi})) \le 1.
	\end{equation}
	Therefore, for any $\proj{\psi} \in \DD(\XX)$ we get $ \bra{\psi} 
	\left(\sum_k 
	S_k^\dagger S_k \right) \ket{\psi} = 1$, which implies $\SS \in \CC(\XX, 
	\YY)$. Let $R = \sum_k R_k^\dagger R_k \le \1_\YY$. Then, it holds $
		\tr \left( (\1_\YY - R) \EE \SS (X) \right) = 0.$
	Define $\widetilde \RR \in \CC(\YY,\XX)$ by the equation
	\begin{equation}
		\widetilde \RR (Y) = \RR(Y) + \tr \left( (\1_\YY - R) Y \right) 
		\rho_{\XX}^*.
	\end{equation}
	Observe that $\widetilde \RR \EE \SS = \II_\XX$. The rest of the proof 
	follows from $(B)$.
\end{proof}

\subsection{Proof of Lemma~\ref{lem-example-gen}}\label{proof-lem-example-gen}
\begin{non}{Lemma~\ref{lem-example-gen}}
	Let $R \in \PP(\C^4)$ and $R \le \1_{\C^4}$. Define $\Pi_R$ as a projector 
	on 
	the 
	support of $R$. For $\EE_R$ defined as 
	\begin{equation}
		\EE_R(Y) = \proj{0} \otimes \tr_1\left( \sqrt{R} Y \sqrt{R}\right) + 
		\proj{1} \otimes \tr\left([\1_{\C^4} - R] Y\right) \rho_2^*
	\end{equation} we have the following simplified form of the 
	maximization problem $p_0(R)$:
	\begin{equation}\label{eq-param-noise-lem-app}
		p_0(R) = \max\left\{ \tr(P): \,\, P \in \PP(\C^2), \tr_1\left(R^{-1} (P 
		\otimes \1_{\C^2})\right) \le \1_{\C^2}, 
		\,\, 
		\forall_{X \in \MM(\C^2)} \, \,\Pi_R (P \otimes X) \Pi_R = P \otimes X 
		\right\}.
	\end{equation}
	An optimal scheme $(\SS, \RR)$ which achieves the probability $p_0(R)$, 
	that is 
	$\RR \EE_R \SS = p_0(R) \II_{\C^2}$, can be taken as
	\begin{equation}
		\begin{split}
			\SS(X) &= \sqrt{R}^{-1} (P \otimes X) \sqrt{R}^{-1},\\
			\RR(Y) &= \tr_1\left(Y\left(\proj{0} \otimes \1_{\C^2}\right) 
			\right),
		\end{split}
	\end{equation}	
	where $P$ is an argument maximizing $p_0(R)$ in 
	Eq.~\eqref{eq-param-noise-lem-app}. 
	Moreover, if there exists another optimal scheme $(\widetilde\SS, 
	\widetilde\RR)$, that is $\widetilde\RR \EE_R \widetilde \SS = p_0(R) 
	\II_{\C^2}$, 
	then $\mathrm{rank}(J(\SS)) \le \mathrm{rank}(J(\widetilde{\SS}))$.
\end{non}
\begin{proof}
Let us investigate the form of an optimal scheme $(\SS, \RR)$ that maximize the 
probability $p$ of successful error correction, $\RR \EE_R \SS = p \II_{\C^2}$. 
First, one can note that $\RR$ must be of the form $\RR(A \otimes B) = 
\tr\left(A \proj{0}\right) \widetilde \RR (B)$, where $\widetilde{\RR} = 
\kraus{(\widetilde{R_k})_k} \in s\CC(\C^2)$. Let us introduce an operation $\FF 
= \kraus{(F_i)_i} \in s\CC(\C^2)$ given by $\FF(X) = 
\tr_1\left(\sqrt{R}\SS(X)\sqrt{R}\right)$. We obtain $p \II_{\C^2} = \RR 
\EE_R \SS = \widetilde \RR \FF$. From Theorem~\ref{thm-uqec-general} we have 
$\widetilde R_k F_i \propto \1_{\C^2}$ and there are $k_0, i_0$ such that 
$\widetilde R_{k_0} F_{i_0} \neq 0$. Hence, for each $k$ we have $\widetilde 
R_k \propto F_{i_0}^{-1}$. That implies the operation $\widetilde \RR$ can be 
written as $\widetilde\RR(X) = \widetilde R X \widetilde R^\dagger$.
Now, consider another scheme $(\SS', \RR')$, where 
$\RR'(A \otimes B) = \tr\left(A \proj{0}\right) B$ and $\SS'(X) = 
S\left(\widetilde R X \widetilde R^\dagger\right) \in s\CC(\C^2, 
\C^4)$. We get 
\begin{equation}
	\RR'\EE_R\SS'(X) = \tr_1\left(\sqrt{R}\SS\left( \widetilde R X  
	\widetilde R^\dagger \right)\sqrt{R}\right) = \FF \left( \widetilde R X  
	\widetilde R^\dagger \right) = \left(\widetilde R\right)^{-1} \left( 
	\widetilde 
	\RR \FF 
	\left( \widetilde R X  
	\widetilde R^\dagger \right)\right) \left(\widetilde R^\dagger\right)^{-1} 
	= p X.
\end{equation}
Therefore, the scheme $(\SS', \RR')$ is also optimal and 
$\mathrm{rank}(J(\SS'))
\le \mathrm{rank}(J(\SS))$. 

To sum up, from now, we will consider the optimal scheme $(\SS, \RR)$, where 
$\RR(Y) = \tr_1\left(Y\left(\proj{0} \otimes \1_{\C^2}\right) \right)$. The 
equation 
$\RR \EE_R \SS = p \II_{\C^2}$ can be rewritten as 
\begin{equation}\label{eq-example-eq}
	\tr_1\left(\sqrt{R}\SS(X)\sqrt{R}\right) = pX,
\end{equation}
for any $X \in \MM(\C^2)$. According to Theorem~\ref{thm-uqec-general} we have
$\sqrt{R}\SS(X)\sqrt{R} = \sum_i A_i X A_i^\dagger,$ where $A_j^\dagger A_i 
\propto \delta_{ij} \1_{\C^2}.$ Using Theorem~\ref{thm-uqec-general} to the 
equation 
$\tr_1(\sum_i A_i X A_i^\dagger) = pX$ we obtain that $A_i = \ket{v_i} \otimes 
\1_{\C^2}$ for some orthogonal vectors $\ket{v_i} \in \C^2$. Let $P = \sum_i 
\proj{v_i}$. We get
$\sqrt{R}\SS(X)\sqrt{R} = P \otimes X.$ Without loss of 
the generality we may consider $\SS$ such that $\Pi_R \SS(X)\Pi_R = 
\SS(X)$ (one can note that $\mathrm{rank}(J(\SS))$ will not increase).
Hence, the equation $\sqrt{R} \SS(X) \sqrt{R} = P \otimes X$ implies 
$\SS(X) = \sqrt{R}^{-1} (P \otimes X) \sqrt{R}^{-1}.$
The condition $\sqrt{R} \SS(X) \sqrt{R} = P \otimes X$ becomes 
now equivalent with $\Pi_R (P \otimes X) \Pi_R = P \otimes X$ and the condition 
$\SS \in s\CC(\C^2, \C^4)$ is then equivalent with $\tr_1\left(R^{-1} (P 
\otimes \1_{\C^2})\right) \le \1_{\C^2}$. Therefore, basing on 
Eq.~\eqref{eq-example-eq} we can express the probability $p_0(R)$ as:
\begin{equation}
\begin{split}
p_0(R) &= \max\left\{p: \,\, \RR \EE_R \SS = p \II_{\C^2}, \,\, (\SS, \RR) 
\in s\CC(\C^2, \C^4) \times s\CC(\C^4,\C^2)\right\}\\
&= \max\left\{p: \,\,  	\forall_{X \in \MM(\C^2)} \, 
\,\tr_1\left(\sqrt{R}\SS(X)\sqrt{R}\right) = pX, \,\, \SS 
\in s\CC(\C^2, \C^4)\right\}\\
&= \max\left\{ \tr(P): \,\, P \in \PP(\C^2), \tr_1\left(R^{-1} (P 
	\otimes \1_{\C^2})\right) \le \1_{\C^2}, 
	\,\, 
	\forall_{X \in \MM(\C^2)} \, \,\Pi_R (P \otimes X) \Pi_R = P \otimes X 
	\right\}.
\end{split}
\end{equation}
\end{proof}

\subsection{Proof of 
Corollary~\ref{cor-example-rank-123}}\label{proof-cor-example-rank-123}
\begin{non}{Corollary~\ref{cor-example-rank-123}}
	Let us take $R \in \PP(\C^4)$ such that $R \le \1_{\C^4}$ and 
	$\mathrm{rank}(R) 
	< 
	4$. Define $\Pi_R$ as a projector on 
	the 
	support of $R$. For the noise channel defined as 
\begin{equation}
	\EE_R(Y) = \proj{0} \otimes \tr_1\left( \sqrt{R} Y \sqrt{R}\right) + 
	\proj{1} \otimes \tr\left([\1_{\C^4} - R] Y\right) \rho_2^*
\end{equation}
	we 
	have $p_0(R) 
	= 
	p_1(R)$. Moreover, it holds
	\begin{equation}
		p_0(R) = 
		\begin{cases}
			0, &\mathrm{rank}(R) \le 1,\\
			0, &\mathrm{rank}(R) = 2, \Pi_R \neq 
			\proj{\psi} \otimes \1_{\C^2}, \ket{\psi} \in \C^2,\\
			\|\tr_1\left(R^{-1} 
			(\proj{\psi} 
			\otimes \1_{\C^2})\right)\|_\infty^{-1}, &\mathrm{rank}(R) = 2, 
			\Pi_R = 
			\proj{\psi} \otimes \1_{\C^2}, \ket{\psi} \in \C^2,\\
			0, &\mathrm{rank}(R) = 3, \Pi_R = 
			\1_{\C^4} - \proj{\alpha}, \C^4 \ni \ket{\alpha} \mbox{ is 
				entangled},\\
			\|\tr_1\left(R^{-1} 
			(\proj{\psi} 
			\otimes \1_{\C^2})\right)\|_\infty^{-1}, &\mathrm{rank}(R) = 3, 
			\Pi_R = 
			\1_{\C^4} - 
			\proj{\psi^\perp} \otimes 
			\proj{\phi}, \ket{\psi^\perp},\ket{\phi} \in \C^2, \proj{\psi} \in 
			\DD(\C^2).
		\end{cases}
	\end{equation}
\end{non}
\begin{proof}
The proof is based on Lemma~\ref{lem-example-gen}. Let us investigate the value 
of $p_0(R)$. We will consider three cases depending 
on $\mathrm{rank}(R)$. 

In the first case, we assume that $\mathrm{rank}(R) \in \{0,1\}$. Then, for $P$ 
satisfying $\Pi_R (P \otimes X) \Pi_R = P \otimes X$ we have
\begin{equation}
	2 \mathrm{rank}(P) = \mathrm{rank}(P \otimes \1_{\C^2}) = 
	\mathrm{rank}(\Pi_R (P 
	\otimes \1_{\C^2}) \Pi_R) \le \mathrm{rank}(\Pi_R) \le 1.
\end{equation}
Hence, we obtain $\mathrm{rank}(P) \le \frac{1}{2}$ which implies $P = 0$. In 
this case $p_0(R) = 0$.

In the second case, we assume that $\mathrm{rank}(R) = 2$. Using the same 
argumentation for $P$ as in the first case, we get $\mathrm{rank}(P) \le 1$. We 
can write $P = \proj{x}$ for $\ket{x} \in \C^2$.  Note that, if $P \neq 0$, 
then from the equality $\Pi_R \ket{x, y} = \ket{x, y}$ for $\ket{y} \in \C^2$ 
we get $\Pi_R = 
\proj{\psi} 
\otimes \1_{\C^2}$, for $\ket{\psi} = \frac{1}{\|x\|}\ket{x}$. Therefore, if 
for all 
$\proj{\psi} \in 
\DD(\C^2)$ it holds $\Pi_R \neq \proj{\psi} \otimes \1_{\C^2}$, we have $p_0(R) 
= 
0$. 
Otherwise, if $\Pi_R = \proj{\psi_0} \otimes \1_{\C^2}$ for $\proj{\psi_0} \in 
\DD(\C^2)$, we take $P = p \proj{\psi_0}$ for $p \ge 0$. From the assumption 
$p \tr_1\left(R^{-1} (\proj{\psi_0} 
\otimes \1_{\C^2})\right) \le \1_{\C^2}$ we get $p_0(R) = \|\tr_1\left(R^{-1} 
(\proj{\psi_0} 
\otimes \1_{\C^2})\right)\|_\infty^{-1}$.

In the third case, we assume that $\mathrm{rank}(R) = 3$. Again, $P$ can be 
written in the form $P = \proj{x}$ 
for $\ket{x} \in \C^2$. Let $\Pi_R=\1_{\C^4} - \proj{\xi},$ where $\proj{\xi} 
\in 
\DD(\C^4)$. If $P \neq 0$, then 
from the equality $\Pi_R \ket{x, y} = \ket{x, y}$ for $\ket{y} \in \C^2$ we get 
$\braket{\xi}{x,y}=0$, for $\ket{y} \in \C^2$, and hence, $\ket{\xi} 
\propto \ket{x^\perp} \otimes \ket{y}$. Therefore, if $\ket{\xi}$ is entangled, 
we have $p_0(R) = 0$. Otherwise, if $\Pi_R = \1_{\C^4} - \proj{\psi_0^\perp} 
\otimes 
\proj{\phi_0}$ for $\proj{\psi_0^\perp}, \proj{\phi_0} \in 
\DD(\C^2)$, we take $P = p \proj{\psi_0}$ for $p \ge 0$. The assumption 
$p \tr_1\left(R^{-1} (\proj{\psi_0} 
\otimes \1_{\C^2})\right) \le \1_{\C^2}$ implies $p_0(R) = \|\tr_1\left(R^{-1} 
(\proj{\psi_0} 
\otimes \1_{\C^2})\right)\|_\infty^{-1}$.
\end{proof}

\subsection{Proof of 
Proposition~\ref{prop-mixed-encoding}}\label{proof-prop-mixed-encoding}
\begin{non}{Proposition~\ref{prop-mixed-encoding}}
	Let us define an unitary matrix $U \in \UU(\C^4)$ which columns form the 
	magic 
	basis
	\begin{equation}
		U = \frac{1}{\sqrt{2}} \left[\begin{array}{cccc}
			1 & 0 & 0 & i\\
			0 & i & 1 & 0\\
			0 & i & -1& 0\\
			1 & 0 & 0 &-i
		\end{array}\right].
	\end{equation}
	Let us also define a diagonal operator $D(\lambda) \coloneqq 
	\mathrm{diag}^\dagger 
	\left(\lambda\right)$, which is parameterized by 
	a $4-$dimensional real vector $\lambda = (\lambda_1, \lambda_2, \lambda_3, 
	\lambda_4)$, for which it holds $0 < \lambda_i \le 1$. For $R = U 
	D(\lambda) 
	U^\dagger$ and the noise channel $\EE_R$ defined as
	\begin{equation}
		\EE_R(Y) = \proj{0} \otimes \tr_1\left( \sqrt{R} Y \sqrt{R}\right) + 
		\proj{1} \otimes \tr\left([\1_{\C^4} - R] Y\right) \rho_2^*
	\end{equation} we have
	\begin{equation}
		\begin{split}
			p_0(R) &=  \frac{4}{\tr(R^{-1})},\\
			p_1(R) &= \frac{4}{\tr(R^{-1}) + 
			\min\left\{ \left|\frac{1}{\lambda_1} - 
				\frac{1}{\lambda_2} 
				- \frac{1}{\lambda_3} + \frac{1}{\lambda_4}\right|, 
				\left|\left|\frac{1}{\lambda_1}-\frac{1}
				{\lambda_4}\right|
				-\left|\frac{1}{\lambda_2}-
				\frac{1}{\lambda_3}\right|\right| \right\}}.
		\end{split}
	\end{equation}
\end{non}
\begin{proof}
First, we calculate $p_0(R)$. Let $\ket{x} =(x_0, 
x_1)^\top$. Then, we have
\begin{equation}\label{eq-proof-example2}
	(\bra{x} \otimes 
	\1_{\C^2}) R^{-1} (\ket{x} \otimes \1_{\C^2}) =
	\frac{1}{2}\left[\begin{array}{cc}
		\frac{|x_0|^2}{\lambda_1}+ \frac{|x_1|^2}{\lambda_2} + 
		\frac{|x_1|^2}{\lambda_3} + \frac{|x_0|^2}{\lambda_4}& \frac{x_1 
			\bar 
			x_0}{\lambda_1} + \frac{x_0 \bar 
			x_1}{\lambda_2} - \frac{x_0 \bar 
			x_1}{\lambda_3} - \frac{x_1 \bar 
			x_0}{\lambda_4} \\
		\frac{x_0 \bar 
			x_1}{\lambda_1} + \frac{x_1 \bar 
			x_0}{\lambda_2} - \frac{x_1 \bar 
			x_0}{\lambda_3} - \frac{x_0 \bar 
			x_1}{\lambda_4} & \frac{|x_1|^2}{\lambda_1}+ 
		\frac{|x_0|^2}{\lambda_2} + \frac{|x_0|^2}{\lambda_3} + 
		\frac{|x_1|^2}{\lambda_4}
	\end{array}\right].
\end{equation}
We obtain $\tr \left( (\bra{x} \otimes \1_{\C^2}) R^{-1} 
(\ket{x} \otimes \1_{\C^2})\right) 
=\frac{1}{2}\left(\frac{1}{\lambda_1}+\frac{1}{\lambda_2}+\frac{1}
{\lambda_3}+
\frac{1}{\lambda_4}\right) \|x\|_2^2 = \frac12 \tr(R^{-1}) \|x\|_2^2$.
Hence, for any $\rho \in \DD(\C^2)$ we have
$\tr\left(R^{-1} (\rho
\otimes \1_{\C^2})\right)=	\frac12 \tr(R^{-1}) $. Eventually, we obtain the 
following upper bound
\begin{equation}
	\begin{split}
		\|\tr_1\left(R^{-1} 
		(\rho
		\otimes \1_{\C^2})\right)\|_\infty^{-1} \le 2 \left( \tr\left(R^{-1} 
		(\rho
		\otimes \1_{\C^2})\right) \right)^{-1} = 4 
		\left(\tr(R^{-1})\right)^{-1}.
	\end{split}
\end{equation}
That means, $p_0(R) \le 4 \left(\tr(R^{-1})\right)^{-1}.$ To saturate this 
bound, we take the maximally mixed state $\rho = \rho_2^*$ and by using 
Eq.~\eqref{eq-proof-example2} we calculate 
\begin{equation}
	\|\tr_1\left(R^{-1} 
	(\rho_2^*
	\otimes \1_{\C^2})\right)\|_\infty^{-1}= 2 
	\|\tr_1\left(R^{-1}\right)\|_\infty^{-1}=2\left\| \frac12 \tr(R^{-1}) 
	\1_{\C^2}\right\|_\infty^{-1} = 4 \left(\tr(R^{-1})\right)^{-1}.
\end{equation}
Therefore, we showed that $	p_0(R) = 4\left(\tr(R^{-1})\right)^{-1}.$

In the case of $p_1(R)$, to calculate the largest eigenvalue of 
$\tr_1\left(R^{-1} (\proj{x}	\otimes \1_{\C^2})\right)$ we use 
Eq.~\eqref{eq-proof-example2} for $\ket{x} = (|x_0|, |x_1| \alpha)^\top$, such 
that $|x_0|^2+|x_1|^2=1$ and $|\alpha|=1$. One may calculate that the largest 
eigenvalue minimized over $\alpha$ is given by
\begin{equation}
	\begin{split}
		\frac{1}{4}\left(\tr(R^{-1}) + 
		\sqrt{\left(\left(\frac{1}{\lambda_1} + \frac{1}{\lambda_4}\right) - 
		\left( 
			\frac{1}{\lambda_2} 
			+ \frac{1}{\lambda_3}\right) 
	\right)^2(|x_0|^2-|x_1|^2)^2+4\left(\left|\frac{1}{\lambda_1}-\frac{1}
			{\lambda_4}\right|
			-\left|\frac{1}{\lambda_2}-
			\frac{1}{\lambda_3}\right|\right)^2|x_0|^2|x_1|^2}\right).
	\end{split}
\end{equation}
It turns out, there are only two situations when this expression is minimized: 
\begin{itemize}
	\item For $|x_0| = 0$ and $|x_1|=1$ (or equivalently $|x_0|=1$ and 
	$|x_1|=0$), we obtain 
	\begin{equation}
		\begin{split}
			\frac{1}{4}\left(\tr(R^{-1}) + 
			\left|\frac{1}{\lambda_1} - \frac{1}{\lambda_2} 
			- \frac{1}{\lambda_3} + \frac{1}{\lambda_4}\right| \right).
		\end{split}
	\end{equation}
	\item For $|x_0|=|x_1|=\frac{1}{\sqrt{2}}$, we obtain
	\begin{equation}
		\begin{split}
			\frac{1}{4}\left(\tr(R^{-1}) + 
			\left|\left|\frac{1}{\lambda_1}-\frac{1}
			{\lambda_4}\right|
			-\left|\frac{1}{\lambda_2}-
			\frac{1}{\lambda_3}\right|\right| \right).
		\end{split}
	\end{equation}
\end{itemize}

Hence, the optimal value $p_1(R)$ equals
\begin{equation}
	p_1(R) = \frac{4}{\tr(R^{-1}) + \min\left\{\left|\frac{1}{\lambda_1} - 
		\frac{1}{\lambda_2} 
		- \frac{1}{\lambda_3} + \frac{1}{\lambda_4}\right|, 
		\left|\left|\frac{1}{\lambda_1}-\frac{1}
		{\lambda_4}\right|
		-\left|\frac{1}{\lambda_2}-
		\frac{1}{\lambda_3}\right|\right| \right\}}.
\end{equation}
\end{proof}

\subsection{Proof of Proposition~\ref{prop-prop}}\label{proof-prop-prop}
\begin{non}{Proposition~\ref{prop-prop}}
	For any $\XX$, $\YY$ we have the following 
	properties:
	\begin{enumerate}[(A)]
		\item $\xi_1(\XX,\YY) \subset \xi(\XX,\YY),$
		\item If $\dim(\XX) > \dim(\YY)$, then $\xi(\XX,\YY) = \emptyset,$
		\item If $\dim(\XX) \le \dim(\YY)$, then $\xi_1(\XX,\YY) \neq 
		\emptyset,$ 
		\item If $ \dim(\XX) = \dim(\YY)$, then $\xi_1(\XX,\YY) = \xi(\XX,\YY).$
	\end{enumerate}
\end{non} 
\begin{proof}
	$(D)$\\
	Let us take $\EE = \kraus{ (E_i)_i } \in 
	\xi(\XX,\YY)$. From Theorem~\ref{thm-uqec-general} $(D)$ 
	there exist $S_* \in \MM(\XX,\YY)$ and $R_* \in \MM(\YY, 
	\XX) $ such that $R_* E_i S_* \propto \1_\XX$, and there exists $i_0$ for 
	which it holds $R_* E_{i_0} S_* \neq 0$. It implies that $R_*$ and $S_*$ 
	are invertible, so for all $i$ we have $E_i \propto R_*^{-1} S_*^{-1}.$
	Hence, $\mathrm{rank}(J(\EE))=1$, so we can write $\EE(X) = E X 
	E^\dagger$, for $E \in \UU(\XX)$. By taking 
	$\RR = 
	\II_\XX$ and $\SS = \EE^\dagger$ we get $\EE \in 
	\xi_1(\XX,\YY)$.\\
\end{proof}

\subsection{Proof of Theorem~\ref{thm-nowhere}}\label{proof-thm-nowhere}
\begin{non}{Theorem~\ref{thm-nowhere}}
	Let $\XX$ and $\YY$ be Euclidean spaces for which $\dim(\XX) < \dim(\YY)$. 
	Then, 
	the set $\xi_1(\XX,\YY)$ is a nowhere dense subset of $\xi(\XX,\YY)$. 
\end{non}
\begin{proof}
	First, we will prove that $\xi_1(\XX,\YY)$ is a closed set. Define a 
	sequence $(\EE_n)_{n \in \N} \subset 
	\xi_1(\XX,\YY)$ 
	that converges to $\EE = \lim\limits_{n \to \infty} \EE_n \in \CC(\YY).$
	From Proposition~\ref{prop-realization} there exist two sequences 
	$(\SS_n)_{n \in \N} \subset \CC(\XX,\YY)$ and $(\RR_n)_{n \in \N} \subset 
	\CC(\YY,\XX)$ such that $\RR_n \EE_n \SS_n = \II_\XX$ for $n \in \N$. Both 
	sets $\CC(\XX,\YY)$ and 
	$\CC(\YY,\XX)$ are compact, so there exists a subsequence $(n_k)_{k 
		\in 
		\N}$, such that $(\SS_{n_k})_{k \in \N}$, $(\RR_{n_k})_{k \in \N}$ 
	converge to 
	some $\SS \in \CC(\XX,\YY), \RR \in \CC(\YY,\XX)$, respectively. Hence, 
	we obtain $\RR \EE \SS = \lim\limits_{k \to \infty} \RR_{n_k} \EE_{n_k} 
	\SS_{n_k} 
	= \II_\XX$. That ends this part of the proof.
	
	To show that $\xi_1(\XX,\YY)$ is a nowhere dense in $\xi(\XX,\YY)$, it is 
	enough to prove $\mathrm{int}_{\xi(\XX,\YY)}\left(\xi_1(\XX,\YY)\right) = 
	\emptyset$. Therefore, for any $\EE \in \xi_1(\XX,\YY)$ we will construct a 
	sequence of 
	channels 
	$(\EE_n)_{n \in \N} \subset \CC(\YY)$ that converges to $\EE$ and for 
	which $\EE_n \in \xi(\XX,\YY)$, and $\EE_n \not\in \xi_1(\XX,\YY)$, for $n 
	\in \N$.
	
	Fix $\EE \in \xi_1(\XX,\YY)$. From Proposition~\ref{prop-realization} there 
	exist $\SS = \kraus{(S)} \in \CC(\XX,\YY)$ and $\RR \in \CC(\YY,\XX)$ such 
	that 
	$\RR\EE\SS = \II_\XX$. From Theorem~\ref{thm-uqec-general} we have
	\begin{equation}
		\EE \SS = \kraus{(A_i)_i}: \quad A_i \not = 0, A_j^\dagger A_i \propto
		\delta_{ij} \1_\XX.
	\end{equation}
	As $\dim(\XX)< \dim(\YY)$, there exists $\proj{y} \in 
	\DD(\YY)$ such that $\bra{y}A_1=0$. Let us define a sequence of channels 
	$\EE_n \in \CC(\YY)$ given by
	\begin{equation}
		\EE_n(Y) = \frac{n}{n+1}\EE(Y) + \frac{\tr(Y)}{n+1}  \proj{y}.
	\end{equation}
	One can note that $\lim\limits_{n \to \infty} \EE_n = \EE.$ 
	We take $\SS_n = \SS$ and $\RR_n = \kraus{(A_1^\dagger) }$ for $n \in \N$ 
	and obtain
	\begin{equation}
		\RR_n \EE_n \SS_n (X) = \frac{n}{n+1} A_1^\dagger\EE\SS(X)A_1 = 
		\frac{n}{n+1} 
		\|A_1 \|_\infty^4 X.
	\end{equation}
	As $A_1 \ne 0$, it follows that $\EE_n \in \xi(\XX,\YY)$. Now, for each $n 
	\in \N$, let $\widetilde \SS_n \in \CC(\XX, \YY)$ and $\widetilde \RR_n \in 
	s\CC(\YY,\XX)$ be arbitrary operations satisfying $0 \neq \widetilde \RR_n 
	\EE_n  \widetilde \SS_n \propto \II_\XX$.  It holds that $\widetilde \RR_n 
	(\proj{y}) = 0.$ Eventually, for any $\proj{\psi} \in \DD(\XX)$ we have
	\begin{equation}
		\tr\left(\widetilde \RR_n\EE_n \widetilde \SS_n(\proj{\psi})\right) = 
		\frac{n}{n+1}\tr\left(\widetilde\RR_n\EE 
		\widetilde\SS_n(\proj{\psi})\right) \le 
		\frac{n}{n+1}.
	\end{equation}
	Hence, we obtain $\EE_n \not\in \xi_1(\XX,\YY)$. 
\end{proof}

\subsection{Proof of Theorem~\ref{thm-rank}}\label{proof-thm-rank}
\begin{non}{Theorem~\ref{thm-rank}}
	Let $\XX$ and $\YY$ be some Euclidean spaces such that $\dim(\YY) \ge 
	\dim(\XX)$. 
	The following relations hold:
	\begin{equation}
		\begin{array}{llll}
			(A) && \max \left\{ \mathrm{rank}(J(\EE)): \EE \in \xi_1(\XX, \YY) 
			\right\} &= 
			\dim(\YY)^2 
			- \dim(\YY) \dim(\XX) + \floor{\frac{\dim(\YY)}{\dim(\XX)}},\\
			(B) && \max \left\{ \mathrm{rank}(J(\EE)): \EE \in \xi(\XX, \YY) 
			\right\} &= 
			\dim(\YY)^2 
			- \dim(\XX)^2 + 1.
		\end{array}
	\end{equation}
\end{non}
\begin{proof}
	Let us define $d = \dim(\XX)$, $s= \dim(\YY)$ and $k = 
	\floor{\frac{s}{d}}$.\\
	
	$(A)$\\
	Take $\EE = \kraus{(E_i)_{i=1}^r} \in \xi_1(\XX, \YY)$, where $r = 
	\mathrm{rank}(J(\EE))$. From Proposition~\ref{prop-realization} there exist 
	$\SS = \kraus{(S)} \in 
	\CC(\XX,\YY)$ and $\RR 
	\in \CC(\YY,\XX)$ 
	such that $\RR \EE \SS = \II_\XX$. According to 
	Theorem~\ref{thm-uqec-general} it 
	holds
	\begin{equation}
		\kraus{(E_i S)_{i=1}^r} = \kraus{(A_i)_{i=1}^{r'}}: \quad A_i 
		\not = 0, 
		A_j^\dagger A_i 
		\propto \delta_{ij} \1_\XX.
	\end{equation}
	If $r'<r$, then let us define $A_i = 0$ for $i = r'+1,\ldots,r$. There 
	exists the Kraus decomposition $\EE = \kraus{(E_i')_{i=1}^r}$ such 
	that $A_i = E_i' S$ for each $i \le r$. For $A_i \neq 
	0$ images of $A_i$ are orthogonal and $\mathrm{rank}(A_i) = d$. Hence, $r' 
	d \le s$ which is equivalent to $r'	\le k$. For $i > r'$ it holds $ 
	\left(\1_\YY \otimes S^\top \right) \ket{E_i'} = 0$. Note that the Kraus 
	operators $E_i'$ are linearly independent and it holds
	\begin{equation}
		\dim(\mathrm{ker}(\1_\YY \otimes S^\top)) = s^2 - \mathrm{rank}(\1_\YY 
		\otimes S^\top) = s^2 - \mathrm{rank}(\1_\YY) \mathrm{rank}(S) = s^2 
		-sd.
	\end{equation}
	Therefore, we get $r-r' = \dim(\mathrm{span}(E'_i, i > r')) \le 
	\dim(\mathrm{ker}(\1_\YY \otimes S^\top)) = s^2 
	- sd$ and eventually $r \le s^2 - sd +k$. To saturate this bound, let us 
	define $\EE \in 
	\CC(\YY)$ given by
	\begin{equation}
		\EE(Y) = \sum_{i=0}^{k-1} E_i Y 
		E_i^\dagger 
		+ \tr \left( (\1_\YY - \Pi) Y \right) \rho_{\YY}^*,
	\end{equation}
	where
	\begin{equation}
		\begin{split}
			E_i &= 
			\frac{1}{\sqrt{k}}\sum_{j=0}^{d-1}
			\ketbra{j+id}{j} \in \MM(\YY), \quad \quad \mbox{for } 
			i=0,\ldots, 
			k -1,\\
			\Pi &= \sum_{j = 0}^{d-1} \proj{j} \in \PP(\YY).
		\end{split}
	\end{equation}
	Note that $\Pi = \sum_{i=0}^{k-1} E_i^\dagger E_i$ and $(\1_\YY \otimes 
	\Pi) 
	\ket{E_i} = \ket{E_i}$. 
	Therefore, we obtain
	\begin{equation}
		\mathrm{rank}(J(\EE)) = \mathrm{rank}\left(\sum_{i=0}^{k-1} \proj{E_i} 
		+ 
		\rho_{\YY}^* \otimes (\1_\YY - \Pi)\right) = 
		\mathrm{rank}\left(\sum_{i=0}^{k-1} \proj{E_i}\right) +
		\mathrm{rank}\left(\rho_{\YY}^* \otimes (\1_\YY - \Pi)\right) =
		s^2 - 
		sd + k.	
	\end{equation}
	Finally, let us define $\SS = \kraus{(S)} \in \CC(\XX, \YY)$, where $S = 
	\sum_{j=0}^{d-1} 
	\ket{j}_{\YY} \bra{j}_\XX$, and $\RR \in s\CC(\YY,\XX)$ given by $\RR(Y) = 
	k S^\dagger \left(	\sum_{i=0}^{k-1}  
	E_i^\dagger Y 
	E_i \right)S.$ We can observe that $\RR \EE \SS = \II_\XX$, so 
	$\EE \in \xi_1(\XX,\YY)$.\\
	
	$(B)$\\
	Take $\EE = \kraus{(E_i)_{i=1}^r} \in \xi(\XX, \YY)$, where $r = 
	\mathrm{rank}(J(\EE))$. According to 
	Theorem~\ref{thm-uqec-general} $(D)$ there exist $S_* \in \MM(\XX,\YY)$ and 
	$R_* \in \MM(\YY, \XX) $ such that $R_* E_i S_* \propto \1_\XX$, and there 
	exists $i_0$ for which it holds $R_* E_{i_0} S_* \neq 0$. We may assume 
	that $\|R_*\|_\infty\le1$ and $\|S_*\|_\infty \le 1$. Hence, according to 
	Theorem~\ref{thm-uqec-general} $(B)$ we get
	\begin{equation}
		\kraus{ \left( \sqrt{R_*^\dagger R_*} E_i S_* \right)_{i=1}^r} = 
		\kraus{(A_i)_{i = 1}^{r'}}: 
		\quad A_i \not = 0, A_j^\dagger A_i 
		\propto \delta_{ij} \1_\XX.
	\end{equation}
	If $r'<r$, then let us define $A_i = 0$ for $i = r'+1,\ldots,r$. There 
	exists the Kraus decomposition $\EE = \kraus{(E_i')_{i = 1}^r}$ such 
	that $A_i = \sqrt{R_*^\dagger R_*} E_i' S_*$ for each $i \le r$. Let $\Pi$ 
	be the projector on the support of $R_*^\dagger R_*$. Observe that 
	$\mathrm{rank}(\Pi) = d$. Then, for each 
	$i \le r$ we have $\Pi A_i = A_i$ and for $i \le r'$ we have 
	$\mathrm{rank}(A_i) = d$. The relation $A_j^\dagger A_i 
	\propto \delta_{ij} \1_\XX$ implies that there exists exactly one $A_i \neq 
	0$, hence, $r' = 1$. For $i > 1$ we have $ \left(\sqrt{R_*^\dagger R_*} 
	\otimes 
	S_*^\top \right) \ket{E_i'} = 0$. Note that the Kraus operators $E_i'$ are 
	linearly independent and it holds
	\begin{equation}
		\dim\left(\mathrm{ker}\left( \sqrt{R_*^\dagger 
			R_*} \otimes S_*^\top \right)\right) = s^2 - 
		\mathrm{rank}\left( \sqrt{R_*^\dagger 
			R_*} \otimes S_*^\top \right)=s^2 - d^2.
	\end{equation}
	Therefore, we obtain $r-1 = \dim(\mathrm{span}(E'_i, i > 1)) \le 
	\dim\left(\mathrm{ker}\left( \sqrt{R_*^\dagger 
		R_*} \otimes S_*^\top \right)\right) = s^2 - d^2$ and eventually $r 
	\le s^2 - d^2 + 1$. To saturate this bound, we define $\EE \in 
	\CC(\YY)$ 
	given by
	\begin{equation}
		\EE(Y) = \frac{\Pi Y \Pi + \tr \left( \Pi Y 
			\right)(\1_\YY-\Pi)}{s- d +1} 
		+ \tr 
		\left( (\1_\YY - \Pi) Y \right) \rho_{\YY}^*,
	\end{equation}
	where $	\Pi = \sum_{j=0}^{d-1} \proj{j} \in \PP(\YY).$ Note, that 
	\begin{equation}
		\begin{split}
			\mathrm{rank}(J(\EE)) &= \mathrm{rank}\left( \frac{1}{s-d+1} 
			(\proj{\Pi} + (\1_\YY - \Pi) \otimes \Pi) + \rho_{\YY}^* \otimes 
			(\1_\YY - \Pi) \right) \\
			&= \mathrm{rank}(\proj{\Pi}) + \mathrm{rank}((\1_\YY - \Pi) \otimes 
			\Pi) + \mathrm{rank}(\rho_{\YY}^* \otimes 
			(\1_\YY - \Pi) ) = s^2 - d^2 + 1.
		\end{split}
	\end{equation}
	Define $\SS = \kraus{(S)} \in \CC(\XX, \YY)$, where $S = \sum_{j=0}^{d-1} 
	\ket{j}_{\YY} \bra{j}_\XX$	and $\RR \in s\CC(\YY, \XX)$ given by $ \RR(Y) 
	= 
	S^\dagger Y S$. We can observe that $\RR \EE \SS = 
	\frac{\II_\XX}{s-d+1}$, so $\EE \in 
	\xi(\XX,\YY)$.
\end{proof}

\subsection{Proof of Lemma~\ref{lem-rank-bounds}}\label{proof-lem-rank-bounds}
\begin{non}{Lemma~\ref{lem-rank-bounds}}
	Let $\XX$ and $\YY$ be Euclidean spaces such that $\dim(\YY) \ge 
	\dim(\XX)$. Then, there exists a Schur channel $\EE \in \CC(\YY)$ such that 
	$\mathrm{rank}(J(\EE)) = \left\lceil \frac{\dim(\YY)}{\dim(\XX) - 1} 
	\right\rceil$ 
	and $\EE \not\in \xi(\XX, 
	\YY)$. 
	Moreover, there exists a Schur channel $\FF \in \CC(\YY)$ such that 
	$\mathrm{rank}(J(\FF)) = \left\lceil\sqrt{\left\lceil 
	\frac{\dim(\YY)}{\dim(\XX) - 1} 
		\right\rceil}\right\rceil$ and $\FF \not\in \xi_1(\XX, 
	\YY)$. Especially, that implies
	\begin{equation}
		\begin{split}
			r(\XX,\YY) &< \frac{\dim(\YY)}{\dim(\XX) - 1},\\
			r_1(\XX,\YY) &< \sqrt{\frac{\dim(\YY)}{\dim(\XX)-1}}.
		\end{split}
	\end{equation}
\end{non}
\begin{proof}
	Let $d = \dim(\XX)$, $s=\dim(\YY)$ and $s = k (d-1) 
	- p$, where $k = \left\lceil 
		\frac{s}{d - 1} \right\rceil$ and $p \in 
		\{0,\ldots,d-2\}$. First, we will show 
		that $r(\XX,\YY) < k$. Define a Schur channel $\EE = 
		\kraus{(E_i)_{i=0}^{k-1}} \in 
		\CC(\YY)$ given by
\begin{equation}\label{eq-proof-proposition}
\begin{split}
E_i &= \sum_{j = 0}^{d - 2} \proj{j+(d-1)i}, \quad i=0,\ldots,k-2,\\
E_{k-1} &= \sum_{j=0}^{d-2-p} \proj{j+(d-1)(k-1)}.
\end{split}
\end{equation}
Observe that $\mathrm{rank}(J(\EE)) = k$. From Theorem~\ref{thm-uqec-general} 
$(D)$ we know that $\EE \in \xi(\XX, \YY)$ if and only if there exist $S_* \in 
\MM(\XX,\YY)$ and $R_* \in \MM(\YY, 
	\XX) $, such that $	R_* E_i S_* \propto \1_\XX$ for all $i$ 
and there exists $i_0$ for which it holds $R_* E_{i_0} S_* \neq 0$. As 
$\mathrm{rank}(E_i) \le d-1$, if we have $	R_* E_i S_* \propto \1_\XX$, 
then $ R_* E_i S_* = 0$ for all $i$. That implies $\EE \not\in \xi(\XX, 
\YY)$.

Now, let us define $l = \left\lceil\sqrt{k}\right\rceil$. We will prove that 
$r_1(\XX,\YY) < l$. Due to the relation $\mathrm{span}_\C \left(\proj{\psi}: 
\proj{\psi} \in \DD(\C^l) \right) = \MM(\C^l)$, we may define unit vectors 
$\ket{\psi_a}$, for $a=0,\ldots,l^2-1$, such that $\mathrm{span}_\C 
\left(\{\proj{\psi_a} \}_a \right) = \MM(\C^l).$ Let us define $F_i \in 
\MM(\YY)$ for 
$i=0,\ldots,l-1$ given by
\begin{equation}
F_i = \sum_{a=0}^{k-1} \braket{\psi_a}{i} E_a,
\end{equation}
for $E_a$ defined in Eq.~\eqref{eq-proof-proposition}. Observe that $F_i$ are 
linearly independent. We have that
\begin{equation}
\sum_{i=0}^{l-1} F_i^\dagger F_i = \sum_{i=0}^{l-1} \sum_{a,b=0}^{k -1} 
\braket{i}{\psi_b} \braket{\psi_a}{i} E_b^\dagger E_a =  \sum_{i=0}^{l-1} 
\sum_{a=0}^{k -1} 
\braket{i}{\psi_a} \braket{\psi_a}{i}  E_a = \sum_{a=0}^{k -1} E_a = \1_\YY.
\end{equation}
Now, we introduce a Schur channel $\FF = \kraus{(F_i)_{i=0}^{l-1}} 
\in \CC(\YY)$. Assume indirectly that $\FF \in \xi_1(\XX,\YY)$. Then, 
according to Proposition~\ref{prop-realization} and 
Theorem~\ref{thm-uqec-general} there exists $S \in \MM(\XX, \YY)$, which 
satisfies $S^\dagger S = \1_\XX$ and $M \in \MM(\C^l)$, such that $
S^\dagger F_j^\dagger F_i S = M_{ji}\1_\XX.$ Therefore, we get
\begin{equation}
\begin{split}
M \otimes \1_\XX = \sum_{j,i} \ketbra{j}{i} \otimes S^\dagger F_j^\dagger F_i 
S 
= (\1 \otimes S^\dagger)\sum_{j,i} \left( \ketbra{j}{i} \otimes 
\sum_{a=0}^{k-1} 
\braket{j}{\psi_a} \braket{\psi_a}{i}  E_a \right) (\1 \otimes 
S)=\sum_{a=0}^{k-1} \proj{\psi_a} \otimes S^\dagger E_a S.
\end{split}
\end{equation}
For each $a=0,\ldots,k-1$ we can use Gram-Schmidt orthogonalization to define 
$X_a$, such that $\tr(X_a \proj{\psi_a}) \neq 0$ and 
$\tr(X_a \proj{\psi_b}) = 0$ whenever $a \neq b$. Hence, we obtain 
$\tr(X_a M) \1_\XX = \tr(X_a \proj{\psi_a}) S^\dagger E_a S$. As 
$\mathrm{rank}(E_a) \le d-1$ we get $S^\dagger E_a S = 0$ for all $a$. It 
implies that $0 = \sum_{a=0}^{k-1} S^\dagger E_a S = S^\dagger S = \1_\XX,$
which gives the contradiction. That means $\FF \not\in \xi_1(\XX,\YY)$. It is 
enough to observe that $r_1(\XX,\YY) < \mathrm{rank}(J(\FF)) = l$.
\end{proof}

\subsection{Proof of Proposition~\ref{prop-diag}}\label{proof-prop-diag}
\begin{non}{Proposition~\ref{prop-diag}}
	Let $\XX$ and $\YY$ be Euclidean spaces and $\dim(\XX) \le \dim(\YY)$. For 
	any Schur channels $\EE \in \CC(\YY)$, such that $\mathrm{rank}(J(\EE)) < 
	\frac{\dim(\YY)}{\dim(\XX) - 1}$, it 
	holds $\EE \in \xi(\XX, \YY)$.
\end{non}
\begin{proof}
	Let $\Delta \in \CC(\YY)$ be 
	the maximally dephasing channel, 
	that 
	is $\Delta(Y) = \sum_i \proj{i}Y\proj{i}$.
Let us fix $r$ such that $r < \frac{\dim(\YY)}{\dim(\XX) - 1}$. We will show 
that if $\EE 
= \kraus{(E_i)} \in \CC(\YY)$, such that $E_i = \Delta(E_i)$ for 
	each $i$ and $\mathrm{rank}(J(\EE)) \le r $, then $\EE \in \xi(\XX, \YY)$.
 Observe that the thesis is true in two particular situations:
\begin{itemize}
\item For $\dim(\XX) = 1$ and $\dim(\YY) \ge 1$.
\item For $r = 1$ and $\dim(\YY) \ge \dim(\XX)$.
\end{itemize}

Let us take $\EE = \kraus{(E_i)} \in \CC(\YY)$, such that
	$\mathrm{rank}(J(\EE)) \le r$ and $E_i = \Delta(E_i)$ for each $i$.
We may assume that $\mathrm{rank}(J(\EE)) = r$. Therefore, there exists 
a projector $\Pi \in \PP(\YY)$, such that $\mathrm{rank}(\Pi) = r$ and 
$\Delta(\Pi) = \Pi$, and for which the operators $\Pi E_i \Pi$ are linearly 
independent. Let us consider the operation $\FF = 
\kraus{(\Pi^\perp E_i \Pi^\perp)_{i=1}^r}$. Define $\XX' = \C^{\dim(\XX)-1}$. 
By the recurrence and Theorem~\ref{thm-uqec-general} for operation $\FF$ there 
exist $S_*' \in 
\MM\left(\XX', \YY\right)$ and $R_*' \in \MM\left(\YY, \XX'\right)$, such that 
$R_*' \Pi^\perp 
E_i \Pi^\perp S_*' = c_i \1_{\XX'}$ and $c_{i_0} \neq 0 $ for some $i_0$. Let 
$\ket{s} \in \CC(\YY)$ be the flat superposition. As $\Pi E_i \Pi$ are diagonal 
and linearly independent, there exists the vector $\ket{r}$ such that $\bra{r} 
\Pi E_i \Pi \ket{s} = c_i$. We may define an encoding operator $S_*$ by adding 
a column $\Pi \ket{s}$ to the operator $\Pi^\perp 
S_*'$. In the same manner, we may construct $R_*$ by adding a row $\bra{r} \Pi$ 
to the operator 
$R_*' \Pi^\perp$. It is easy to check that $S_*, R_*$ satisfy Theorem 
\ref{thm-uqec-general} $(D)$, so $\EE \in \xi(\XX, \YY)$.
\end{proof}

\subsection{Proof of Proposition~\ref{prop-bi-lin}}\label{proof-prop-bi-lin}
\begin{non}{Proposition~\ref{prop-bi-lin}}
	Let $\XX$ and $\YY$ be some Euclidean spaces and $\dim(\XX) \le \dim(\YY)$.
	\begin{enumerate}[(A)]
		\item If $\EE \in \CC(\YY)$ is a noise channel such that 
		$\mathrm{rank}(\EE(\1_\YY)) 
		= \dim(\XX)$ and $\mathrm{rank}(J(\EE)) < 
		\frac{\dim(\YY)\dim(\XX)}{\dim(\XX)^2 - 1}$, then 
		$\EE \in \xi(\XX, \YY)$.
		\item There exists a noise channel $\EE \in 
		\CC(\YY)$ such that $\mathrm{rank}(\EE(\1_\YY)) = \dim(\XX)$ and 
		$\mathrm{rank}(J(\EE)) \ge 
		\frac{\dim(\YY)\dim(\XX)}{\dim(\XX)^2 - 1}$, for which 
		we have $\EE \not\in \xi(\XX, \YY)$.
	\end{enumerate} 
\end{non}
\begin{proof}
$(A)$\\
Let us take $\EE = \kraus{(E_i)_{i=1}^r} \in 
\CC(\YY)$, where $r = \mathrm{rank}(J(\EE))$. Assume that 
$\mathrm{rank}(\EE(\1_\YY)) = \dim(\XX)$ and 
$r < \frac{\dim(\YY)\dim(\XX)}{\dim(\XX)^2 - 1}$. We can 
consider the equivalent form of the problem by taking the associated channel 
$\FF 
= \kraus{(F_i)_{i=1}^r} \in \CC(\YY, \XX)$. Therefore, $\EE \in \xi(\XX, \YY)$ 
if and only 
if there exists $S \in \MM(\XX, \YY)$ such that $F_i S = c_i \1_\XX$ and 
$c_{i_0} \neq 0$ for some $i_0$. Let $F =  \sum_{i=1}^r \ket{i} \otimes 
F_i \in \MM(\YY, \C^{r} \otimes \XX)$ and $\ket{c} = \sum_{i=1}^r 
c_i\ket{i}$. Hence, $\EE \in \xi(\XX, \YY)$ if and only if it holds 
$FS = \ket{c} \otimes \1_\XX \neq 0,$ which is equivalent to
\begin{equation}
	(F \otimes \1_\XX) \ket{S} = \ket{c} \otimes \ket{\1_\XX} \neq 0.
\end{equation}
As $\mathrm{rank}(F) = \dim(\YY)$, the subspace $\{(F \otimes \1_\XX) \ket{S}: 
\ket{S}\}$ has the dimension $\dim(\YY)\dim(\XX)$. On the other hand, the 
subspace 
$\{\ket{c} \otimes \ket{\1_\XX}: \ket{c}\}$ has the dimension $r$. Therefore, 
as 
long as
\begin{equation}
	\dim(\YY) \dim(\XX) + r > r	\dim(\XX)^2
\end{equation}
there exists non-zero solution $S \in \MM(\XX,\YY)$ and $\ket{c} \in \C^r$, 
such that $(F \otimes \1_\XX) \ket{S} = \ket{c} \otimes \ket{\1_\XX}$. From the 
inequality $r < \frac{\dim(\YY)\dim(\XX)}{\dim(\XX)^2 - 1}$ we obtain $\EE \in 
\xi(\XX, \YY)$.\\

$(B)$\\
In the part $(A)$ of the proof we showed that
\begin{equation}
\EE \not\in \xi(\XX, \YY) \iff \left((F \otimes \1_\XX) \ket{S} = \ket{c} 
\otimes \ket{\1_\XX} \implies \ket{S} = 0\right).
\end{equation}
Therefore, in this proof, we will construct appropriate operator $F$, such that 
the latter condition holds. It would imply that the associated channel $\EE$ is 
not 
probabilistically correctable. Formally, the operator $F$ should be an isometry 
operator, but by Lemma~\ref{lem-weird-channels}, it is enough 
to define $F$ such that $\mathrm{rank}(F) = \dim(\YY)$.

Let $d = \dim(\XX), s = \dim(\YY)$ and fix $r \in \N$, such that $r \ge 
\frac{sd}{d^2 - 1}$. We start with the case $s = kd$ 
for $k \in \N$. Consider the decomposition $F = 
\sum_{i=0}^{r-1} \ket{i} \otimes F_i$, where $F_i \in \MM(\YY, \XX)$. For $i = 
0,\ldots,k-1$ we define
\begin{equation}
	F_i = \bra{i} \otimes \1_\XX.
\end{equation}
Let $\left\{\1_\XX, (M_j)_{j=0}^{d^2-2}\right\} \subset \MM(\XX)$ be a basis 
of  $\MM(\XX)$. For each $i = k, \ldots, r-1$ we define
\begin{equation}\label{eq-proof-bi-lin}
	F_i = \sum_{j=0}^{d^2-2} \delta(j + (i-k)(d^2-1) < k)\bra{j + (i-k)(d^2-1) 
	} 	\otimes M_j.
\end{equation}
Observe, that $\mathrm{rank}(F) = s$. Let us take $S$ which satisfies $F_iS 
\propto \1_\XX$ for each $i$. Basing on the equations with indices $i = 0, 
\ldots, k-1$ we get $S = \ket{c} \otimes \1_\XX$ for some $\ket{c} = 
\sum_{j=0}^{k-1} c_j \ket{j}$. Note, that if for any $i 
= k, \ldots, r-1$ it holds $F_i S \propto \1_\XX$, then 
$c_j = 0$ for each $j = (i-k)(d^2-1),\ldots, d^2 - 2 + (i-k)(d^2-1)$.
From the assumption $r \ge \frac{sd}{d^2 - 1}$ we have $(r-k)(d^2-1) \ge k$, 
hence, all entries $c_j$ are zeroed. It implies $S=0$.

The case $s = kd + l$ for $ l =1,\ldots, d-1$ is more technically engaging than 
the previous case but it is based on the same idea. It will be only 
briefly discussed. For $i=0,\ldots,k-1$ we can define $F_i$ similarly as in the 
previous case, that is $F_i \sim \bra{i} \otimes \1_\XX$. The operator $F_k$ 
has 
a special form, $F_k \sim (\bra{k} \otimes \sum_{j=0}^{l-1} \proj{j}) + N$, 
where the image of $N$ is contained in $\mathrm{span}(\ket{j}: j \ge l).$ Here, 
the operator $S$ which satisfy $F_iS\propto \1_\XX$ has the form $S \sim  
\ket{c} 
\otimes 
\1_\XX$ for some $\ket{c} = \sum_{j=0}^{k} c_j \ket{j}$. We can choose $N$ such 
that $d(d-l)$ entries $c_j$ will be zeroed if $F_k S \propto \1_\XX$. Finally, 
operators $F_i$ for $i=k+1,\ldots,r-1$ has the analogous form as 
Eq.~\eqref{eq-proof-bi-lin} -- each nullify $(d^2-1)$ entries. In total, the 
number of entries $c_j$ which can be zeroed is not less than $k+1$. Indeed, it 
holds
\begin{equation}
d(d-l) + (r-k-1)(d^2-1) \ge k+1.
\end{equation}
Therefore, $S = 0$, which ends the proof.
\end{proof}

\subsection{Proof of Theorem~\ref{thm-best-bounds}}\label{proof-thm-best-bounds}
\begin{non}{Theorem~\ref{thm-best-bounds}}
	Let $\XX$ and $\YY$ be some Euclidean spaces such that $\dim(\YY) \ge 
	\dim(\XX)$. Then, we have
	\begin{equation}
		\begin{split}
			\left\lfloor \sqrt[4]{\frac{\dim(\YY)}{\dim(\XX)}} 
			\right\rfloor \le r_1(\XX,\YY) \le 		
			\left\lceil\sqrt{\frac{\dim(\YY)}{\dim(\XX)-1}}\right\rceil-1\le 
			r(\XX,\YY)
			< 
			\frac{\dim(\YY)\dim(\XX)}{\dim(\XX)^2 - 1}.
		\end{split}
	\end{equation}
\end{non}
\begin{proof}
	The inequality $ \left\lfloor \sqrt[4]{\frac{\dim(\YY)}{\dim(\XX)}} 
	\right\rfloor \le r_1(\XX,\YY)  $ follows directly from 
	\cite{knill2000theory}. The 
	inequalities $r_1(\XX,\YY) \le 		
	\left\lceil\sqrt{\frac{\dim(\YY)}{\dim(\XX)-1}}\right\rceil-1$ and 
	$r(\XX,\YY) < \frac{\dim(\YY)\dim(\XX)}{\dim(\XX)^2 - 1}$ follow from 
	Lemma~\ref{lem-rank-bounds} and 
	Proposition~\ref{prop-bi-lin}, respectively. 

Now, we will show that 
$\left\lceil\sqrt{\frac{\dim(\YY)}{\dim(\XX)-1}}\right\rceil-1\le r(\XX,\YY)$. 
Take arbitrary $\EE \in \CC(\YY)$ such that 
$\mathrm{rank}(J(\EE))^2(\dim(\XX)-1) < \dim(\YY)$. We will show $\EE \in 
\xi(\XX, \YY)$. Let us denote $r = \mathrm{rank}(J(\EE))$. Consider a Kraus 
representation $\EE = 
\kraus{(E_j)_{j=1}^r}$ and define the following set
\begin{equation}
A = \left\{ s \in \N: \,\, \exists_{\Pi_s \in 
\PP(\YY)} \,\, \Pi_s = \Pi_s^2, \mathrm{rank}(\Pi_s)=s, 
\mathrm{rank}(\EE^\dagger(\Pi_s))=\dim(\YY)  \right\}.
\end{equation}
Observe that $\dim(\YY) \in A$ and if some $s \in A$, then $sr \ge \dim(\YY)$. 
Define $s_0 = \min(A)$ and consider a corresponding projector $\Pi_{s_0} \in 
\PP(\YY)$, such that $\mathrm{rank}(\Pi_{s_0})=s_0$ and 
$\mathrm{rank}(\EE^\dagger(\Pi_{s_0}))=\dim(\YY)$. Let us take a orthonormal 
collection of vectors $\ket{v_i}$, where $i=1,\ldots,s_0$ for which we have 
$\Pi_{s_0} = \sum_{i=1}^{s_0} \proj{v_i}$. From the assumption $s_0 = \min(A)$, 
for any $i$ we get $\mathrm{rank}(\EE^\dagger(\Pi_{s_0}-\proj{v_i})) < 
\dim(\YY)$. Therefore, we may define vectors $0 \neq \ket{w_i} \in \YY$ such 
that 
$\EE^\dagger(\Pi_{s_0}-\proj{v_i})\ket{w_i} = 0$. Observe that for each $i$, 
there 
exists $E_j$ for which $\bra{v_i}E_j\ket{w_i} \neq 0$. Let us define $F_j 
= [\bra{v_a}E_j\ket{w_b}]_{a,b=1,\ldots,s_0}$ for 
$j=1,\ldots,r$. Note, that $F_j$ are diagonal operators and it holds $\sum_j 
F_j^\dagger F_j > 0$. From $r^2(\dim(\XX)-1) < \dim(\YY)$ and 
$s_0r \ge \dim(\YY)$ we have
\begin{equation}
r(\dim(\XX)-1) < \frac{\dim(\YY)}{r}\le s_0.
\end{equation}
Utilizing Proposition~\ref{prop-diag}, Lemma~\ref{lem-weird-channels} and 
Theorem~\ref{thm-uqec-general} there exist $S_* \in \MM(\XX,\C^{s_0})$ and $R_* 
\in \MM(\C^{s_0}, \XX) $, such that $R_* F_j S_* \propto \1_\XX$ and there 
exists $j_0$, for which it holds $R_* F_{j_0} S_* \neq 0$. That implies $\EE 
\in \xi(\XX, \YY)$.
\end{proof}

\subsection{Proof of 
Proposition~\ref{prop-qubit-rank}}\label{proof-prop-qubit-rank}
\begin{non}{Proposition~\ref{prop-qubit-rank}}
	For all $\EE \in \CC(\C^4)$ satisfying $\mathrm{rank}(J(\EE)) \le 2$ we 
	have $\EE \in \xi(\C^2, \C^4)$.
\end{non}
\begin{proof}
Let us fix $\EE = \kraus{(E_0, E_1)} \in \CC(\C^4)$. From the equality 
$E_0^\dagger E_0 + E_1^\dagger E_1 = \1_{\C^4}$ we may write the singular 
decomposition of $E_0, E_1$ in the form: $E_0 = U_0 D_0 V$ and $E_1 = U_1 D_1 
V$, where $U_0, U_1, V \in \UU(\C^4)$ and $D_0, D_1 \in 
\PP(\C^4)$ are 
diagonal operators satisfying $D_0^2+D_1^2=\1_{\C^4}$. In order to show that 
$\EE 
\in \xi(\C^2, \C^4)$ we will use Theorem~\ref{thm-uqec-general} $(D)$. We will 
prove that there exist $S_* \in \MM(\C^2, \C^4)$ and $R_* \in \MM(\C^4, \C^2)$, 
such that $R_* E_0 S_* = c_0 \1_{\C^2}$, $R_* E_1 S_* = c_1 \1_{\C^2}$ for some 
$c_0, c_1 
\in \C$ satisfying $(c_0, c_1) \neq (0, 0)$. Let us introduce the 
following notation
\begin{equation}
	\begin{split}
		\ket{x_i} &= (D_0)_{ii} U_0\ket{i}, \quad i=0,\ldots,3,\\
		\ket{y_i} &= (D_1)_{ii} U_1\ket{i}, \quad i=0,\ldots,3.\\
	\end{split}
\end{equation}
Note 
that vectors $\ket{x_i}$ are orthogonal (the same holds 
for $\ket{y_i}$) and for each $i=0,\ldots,3$ we have $\ket{x_i} \neq 0$ or 
$\ket{y_i} \neq 0$. We may write $S_*$ and $R_*$ in the following form
\begin{equation}
\begin{split}
S_* &= V^\dagger (\ketbra{S_0}{0} + \ketbra{S_1}{1}),\\
R_* &= 	\ketbra{0}{R_0} + \ketbra{1}{R_1},
\end{split}	
\end{equation}
for some vectors $\ket{S_0}, \ket{S_1}, \ket{R_0}, \ket{R_1} \in \C^4$. The 
rest of the prove will be divided into three cases.  

In the first case, we assume there exists $i_3 \in \{0,\ldots,3\}$ such 
that vectors $\ket{x_{i_3}}, \ket{y_{i_3}}$ are linearly independent.
Define indices $i_0, i_1, i_2 \in \{0,\ldots,3\}$ as the remaining labels, 
such that $\{i_0,\ldots,i_3\}$ covers the whole set $\{0,\ldots,3\}$. Let 
$(a_0, a_1, a_2)^\top \in \C^3$ be a normalized vector orthogonal to vectors 
$(\braket{y_{i_3}}{x_{i_0}}, \braket{y_{i_3}}{x_{i_1}}, 
\braket{y_{i_3}}{x_{i_2}})^\dagger$ and $(\braket{x_{i_3}}{y_{i_0}}, 
\braket{x_{i_3}}{y_{i_1}}, \braket{x_{i_3}}{y_{i_2}})^\dagger$. Take 
$\ket{S_1} = 
\ket{i_3}$ and $\ket{S_0} = a_0\ket{i_0} + a_1 \ket{i_1} + a_2 \ket{i_2}$.
Define $\ket{ x} = a_0 \ket{x_{i_0}} + a_1 
\ket{x_{i_1}} + a_2 \ket{x_{i_2}}$ and $\ket{ y} = a_0 
\ket{y_{i_0}} + a_1 
\ket{y_{i_1}} + a_2 
\ket{y_{i_2}}$. We obtain
\begin{equation}
	\begin{split}
		E_0 S_* &= \ketbra{x}{0} + 
		\ketbra{x_{i_3}}{1}, \\
		E_1 S_* &= \ketbra{y}{0} + 
		\ketbra{y_{i_3}}{1}.
	\end{split} 
\end{equation} 
It is not hard to observe that $\ket{ x} \neq 0$ or $\ket{y} \neq 0$. If $\ket{ 
x} \neq 0$, take $\ket{R_0} = 
\ket{ x}$, else take $\ket{R_0} = \ket{y} $.  As the vectors $\ket{x_{i_3}}, 
\ket{y_{i_3}}$ are linearly independent we may define
\begin{equation}
	(b_0, b_1)^\top \coloneqq \left[\begin{array}{cc}
		\braket{x_{i_3}}{x_{i_3}} & \braket{y_{i_3}}{x_{i_3}} \\
		\braket{x_{i_3}}{y_{i_3}} & \braket{y_{i_3}}{y_{i_3}}
	\end{array}\right]^{-1} (\braket{R_0}{ x}, \braket{R_0}{ 
		y})^\top.
\end{equation}
Take $\ket{R_1} = \bar b_0 \ket{x_{i_3}} + \bar b_1 \ket{y_{i_3}}$.
Eventually, we may check that it holds
\begin{equation}
	\begin{split}
		R_* E_0 S_* &= (\ketbra{0}{R_0} + \ketbra{1}{R_1})(\ketbra{ 
			x}{0} + 
		\ketbra{x_{i_3}}{1}) =  \braket{R_0}{ x} \1_{\C^2}, \\
		R_* E_1 S_* &= (\ketbra{0}{R_0} + \ketbra{1}{R_1})(\ketbra{ 
			y}{0} + 
		\ketbra{y_{i_3}}{1}) = \braket{R_0}{ 
			y} \1_{\C^2}.
	\end{split}
\end{equation}

In the second case, we assume that there exists a pair of vectors 
$\ket{y_{i_0}}, \ket{y_{i_1}}$ for $i_0 \neq i_1$, such that
$\ket{y_{i_0}} = \ket{y_{i_1}} = 
0$. Then, the vectors $\ket{x_{i_0}}, \ket{x_{i_1}}$ are orthonormal. We 
simply define $\ket{S_0} = \ket{i_0}, \ket{S_1} = \ket{i_1}$, $\ket{R_0} = 
\ket{x_{i_0}}$ and $\ket{R_1} = \ket{x_{i_1}}$. 
One can calculate that $R_* E_0  S_* = \1_{\C^2}$ and $R_* E_1 S_* = 0$.

In the third case, for all $i \in \{0,\ldots,3\}$ vectors 
$\ket{x_i}, 
\ket{y_i}$ are not linearly independent and there is at most one zero 
vector $\ket{y_{i_3}}$ for some $i_3 \in \{0,\ldots,3\}$.
Define indices 
$i_0, i_1, i_2 \in \{0,\ldots,3\}$ as the remaining labels, such that 
$\{i_0,\ldots,i_3\}$ covers the whole set $\{0,\ldots,3\}$. Define the matrix
\begin{equation}
	M = \left[\begin{array}{ccc}
		\braket{y_{i_0}}{x_{i_0}} & \braket{y_{i_1}}{x_{i_1}} & 
		\braket{y_{i_2}}{x_{i_2}}\\
		\braket{y_{i_0}}{y_{i_0}} & \braket{y_{i_1}}{y_{i_1}} & 
		\braket{y_{i_2}}{y_{i_2}}
	\end{array}\right].
\end{equation}

In the first sub-case we assume that $\mathrm{rank}(M) = 1$. Define $b = 
\frac{\braket{y_{i_1}}{y_{i_1}}}{\braket{y_{i_0}}{y_{i_0}}}$. We can take
$\ket{S_0}=\ket{i_0}$, $\ket{S_1} = \ket{i_1}$, $\ket{R_0} = 
\ket{y_{i_0}}$ and $\ket{R_1} = \frac{1}{b}\ket{y_{i_1}}$. 
One can calculate that $R_* E_0 S_* = 
\braket{y_{i_0}}{x_{i_0}} \1_{\C^2}$ and $R_* E_1 S_* = 
\braket{y_{i_0}}{y_{i_0}} 
\1_{\C^2}$. 

In the second sub-case we assume that $\mathrm{rank}(M) = 2$. Define indices 
$j_1, j_2 \in \{0,1,2\}$, such that 
\begin{equation}
	\mathrm{rank}\left(\left[\begin{array}{cc}
		M_{0,j_1} & M_{0,j_2}\\
		M_{1,j_1} & M_{1,j_2}
	\end{array}\right]\right)=2.
\end{equation}
Define $j_0 \in \{0,1,2\}$ as the remaining label, such that $\{j_0, 
j_1, j_2\}$ covers the whole set $\{0,1,2\}$. Take $\ket{S_0} = \ket{i_{j_0}}$, 
$\ket{R_0} = \ket{y_{i_{j_0}}}$ and define
\begin{equation}
	(b_1, b_2)^\top \coloneqq \left[\begin{array}{cc}
		\braket{y_{i_{j_1}}}{x_{i_{j_1}}} & 
		\braket{y_{i_{j_2}}}{x_{i_{j_2}}}\\
		\braket{y_{i_{j_1}}}{y_{i_{j_1}}} & 
		\braket{y_{i_{j_2}}}{y_{i_{j_2}}}
	\end{array}\right]^{-1} (\braket{y_{i_{j_0}}}{x_{i_{j_0}}}, 
	\braket{y_{i_{j_0}}}{y_{i_{j_0}}})^\top. 
\end{equation}
We may take $\ket{S_1} = \ket{i_{j_1}} + \ket{i_{j_2}}$ and 
$\ket{R_1} = \bar b_1 \ket{y_{i_{j_1}}} + \bar b_2 \ket{y_{i_{j_2}}}$. 
Direct calculations reveal that $R_* E_0 S_* = 
\braket{y_{i_{j_0}}}{x_{i_{j_0}}} \1_{\C^2}$ and $R_* E_1 S_* = 
\braket{y_{i_{j_0}}}{y_{i_{j_0}}} \1_{\C^2}$.
\end{proof}

\subsection{Proof of Theorem~\ref{thm-random}}\label{proof-thm-random}
\begin{non}{Theorem~\ref{thm-random}}
	Let $\EE_r \in \CC(\YY)$ be a random quantum channel 
	defined according to Eq.~\eqref{eq-random}. Then, the following two 
	implications hold
	\begin{equation}
		\begin{split}
			r < \frac{\dim(\XX) \dim(\YY)}{\dim(\XX)^2 - 1} &\implies \PP\left( 
			\EE_{r} \in 
			\xi(\XX, \YY)\right) = 1,\\
			\PP\left( \EE_{r} \in \xi_1(\XX, \YY)\right) = 1 &\implies r < 
			\sqrt{\frac{\dim(\YY)}{\dim(\XX)-1}}.
		\end{split}
	\end{equation}
\end{non}
\begin{proof}
	For $r \in \N$ satisfying $r < \frac{\dim(\XX) \dim(\YY)}{\dim(\XX)^2 - 
	1}$, let 
	$(G_i)_{i=1}^r \subset \MM(\YY)$ be a tuple of random and independent 
	Ginibre 
	matrices and $Q 
	= \sum_{i=1}^r G_i^\dagger G_i$. Define the projector $\Pi  = 
	\sum_{i=0}^{\dim(\XX) - 1} \proj{i}$ and consider the set
	\begin{equation}
		A = \left\{ (G_i)_{i=1}^r: \quad \mathrm{rank}(Q) = \dim(\YY), 
		\mathrm{rank}\left(\sum_{i=1}^r 
		G_i^\dagger \Pi G_i\right) = \min\{r\dim(\XX), \dim(\YY)\}  \right\}.
	\end{equation}
	One can observe that $\PP((G_i)_{i=1}^r \in A) = 1$. Let $\EE_r \in 
	\CC(\YY)$ be a random channel defined according to Eq.~\eqref{eq-random} 
	for $(G_i)_{i=1}^r \in A$, that is $\EE_r(Y) = \sum_{i=1}^r 
	\left(G_iQ^{-1/2}\right) Y 	\left(G_iQ^{-1/2}\right)^\dagger.$
	Define $S = Q^{1/2} \tilde{S}$ for $\tilde{S} \in \MM(\XX, \YY)$ and $R = 
	\tilde{R} \Pi $ for $\tilde{R} \in \MM(\YY, \XX)$. We obtain
	$R G_iQ^{-1/2} S = \tilde{R} \Pi G_i \tilde{S}.$ Utilizing 
	Lemma~\ref{lem-weird-channels}, Proposition \ref{prop-bi-lin} and 
	Theorem~\ref{thm-uqec-general} $(D)$ for 
	$\tilde{\EE} = \kraus{(\Pi G_i)_{i=1}^r} \in s\CC(\YY)$, there exist 
	$\tilde S, 
	\tilde R$, such that $\tilde{R} \Pi G_i \tilde{S} 
	\propto \1_\XX$ and $\tilde{R} \Pi G_{i_0} \tilde{S} \neq 0$ for some 
	$i_0$. Eventually, $\EE_r \in \xi(\XX, 
	\YY)$.
	
	Now, for a given $r \in \N$ let us define $B = \{\EE_{r}: \,\, \EE_{r} \in 
	\xi_1(\XX, \YY)\}$. From the assumption $\PP(B) = 1$, we obtain that $B$ is 
	a dense subset of $\{\EE \in 
	\CC(\YY): \,\, \mathrm{rank}(J(\EE)) \le r\}$. Imitating the proof of 
	Theorem~\ref{thm-nowhere}, we get that if $\EE \in \CC(\YY)$ and 
	$\mathrm{rank}(J(\EE)) \le r$, then $\EE \in \xi_1(\XX, \YY)$. That 
	implies $r \le r_1(\XX, \YY)$. By using 
	Lemma~\ref{lem-rank-bounds} we 
	obtain the desired inequality.
\end{proof}

\subsection{Proof of 
Proposition~\ref{prop-noise-model}}\label{proof-prop-noise-model}
\begin{non}{Proposition~\ref{prop-noise-model}}
	Let $\Upsilon \subset \CC(\YY)$ be a nonempty and convex family of noise 
	channels. Define $\mu$ to be a probability measure defined on $\Upsilon$ 
	and 
	assume that the support of $\mu$ is equal to $\Upsilon$. Let $\bar \EE = 
	\int_{\Upsilon} \EE \mu(d\EE) \in \CC(\YY)$ and fix 
	$(\SS, \RR) \in s\CC(\XX,\YY) \times s\CC(\YY,\XX)$. The following 
	conditions 
	are equivalent:
	\begin{enumerate}[(A)]
		\item For each $\EE \in \Upsilon$ there exists $p_\EE \ge 0$ such that 
		$\RR 
		\EE \SS = p_\EE \II_\XX$ and $	\int_{\Upsilon} p_\EE \mu(d\EE) > 0.$
		\item It holds that $0 \neq \RR \bar \EE \SS \propto \II_\XX$.
	\end{enumerate}
\end{non}
\begin{proof}
	$(B) \implies (A)$\\
	Let us assume that $\RR \bar \EE \SS = p \II_\XX$ for $p > 0$. There exists 
	a $k$ dimensional affine subspace $\LL$ such that $\Upsilon \subset 
	\LL$ and 
	$\mathrm{int}_\LL(\Upsilon) \neq \emptyset$. Take 
	arbitrary $\EE_0 \in \Upsilon$. There exist $\EE_1,\ldots,\EE_k \in 
	\Upsilon$ 
	such 
	that convex hull of points $\EE_0,\ldots,\EE_k$ is a $k$-dimensional 
	simplex 
	$\Delta_k$. For any state $\proj{\psi} \in \DD(\XX)$ it holds
	\begin{equation}
		p\proj{\psi} = \RR \bar \EE \SS (\proj{\psi}) = \int_{\Upsilon} \RR 
		\EE \SS 
		(\proj{\psi}) \mu(d\EE) \ge \int_{\Delta_k} \RR \EE \SS 
		(\proj{\psi}) 
		\mu(d\EE) .
	\end{equation}
	Inside $\Delta_k$ each $\EE$ can be uniquely represented as 
	$\sum_{i=0}^k 
	q_i(\EE) \EE_i$, where $(q_i(\EE))_{i=0}^k$ is a probability vector which 
	depends on 
	$\EE$. Hence,
	\begin{equation}
		p\proj{\psi} \ge \sum_{i=0}^k \int_{\Delta_k} q_i(\EE) \RR \EE_i 
		\SS 
		(\proj{\psi}) \mu(d\EE) \ge \left(\int_{\Delta_k} 
		q_0(\EE)\mu(d\EE)\right) 
		\RR \EE_0 
		\SS (\proj{\psi}).
	\end{equation}
	There exists $\epsilon$ small ball $B_\epsilon$ around $\EE_0$, such that 
	for each channel $\EE \in B_\epsilon \cap \Delta_k$ it holds $q_0(\EE) 
	\ge \frac{1}{2}$. Hence, $	\int_{\Delta_k} q_0(\EE)\mu(d\EE) \ge \frac12 
	\mu\left(B_\epsilon \cap \Delta_k\right) > 0,$ 	where in the last 
	inequality we used the fact that the support of $\mu$ is 
	equal to $\Upsilon$. Therefore, it holds that for any $\proj{\psi} \in 
	\DD(\XX)$ we have $\RR \EE_0 
	\SS (\proj{\psi}) \propto \proj{\psi}$ and from 
	Lemma~\ref{lem-constant-p} there exists $p_{\EE_0} \ge 0$ such that 
	$\RR \EE_0 \SS = p_{\EE_0} \II_\XX$. The instant relation $	\int_{\Upsilon} 
	p_\EE \mu(d\EE) = p > 0$ ends the proof.  
\end{proof}
\end{document}